\newtheorem{theorem}{Theorem}[section]
\newtheorem{lemma}[theorem]{Lemma}
\newtheorem{proposition}[theorem]{Proposition}
\newtheorem{definition}[theorem]{Definition}
\newtheorem{corollary}[theorem]{Corollary}
\newtheorem{remark}[theorem]{Remark}
\newtheorem{example}[theorem]{Example}
\def\C{ {\mathbb C} }
\def\R{ {\mathbb R} }
\def\Z{ {\mathbb Z} }
\newcommand{\beq}{\begin {equation}}
\newcommand{\eeq}{\end{equation}}
\newcommand{\brem}{\begin {remark}}
\newcommand{\erem}{\end{remark}}
\numberwithin{equation}{section}
\begin{document}

\title{Phase Retrieval of Real-Valued  Signals  in a   Shift-Invariant Space}
\author{Yang Chen,~ Cheng Cheng, ~ Qiyu Sun ~and~ Haichao Wang
\thanks{The project is partially supported by NSF (DMS-1412413).}
\thanks{Chen is with the Department of Mathematics,
Hunan Normal  University,
Changsha 410081, Hunan, China (yang\_chenww123@163.com).}
\thanks{Cheng and Sun are with the Department of Mathematics,
University of Central Florida,
Orlando 32816, Florida, USA (cheng.cheng@knights.ucf.edu; qiyu.sun@ucf.edu). }
\thanks{Wang was with  the Department of Mathematics, University of California at Davis (wanghaichao0501@gmail.com).
}}

\maketitle

\begin{abstract}
Phase retrieval arises in various fields of science and engineering
and it is well studied in a finite-dimensional setting.
%and it plays important roles in image/speech/signal processing
In this paper, we consider an infinite-dimensional phase retrieval problem to reconstruct
real-valued signals
 living in a shift-invariant space from its phaseless samples taken either on the whole line or on a set with finite sampling rate.
We find the equivalence between nonseparability of signals in a linear space and its phase retrievability  with phaseless samples
taken  on the whole line.
For  a spline signal of order $N$, we show that it can be well approximated,
up to a sign, from its noisy phaseless samples taken on a  set
with sampling rate $2N-1$. We propose an algorithm to  reconstruct nonseparable signals in a shift-invariant space generated by a compactly supported continuous function $\phi$. The proposed algorithm is robust against bounded sampling noise and it could be implemented in a distributed manner.% 

\end{abstract}

\IEEEpeerreviewmaketitle

\section{Introduction}

Phase retrieval plays important roles in signal/image/speech processing
 (\cite{F78}--\cite{jaganathany15}). It  reconstructs
a signal of interest from its magnitude measurements.
  The underlying   recovery problem is possible to be solved only if we
 have additional information about the signal.

The phase retrieval problem of finite-dimensional signals has received considerable attention in recent years
(\cite{BCE06}--\cite{W14}).
%(\cite{BCE06, BBCE09,  CSV12, candes13, W14}).
 In the finite-dimensional setting,
 a fundamental problem in phase retrieval is whether and how
  a vector  ${\bf x}\in  \R^d$ (or $\C^d$)
  can be reconstructed from its magnitude measurements  ${\bf y}=|{\bf A} {\bf x}|$, where
 ${\bf A}$ is a measurement  matrix. The phase  retrievability has been
fully characterized via the measurement matrix ${\bf A}$ (\cite{BCE06,  W14, Bandeira14}),
and many  algorithms have been proposed to reconstruct the  vector ${\bf x}$ from
its magnitude measurements  ${\bf y}$ (\cite{F78,  schechtman15,  CSV12, candes13, candes15,  gerchberg72, netrapalli15}).

The phase retrieval problem in an infinite-dimensional space
 is fundamentally different from a finite-dimensional setting.
There are several papers devoted to that topic
(\cite{T11}--\cite{cahill15}).
%(\cite{T11,  YPB13, mallat14,      volker14,   pyb14, shenoy16,  daubechies16, cahill15}).
Thakur  proved in \cite{T11} that real-valued  bandlimited signals could be reconstructed from their phaseless samples  taken
 at more than twice  the Nyquist rate.  The above result was extended to complex-valued bandlimited signals by Pohl,  Yang and  Boche in \cite{volker14}
with samples taken at more than four times  the Nyquist rate.
Recently, the phase retrievability
of signals living in a principal
shift-invariant space
 was studied by Shenoy,  Mulleti and Seelamantula
in
\cite{shenoy16} when only magnitude  measurements of their frequency
are available.

Shift-invariant spaces
have
been widely used in  sampling theory, wavelet theory, approximation theory
 and
signal processing, see \cite{jia92}--\cite{AST05}
%\cite{jia92, DDR94, Bow00,  AG01, AST05}
 and references therein.
In this paper, we consider the phase retrieval problem for  real-valued signals in a principal shift-invariant space
\begin{equation} \label{vphi.def} %\label{signalspace.def}
V(\phi):=\Big\{\sum_{k\in \Z}
 c(k) \phi(t-k): \ c(k)\in \R\Big\},\end{equation}
where the generator $\phi$ is a real-valued continuous
function with compact support.
Our model of the generator $\phi$ is the B-spline $B_N$ of order $N\ge 1$, which is obtained
by convoluting the indicator function $\chi_{[0,1)}$ on the unit interval  $N$ times,
\begin{equation}\label{bspline.def} B_N=\underbrace{\chi_{[0,1)}*\cdots*\chi_{[0,1)}}_N.\end{equation}

\subsection{Contribution}

In this paper, we show in Theorem \ref{separable.tm}  that a  real-valued signal $f\in V(\phi)$
 is determined, up to a sign,  from its magnitude $|f(t)|, t\in \R$, if and only if   $f$ is nonseparable, i.e.,  it is not the sum of two nonzero signals in $V(\phi)$
  with their supports being essentially disjoint.
  As an application of Theorem \ref{separable.tm}, we conclude that for any  shift-invariant space  $V(\phi)$ with continuous generator $\phi$ having compact support, not all signals in $V(\phi)$ could be determined, up to a sign,  from its magnitude $|f(t)|, t\in \R$, cf. \cite[Theorem 1]{T11} for the shift-invariant space  generated by the sinc function $\frac{\sin \pi t}{\pi t}$.

Phase retrieval in a shift-invariant space is a  nonlinear
 sampling  and reconstruction problem  (\cite{landau61, dvorkind08,   sun14}).
 %for  nonlinear sampling and reconstruction of signals  in a shift-invariant space.
  In this paper, we show in Theorem \ref{realX.thm} and
  Corollary \ref{splineuniform.cor} that  a nonseparable spline signal in $V(B_N)$ is determined, up to a sign,  from its phaseless samples taken on the shift-invariant set
\begin{equation} \label{y1.def}
Y_1:=X+\Z,\end{equation}
where $N\ge 2$ and $X$ contains  $2N-1$ distinct
points in $(0,1)$.

The  set $Y_1$ in \eqref{y1.def} % Theorem  \ref{realX.thm}
has sampling rate $2N-1$,
which is % much
larger
than the sampling rate needed for the phase retrievability of band-limited signals \cite[Theorem 1]{T11}. Let
 \begin{equation}\label{supportlength.def}
 N=\min_{N_2, N_1\in \Z} \{N_2-N_1, \phi \ {\rm vanishes\ outside}  \ [N_1, N_2]\}\end{equation}
 be the support length of the generator $\phi$, which is the same as the order $N$ for the B-spline generator $B_N$.
A natural question is whether any nonseparable signal in the shift-invariant space $V(\phi)$ can be reconstructed
from its phaseless samples taken on a set with sampling rate less than $2N-1$.
From Example \ref{hat.example2} we see that any nonseparable linear spline signal in $V(B_2)$ can be determined, up to a sign,
  from its phaseless samples taken on the set $(\{x_1, x_2\}+\Z )\cup \{x_3\}$ with  sampling rate $2$, where $x_1, x_2, x_3$ are three distinct points in $(0,1)$.
 In Theorem \ref{realcase.thm}, %Section \ref{phase2.section},
we consider the phase retrieval problem of a nonseparable signal in the shift-invariant space $V(\phi)$
 from its phaseless samples taken on a nonuniform set
 \begin{equation}\label{yinfinty.def}
 Y_\infty:= X\cup (\Gamma+\Z_+)\cup (\Gamma^*+\Z_-)\end{equation}
 with sampling rate $N$,
 where $\Z_\pm$ is the set of all positive/negative integers, and the sets $\Gamma=\{\gamma_1, \ldots, \gamma_N\}$ and
 $ \Gamma^*=\{\gamma_1^*, \ldots, \gamma_N^*\}$ are contained in $X=\{x_1, \ldots, x_{2N-1}\}\subset (0,1)$.

Stability of phase retrieval is of central importance.
% as phaseless samples are often corrupted in most of engineering applications.
The reader may refer to \cite{Bandeira14, EM12,  BZ14, BW15} for phase retrieval in finite-dimensional setting and \cite{suntang15} for nonlinear frames. In this paper, we consider the scenario that phaseless samples taken on  a sampling set
 \begin{equation}\label{noise.samp}
Y_L=\Big(X\cup\Big\{\Gamma+l,\Gamma^\ast-l': 1\le l, l'\le \frac{L-1}{2}\Big\}\Big)+L\Z
\end{equation} are corrupted,
\begin{equation*}\label{noisedata.hat1.intro}
z_{\pmb\epsilon}(y )=|f(y)|^2+{\pmb \epsilon}(y), \  y\in Y_L,
 \end{equation*}
where  $L$ is an odd integer, $f$ is a nonseparable signal in $V(\phi)$,  and  additive noises
 ${\pmb \epsilon}= ({\pmb \epsilon}(y))_{y\in Y_L}$
have the noise level
$|{\pmb \epsilon}|=\sup\{|{\pmb \epsilon}(y)|:\ y\in Y_L\}$.
 In Theorem \ref{realother.thm}, we  establish the stability  of phase retrieval in the above scenario.

The set $Y_L$ in \eqref{noise.samp}
has sampling rate $N+(N-1)/L$. It becomes the shift-invariant set $X+\Z$ in \eqref{y1.def}
for $L=1$. Then as an application of
Theorem \ref{realother.thm},
any nonseparable spline signal in $V(B_N)$ can be reconstructed, up to a sign, approximately from its noisy phaseless samples on $X+\Z$.
 The nonuniform sampling set $Y_\infty$ in \eqref{yinfinty.def} can be interpreted as the limit of
the sets $Y_L$   as $L$ tends to infinity.  Due to the exponential  decay requirement  \eqref{realother.thm.eq2} about $L$ on the noise level,  we  cannot obtain from Theorem  \ref{realother.thm} that % for large odd  $L$
any nonseparable signal in  the shift-invariant space $V(\phi)$ could be well approximated, up to a sign,  when only its noisy phaseless samples on the nonuniform  set $Y_\infty$  are available.

 Many algorithms have been proposed to solve the phase retrieval problem in finite-dimensional setting  (\cite{F78, schechtman15, CSV12, candes13, candes15,  gerchberg72, netrapalli15}).  In this paper, we propose the MEPS algorithm
 to find an approximation  $f_{\pmb \epsilon}$ of a nonseparable signal  $f\in V(\phi)$
 when its noisy phaseless samples  $(z_{\pmb \epsilon}(y))_{y\in Y_L}$ are available.
 The MEPS algorithm %could be implemented in a distributed manner, and it
  contains four steps: minimization, extension,  phase adjustment and sewing.
  Our numerical simulations indicate
 that the MEPS algorithm is  robust against bounded  additive noises ${\pmb \epsilon}$,  and the error  between
 the reconstructed signal $f_{\pmb \epsilon}$ and the original signal $f$  is $O(\sqrt{|\pmb \epsilon|})$.%, the half order %$\sqrt{\varepsilon}$,
% %the square root
%  in the noise level $|{\pmb \epsilon}|$.

\subsection{Organization}
 The paper is organized as follows.  In Section \ref{phase.section}, we
 characterize the phase retrievability of a real-valued signal $f$ in  a linear space
 from its  magnitude $|f(t)|, t\in \R$. We also provide several equivalent statements for
 the phase  retrievability of a signal in the shift-invariant space $V(\phi)$ when its phaseless samples on the shift-invariant set $X+\Z$
 in \eqref{y1.def}
are available  only.  In Section \ref{phase2.section},
we present an illustrative example of the phase retrieval problem for linear spline signals, and we
prove that  any nonseparable signal $f$ in the shift-invariant space $V(\phi)$ could be determined, up
to a sign, from its phaseless samples  $|f(t)|$ taken on the nonuniform sampling set $Y_\infty$ in \eqref{yinfinty.def}.
In  Section  \ref{stable.section}, we propose %a distributed
 the  MEPS algorithm
to reconstruct a nonseparable signal in $V(\phi)$ from its noisy phaseless samples on $Y_L$ in \eqref{noise.samp},
and we use it to establish the stability of the phase retrieval problem.
In Section \ref{simulation.section}, we present some simulations to demonstrate the stability of the proposed MEPS algorithm.
Even though the stability requirement
 \eqref{realother.thm.eq2} in Theorem \ref{realother.thm} is not met for large $L$, the MEPS algorithm still has
 high success rate to save phases of  nonseparable signals in $V(\phi)$.
 All proofs are included in appendices.

\section{Phase retrievability and  nonseparability}\label{phase.section}

In this section, we consider the problem when a signal $f$ in a shift-invariant space
can be recovered, up to a sign,  from its magnitude measurements $|f(t)|, t\in S$,
where $S$ is either the whole line $\R$ or a shift-invariant set $X+\Z$. % with $X\subset (0,1)$.  % or its phaseless samples $|f(t), t\in X$.

\begin{definition}  Let $V$ be a linear space of real-valued continuous signals on the  real line $\R$.
A  signal $f\in V$ is said to be separable if
there  exist nonzero  signals  $f_1$ and $f_2$ in $V$ such that
\begin{equation}\label{separable.condition}
f=f_1+f_2\ \ {\rm and}\  \ f_1 f_2=0.\end{equation}
\end{definition}

The set of all nonseparable signals in $V$  contains the zero signal. It is  a cone of $V$ but not a convex set in general.
%linear subspace of $V$.
% nor a convex set. Is that a union of linear spaces?
A  separable signal $f\in V$ is the sum of two nonzero signals $f_1$ and $f_2\in V$ with their supports being essentially disjoint.
Then it cannot be recovered, up to a sign, from its magnitude measurements $|f|$, since
$|f|=|f_1+f_2|=|f_1-f_2|$ and $f\ne \pm (f_1-f_2)$. % have the same magnitude measurements.
In the following theorem,  we show that  the converse is  true.
%establish the equivalence between phase retrievability and  non-separability for time signals.

\begin{theorem}\label{separable.tm}  Let $V$ be a linear space of real-valued continuous signals on the  real line $\R$.
Then a  signal $f\in V$ is determined, up to a sign,  by its magnitude measurements $|f(t)|, t\in \R$, if and only if
$f$ is nonseparable.
\end{theorem}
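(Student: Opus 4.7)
The plan is to treat the two implications separately, since one direction is essentially free from the discussion preceding the theorem while the other requires only a short linearity argument. Throughout, the only facts I expect to use are that $V$ is closed under addition and scalar multiplication and that $|f(t)|=|g(t)|$ for every $t$ is equivalent to $f(t)^2 = g(t)^2$ for every $t$; no continuity or further structure on $V$ will be invoked.

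For the necessity direction (separable $\Rightarrow$ not phase retrievable), I would follow the sketch already included in the paragraph above the theorem. Given a separating decomposition $f = f_1 + f_2$ with $f_1 f_2 = 0$ and both $f_i$ nonzero, the signal $h := f_1 - f_2 \in V$ agrees with $\pm f$ pointwise because on the zero set of $f_1$ we have $f = f_2$ and $h = -f$, while on the zero set of $f_2$ we have $f = f_1$ and $h = f$. Thus $|h| = |f|$, yet $h \neq \pm f$ since neither $f_1$ nor $f_2$ vanishes identically.

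For the sufficiency direction I would argue by contrapositive. Starting from an arbitrary $g \in V$ with $|g(t)| = |f(t)|$ for all $t \in \R$, the natural candidates for a forbidden decomposition of $f$ are
\[
f_1 := \tfrac{1}{2}(f+g), \qquad f_2 := \tfrac{1}{2}(f-g),
\]
which both lie in $V$ by linearity, satisfy $f_1 + f_2 = f$, and obey
\[
f_1(t)\, f_2(t) = \tfrac{1}{4}\bigl(f(t)^2 - g(t)^2\bigr) = 0
\qquad \text{for every } t \in \R.
\]
If $g$ were neither $f$ nor $-f$, then $f_1$ and $f_2$ would both be nonzero, so the pair $(f_1,f_2)$ would realise \eqref{separable.condition} and witness separability of $f$, contradicting the hypothesis. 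Hence $g = \pm f$, as required.

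I do not foresee a genuine obstacle: the whole proof rests on the trivial algebraic identity $(f+g)(f-g) = f^2-g^2$ together with closure of $V$ under linear combinations. As a small bonus, it is worth observing that the continuity assumption on elements of $V$ in the theorem is not actually used, so the same equivalence persists for any linear space of real-valued functions on $\R$.
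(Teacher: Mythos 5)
Your proof is correct and follows essentially the same route as the paper: the same difference signal $f_1-f_2$ for the necessity direction and the same decomposition $\tfrac12(f\pm g)$ for sufficiency. Your closing observation that continuity of the signals is never used is accurate but does not change the argument.
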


Observe that all bandlimited signals are nonseparable, as they are analytic on the real line. Therefore, by  Theorem \ref{separable.tm},
 we have the following result about bandlimited signals, cf.  \cite[Theorem 1]{T11}.

\begin{corollary}
 Any real-valued bandlimited  signal is determined, up to a sign,  by its magnitude measurements on the real line.
\end{corollary}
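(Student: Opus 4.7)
The plan is to apply Theorem \ref{separable.tm} to the linear space $V$ of real-valued bandlimited signals (viewed as continuous real-valued functions on $\R$), and so the whole task reduces to verifying that every nonzero bandlimited signal is nonseparable in the sense of the definition above.

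First, I would recall that by the Paley--Wiener theorem, any real-valued bandlimited signal $f$ extends to an entire function on $\C$; in particular $f$ is real-analytic on $\R$. Hence the zero set $Z(f):=\{t\in\R:\ f(t)=0\}$ is either all of $\R$ (when $f\equiv 0$) or a discrete subset of $\R$ with no accumulation point.

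Next, suppose for contradiction that some nonzero $f\in V$ admits a decomposition $f=f_1+f_2$ with $f_1,f_2\in V$ both nonzero and $f_1 f_2\equiv 0$. Since $V$ consists of bandlimited signals and bandwidth is preserved under taking $f_1,f_2$ in $V$, both $f_1$ and $f_2$ are themselves real-analytic. The identity $f_1(t)f_2(t)=0$ for every $t\in\R$ forces
\[
\R = Z(f_1)\cup Z(f_2).
\]
But $f_1,f_2\not\equiv 0$ forces $Z(f_1)$ and $Z(f_2)$ to each be discrete, so their union is discrete and cannot cover $\R$. This contradiction shows that no such decomposition exists, and therefore every nonzero $f\in V$ is nonseparable; the zero signal is trivially nonseparable.

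Finally, I would invoke Theorem \ref{separable.tm}: since every $f\in V$ is nonseparable, every real-valued bandlimited signal is determined, up to a sign, by its magnitude measurements $|f(t)|,\ t\in\R$. The only delicate point is the analyticity step, but this is a one-line consequence of Paley--Wiener, so I do not foresee a real obstacle.
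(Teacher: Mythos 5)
Your proposal is correct and follows essentially the same route as the paper, which disposes of the corollary in one line by observing that bandlimited signals are analytic on $\R$, hence nonseparable, and then invoking Theorem \ref{separable.tm}. You have simply spelled out the implicit step: a decomposition $f=f_1+f_2$ with $f_1f_2\equiv 0$ and $f_1,f_2$ nonzero and analytic is impossible because the zero sets of $f_1$ and $f_2$ would each be discrete yet would have to cover $\R$.
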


Let $\phi$ be a real-valued generator of the shift-invariant space $V(\phi)$, and  $N$ be its support length  given in \eqref{supportlength.def}.
Without loss of generality, %Due to the shift-invariance,
we  assume that  %$\phi$ vanishes outside of $[0, N]$,
\begin{equation}\label{supportlength.def2}
\phi(t)=0\ {\rm  for\ all} \ t\not\in [0, N],\end{equation}
otherwise replacing $\phi$ by $\phi(\cdot-N_0)$ for some $N_0\in \Z$.
Clearly,  $\phi(t)-\phi(t-N)$ is a separable signal in $V(\phi)$. Then  from Theorem \ref{separable.tm} we obtain

 \begin{corollary} Let %$V(\phi)$ be the shift-invariant space generated by
 $\phi$ be a %real-valued
  continuous function with compact support. Then
 not all signals $f$  in $V(\phi)$ can be determined, up to a sign,  by their magnitude measurements $|f(t)|, t\in \R$.
 %, on the whole line.
 \end{corollary}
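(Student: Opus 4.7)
The plan is to invoke Theorem \ref{separable.tm}, which reduces the claim to exhibiting even a single nonzero separable signal in $V(\phi)$. The hint in the paragraph right before the statement already names the candidate $\phi(t)-\phi(t-N)$, so the task is really to check separability rigorously.

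First I would use the normalization \eqref{supportlength.def2}, so that $\phi$ vanishes outside $[0,N]$, where $N$ is the support length given by \eqref{supportlength.def}. Assuming $\phi$ is nontrivial (otherwise $V(\phi)=\{0\}$ and the statement is degenerate), we have $N\ge 1$. A key preliminary observation is that continuity of $\phi$, combined with $\phi(t)=0$ for $t\notin[0,N]$, forces $\phi(0)=\phi(N)=0$ by taking one-sided limits from outside $[0,N]$.

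Next I would set
\[
f(t):=\phi(t)-\phi(t-N),\qquad f_1(t):=\phi(t),\qquad f_2(t):=-\phi(t-N),
\]
so that $f\in V(\phi)$ (it is a finite linear combination of integer shifts of $\phi$) and $f=f_1+f_2$ with $f_1,f_2\in V(\phi)$ both nonzero. The verification of $f_1f_2\equiv 0$ proceeds by cases on the support: $f_1$ is supported in $[0,N]$ and $f_2$ in $[N,2N]$, so the only possible overlap is at $t=N$, where $f_1(N)=\phi(N)=0$ and $f_2(N)=-\phi(0)=0$ by the preliminary observation. Hence $f_1(t)f_2(t)=0$ for every $t\in\R$, giving exactly condition \eqref{separable.condition}, so $f$ is separable.

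Finally, applying Theorem \ref{separable.tm} to the linear space $V=V(\phi)$ and the separable signal $f$, we conclude that $f$ is \emph{not} determined up to a sign by $|f(t)|,\ t\in\R$; therefore not all signals in $V(\phi)$ can be so determined. There is no real obstacle here beyond carefully noting that the boundary values $\phi(0)$ and $\phi(N)$ vanish by continuity, which is what makes the two pieces $f_1$ and $f_2$ genuinely disjoint in the sense required by \eqref{separable.condition}.
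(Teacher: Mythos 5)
Your proposal is correct and follows exactly the paper's route: the paper also exhibits $\phi(t)-\phi(t-N)$ as a separable signal and invokes Theorem \ref{separable.tm}, merely asserting separability as ``clear'' where you verify it. Your added check that continuity forces $\phi(0)=\phi(N)=0$, so the two pieces have product identically zero including at the overlap point $t=N$, is the right (and only) detail the paper leaves implicit.
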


Next, we   discuss the nonseparability of signals in a  shift-invariant space $V(\phi)$.
 For the case that $N=1$ (i.e., the generator $\phi$ is supported on $[0, 1]$),
 % the support length $N$ of the generator $\phi$
 %is one,
  one may  verify that a %nonzero
 signal $f\in V(\phi)$
is nonseparable if and only if there exists an integer $k_0$ such that
\begin{equation}\label{N=1.function}
f(t)=c({k_0})\phi(t-k_0)\ \ {\rm  for\ some} \ \  c({k_0})\in \R.\end{equation}
  This implies that any nonseparable signal in $V(\phi)$  can be recovered, up to a  sign, from its phaseless samples taken on the shift-invariant set $t_0+\Z$, where $t_0\in (0,1)$ is so chosen that $\phi(t_0)\ne 0$.
So,  from now on, we consider the phase retrieval problem  only for  signals in $V(\phi)$ with
the support length $N$ of the generator $\phi$ satisfying
\begin{equation}\label{N.eq}
N\ge 2.
\end{equation}

\smallskip
Before characterizing the  nonseparability (and hence phase retrievability by Theorem \ref{separable.tm})
of signals in a shift-invariant space,  let us consider  nonseparability of piecewise linear signals.

\begin{example} \label{hat.example} {\rm
%The hat function  $B_2(t)=(1-|t-1|)_+$  is  the B-spline  of
%order $2$.
Due to the interpolation property of the B-spline  $B_2$  of
order $2$,  piecewise linear signals $f \in V(B_2)$ have the following expansion,
$$f(t)=\sum_{k\in \Z} f(k+1) B_2(t-k).$$
Therefore $f\in V(B_2)$ is separable if and only if there exist  integers $k_0<k_1<k_2$ such that
$f(k_0)f(k_2)\ne 0$ and $f(k_1)=0$. Thus the separable signal
$$f=\hskip-0.08in\sum_{k\le k_1-2} f(k+1)B_2(t-k)+ \hskip-0.03in\sum_{k\ge k_1} f(k+1) B_2(t-k)=:f_1+f_2,$$
is the sum of two nonzero signals $f_1, f_2\in V(B_2)$ supported in $(-\infty, k_1]$ and $[k_1, \infty)$ respectively.
} \end{example}

 In the following theorem, we extend the support separation property  in Example \ref{hat.example}
 to separable signals in a shift-invariant space.  % $V(\phi)$.
%is separable if and only if it can be written as the sum of two nonzero signals supported in $(-\infty, k_1]$ and $[k_1, \infty)$ respectively.

%\smallskip

\begin{theorem}\label{realX.thm} Let $\phi$ be a  real-valued
 continuous  function satisfying
\eqref{supportlength.def2} and \eqref{N.eq},
$X:=\{x_m, 1\le m\le 2N-1\}\subset (0, 1)$, and let
$f(t)=\sum_{k\in \Z} c(k) \phi(t-k)$ be a nonzero real-valued
signal in $V(\phi)$.
  If
all $N\times N$ submatrices of
\begin{equation}\label{Phi.def}
\Phi=\big(\phi(x_m+n)\big)_{1\le m\le 2N-1, 0\le n\le N-1}\end{equation}
are nonsingular, then the following statements are equivalent.
\begin{itemize}

\item [{(i)}] The signal $f$ is nonseparable.

\item [{(ii)}] $\sum_{l=0}^{N-2} |c({k+l})|^2\ne 0$ for all $K_-(f)-N+1<k<K_+(f)+1$, where
$K_-(f)=\inf\{k, c(k)\ne 0\}$ and $K_+(f)=\sup\{k, c(k)\ne 0\}$.

\item [{(iii)}] The signal $f$ is determined, up to a sign,  from  its phaseless samples $|f(t)|, t\in X+\Z$, taken on the shift-invariant set $X+\Z$.
\end{itemize}
\end{theorem}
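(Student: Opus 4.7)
The plan is to establish the equivalences by proving the two biconditionals (i)$\Leftrightarrow$(ii) and (i)$\Leftrightarrow$(iii). The hypothesis that every $N\times N$ submatrix of $\Phi$ is nonsingular will play two roles: first, evaluating the identity $\sum_k c(k)\phi(t-k)\equiv 0$ at the sample points $x_m+n$ yields a system $\Phi\mathbf{v}_n=0$ and hence linear independence of the integer shifts of $\phi$; second, the same hypothesis is precisely the Balan--Casazza--Edidin complement property needed for real phase retrieval from $2N-1$ linear measurements in $\R^N$.

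For (i)$\Leftrightarrow$(ii), to prove (not ii)$\Rightarrow$(not i) I would split $f$ at a coefficient gap: if $c(k)=\cdots=c(k+N-2)=0$ for some $k$ with $K_-(f)-N+1<k<K_+(f)+1$, the range of $k$ forces nonzero coefficients to exist strictly below $k$ and strictly above $k+N-2$, and the pieces $f_1:=\sum_{j\le k-1}c(j)\phi(\cdot-j)$ and $f_2:=\sum_{j\ge k+N-1}c(j)\phi(\cdot-j)$ are both nonzero with supports in $(-\infty,k+N-1]$ and $[k+N-1,\infty)$. Continuity of $\phi$ together with \eqref{supportlength.def2} forces $\phi(0)=\phi(N)=0$, so both $f_1$ and $f_2$ vanish at the common endpoint $k+N-1$, giving $f_1f_2\equiv 0$. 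For (not i)$\Rightarrow$(not ii), I would use linear independence of integer shifts to write $f_1=\sum c_1(j)\phi(\cdot-j)$ and $f_2=\sum c_2(j)\phi(\cdot-j)$ with $c=c_1+c_2$. The support condition $f_1f_2=0$ combined with the compact support of each translate then shows that there must exist $j_0$ with $c_1(j)=0$ for $j\ge j_0$ and $c_2(j)=0$ for $j\le j_0+N-2$, producing $N-1$ consecutive indices $j_0,\ldots,j_0+N-2$ where $c(j)=0$; a short check that this gap lies inside the range in (ii) finishes the argument.

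For (i)$\Leftrightarrow$(iii), the direction (not i)$\Rightarrow$(not iii) is essentially a restriction of Theorem~\ref{separable.tm}: the alternative $f_1-f_2$ satisfies $|f_1-f_2|=|f|$ on $\R$ and hence on $X+\Z$, yet $f\ne\pm(f_1-f_2)$. For (i)$\Rightarrow$(iii), I would work interval by interval: on $[n,n+1)$ the samples satisfy $(f(x_m+n))_{m=1}^{2N-1}=\Phi\mathbf{v}_n$ with $\mathbf{v}_n=(c(n),c(n-1),\ldots,c(n-N+1))^T$, and the complement property forces $(|f(x_m+n)|)_m$ to determine $\mathbf{v}_n$ up to a single sign $s_n\in\{\pm 1\}$. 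The consecutive windows $\mathbf{v}_n$ and $\mathbf{v}_{n+1}$ overlap in the $N-1$ entries $c(n),c(n-1),\ldots,c(n-N+2)$, which by condition (ii) cannot all vanish whenever both $\mathbf{v}_n$ and $\mathbf{v}_{n+1}$ are themselves nonzero, forcing $s_n=s_{n+1}$. Anchoring at a fixed index with $\mathbf{v}_{n_0}\ne 0$ and propagating in both directions then produces a single global sign.

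The main obstacle will be the bookkeeping in the sign-gluing step of (i)$\Rightarrow$(iii): I must verify that condition (ii) covers \emph{precisely} the range of $n$ for which $\mathbf{v}_n$ and $\mathbf{v}_{n+1}$ are both nonzero, and that the index substitution $k=n-N+2$ correctly maps between the $k$-range in (ii) and the $n$-range of overlapping nonzero windows. The possibly bi-infinite case ($K_\pm(f)=\pm\infty$) is then handled uniformly by propagating the anchored sign outward in both directions along the now-covered chain of nontrivial overlaps.
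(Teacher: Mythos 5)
Your main chain of implications is essentially the paper's proof: the paper establishes the cycle (iii)$\Rightarrow$(i) by restricting Theorem~\ref{separable.tm}, (i)$\Rightarrow$(ii) by splitting $f$ at a run of $N-1$ vanishing coefficients into pieces supported in $(-\infty,k+N-1]$ and $[k+N-1,\infty)$, and (ii)$\Rightarrow$(iii) via a lemma (full spark of $\Phi$ plus \cite[Theorem 2.8]{BCE06}) stating that $|f(X+l)|$ determines the window $(c(l-N+1),\ldots,c(l))$ up to a sign, followed by exactly your overlap/sign-propagation induction anchored at an index with $c(k_0)\ne 0$. Your bookkeeping worry resolves correctly: the window $\mathbf{v}_n$ is nonzero iff $K_-\le n\le K_++N-1$, while condition (ii) with $k=n-N+2$ covers precisely $K_-\le n\le K_++N-2$, i.e.\ every consecutive pair of nonzero windows, so the single global sign propagates.

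The genuine flaw is in your direct proof of (not~i)$\Rightarrow$(not~ii). The claim that $f_1f_2=0$ forces an index $j_0$ with $c_1$ supported below $j_0$ and $c_2$ supported above $j_0+N-2$ is false: the two pieces of a separable signal need not occupy a left half-line and a right half-line. For $\phi=B_2$ (which satisfies the full-spark hypothesis) take $f_1=B_2(\cdot)+B_2(\cdot-4)$ and $f_2=B_2(\cdot-2)$; then $f_1f_2=0$ and both are nonzero, but the coefficient supports $\{0,4\}$ and $\{2\}$ interleave, so no such $j_0$ exists under either labeling of the pieces. (The conclusion of (not~ii) --- a gap of $N-1$ consecutive zeros --- does hold, at $k=1$ and $k=3$, but not by your mechanism, and repairing the argument to handle interleaved supports is not a short check.) Fortunately this direction is logically redundant: your ``(i)$\Rightarrow$(iii)'' argument in fact uses only (ii), so it is really a proof of (ii)$\Rightarrow$(iii); combined with your (iii)$\Rightarrow$(i) and (i)$\Rightarrow$(ii) this closes the cycle (i)$\Rightarrow$(ii)$\Rightarrow$(iii)$\Rightarrow$(i), which is exactly how the paper obtains (ii)$\Rightarrow$(i) without ever proving any support-separation statement. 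Delete the direct argument and present the three implications as a cycle, and the proof is complete.
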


The nonsingularity of $N\times N$ submatrices of the matrix $\Phi$ in \eqref{Phi.def}, i.e.,
$\|(\Phi_N)^{-1}\|<\infty$,
 is
 also known as its full sparkness (\cite{donoho03, ACM}), where
  \begin{eqnarray} \label{realother.thm.eq1}
\hskip-0.02in\|(\Phi_N)^{-1}\| & \hskip-0.08in =  & \hskip-0.08in \sup_{1\le m_0<\ldots<m_{N-1}\le 2N-1}\nonumber\\
 & &  \Big\| \Big(\big(\phi(x_{m_l}+n)\big)_{0\le l , n\le N-1}\Big)^{-1}\Big\|,
\end{eqnarray}
and  $\|A\|=\sup_{\|x\|_2=1} \|Ax\|_2$ for a matrix $A$.

Consider the matrix $\Phi$ with its generating function  $\phi$ being the continuous solution  of a refinement equation,
\begin{equation}\label{refinement.def}
\phi(t)=\sum_{n=0}^N a(n) \phi(2t-n)\ \  {\rm and}  \ \  \int_{-\infty} ^\infty \phi(t)dt=1 ,\end{equation}
 where $\sum_{n=0}^N a(n)=2$.
Under the assumption that
\begin{equation}\label{rip.con}
\sum_{n=0}^N a(n) z^n= (1+z) Q(z)\end{equation}
for some polynomial $Q$ having positive coefficients,  it is known
that   the matrix $\Phi$ in \eqref{Phi.def} is of full spark whenever $x_m\in (0, 1), 1\le m\le 2N-1$, are distinct  (\cite{goodman92, goodman04}).
%:=\{x_1, \ldots, x_{2N-1}\}\subset (0, 1)$.
It is well known that the B-spline $B_N$ of order $N$
satisfies the refinement equation \eqref{refinement.def} with $Q(z)$ in \eqref{rip.con} given by $2^{-N+1}(1+z)^{N-1}$.
This together with  Theorem \ref{realX.thm} implies the following result for  spline signals.

\begin{corollary}\label{splineuniform.cor}
Let  $X$ contain $2N-1$ distinct points in $(0, 1)$. Then any nonseparable spline signal in $V(B_N)$
is determined, up to a sign,  from its phaseless samples taken on the shift-invariant set $X+\Z$.
\end{corollary}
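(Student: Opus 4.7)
The plan is to deduce the corollary directly from Theorem \ref{realX.thm} applied to the generator $\phi = B_N$, so the entire task reduces to verifying the full-spark hypothesis on the sampling matrix $\Phi = (B_N(x_m+n))_{1\le m\le 2N-1,\, 0\le n\le N-1}$ for an arbitrary choice of $2N-1$ distinct points $x_1,\dots,x_{2N-1}$ in $(0,1)$.

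First, I would recall the well-known fact that the B-spline $B_N$ satisfies the refinement equation \eqref{refinement.def} with mask symbol
\begin{equation*}
\sum_{n=0}^{N} a(n)z^{n} = 2^{-N+1}(1+z)^{N},
\end{equation*}
which can be factored as $(1+z)Q(z)$ with $Q(z) = 2^{-N+1}(1+z)^{N-1}$. Since $Q$ is a polynomial with strictly positive binomial coefficients, the hypothesis \eqref{rip.con} is satisfied.

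Next, I would invoke the cited results of Goodman and collaborators on total positivity properties of refinable functions satisfying \eqref{rip.con}: under that condition, the collocation matrix $(\phi(x_m+n))$ arising from distinct shifts evaluated at distinct points in $(0,1)$ has the full-spark property, i.e.\ every $N\times N$ submatrix is nonsingular. Applied to $\phi = B_N$ with the $2N-1$ distinct abscissae $x_1,\dots,x_{2N-1}\in(0,1)$, this yields precisely the nonsingularity hypothesis required by Theorem \ref{realX.thm}.

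Finally, with the full-spark condition verified, Theorem \ref{realX.thm} (equivalence of (i) and (iii)) immediately gives that every nonseparable signal in $V(B_N)$ is determined, up to a sign, by its phaseless samples on $X+\Z$. The only real content of the argument is the verification of the mask factorization \eqref{rip.con}, which is essentially a one-line identity for $B_N$; there is no genuine obstacle, since the heavy lifting (both the connection between full spark and phase retrievability, and the total-positivity theorem granting full spark) has already been done in Theorem \ref{realX.thm} and in the cited literature.
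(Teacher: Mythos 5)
Your proposal is correct and follows exactly the paper's route: verify that the B-spline mask factors as $(1+z)Q(z)$ with $Q(z)=2^{-N+1}(1+z)^{N-1}$ having positive coefficients, invoke the cited Goodman--Micchelli/Goodman--Sun total-positivity results to get full spark of $\Phi$ for any $2N-1$ distinct points in $(0,1)$, and then apply the equivalence (i)$\Leftrightarrow$(iii) of Theorem \ref{realX.thm}. No gaps; this is precisely the argument the paper gives in the two sentences preceding the corollary.
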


The full sparkness of the matrix $\Phi$ in \eqref{Phi.def} implies that $\phi$ has linearly independent shifts, i.e.,
 the linear  map from sequences
$(c(k))_{k=-\infty}^\infty$ to signals $\sum_{k=-\infty}^\infty c(k) \phi(t-k)\in V(\phi)$ is one-to-one (\cite{jia92, ron89, S10}).  Conversely, if  $\phi$ is the continuous solution of a refinement equation \eqref{refinement.def} with linearly independent shifts, then  $\Phi$ in \eqref{Phi.def} is of full spark for almost all $(x_1, \ldots, x_{2N-1})\in (0,1)^{2N-1}$, see  \cite[Theorem A.2]{S10}.

For a signal $f=\sum_{k\in \Z}
 c(k) \phi(t-k)\in V(\phi)$, define
\begin{equation}\label{sf.def}
S_f=\inf_{K_-(f)-N+1<k<K_+(f)+1} \sum_{l=0}^{N-2} |c(k+l)|^2.\end{equation}
By  the second statement in Theorem \ref{realX.thm}, we obtain
that $S_f=0$ for any separable signal $f\in V(\phi)$, and  that $S_f>0$  for any nonseparable signal $f\in V(\phi)$ with finite duration.
So we may use $S_f$ to estimate
 how far  away a  nonseparable signal $f$  from the set of all separable signals in $V(\phi)$, cf. Theorem \ref{realother.thm}.

\section{Phaseless oversampling}\label{phase2.section}

 A discrete set $I\subset \R$ is said to have sampling rate $D(I)$ if
\begin{equation}\label{samplingrate.def}
D(I)=\lim_{b-a\to \infty}   \frac{\#(I\cap [a, b])}{b-a}<\infty,
\end{equation}
where   $\#(E)$ is the cardinality of a set $E$.
%where
%\begin{equation*}
%D_+(I)=\limsup_{M\to \infty} \sup_{a\in \Z} \frac{\#(I\cap [a, a+M])}{M}
%\end{equation*}
%and
%\begin{equation*}
%D_-(I)=\liminf_{M\to \infty} \inf_{a\in \Z} \frac{\#(I\cap [a, a+M])}{M}.
%\end{equation*}
Let $\phi$ be the continuous function satisfying \eqref{supportlength.def2}, \eqref{N.eq} and \eqref{Phi.def}.  It follows immediately from
Theorem \ref{realX.thm}  that  nonseparable signals in  $V(\phi)$
can be fully recovered, up to a sign,  from their phaseless samples taken on the shift-invariant set $X+\Z$ with sampling rate $2N-1$,
 which is larger than the sampling rate   required for recovering bandlimited signals \cite[Theorem 1]{T11}.
A natural question  is to find necessary/sufficient conditions on a  set $I$ such that any nonseparable signal in $V(\phi)$ can be reconstructed from its phaseless samples taken on $I$.

In this section, we first introduce a necessary condition on the sets $I$.
%so that any nonseparable signal in the shift-invariant space
%$V(\phi)$ can be reconstructed from its phaseless samples taken
%on $Z$.

\begin{theorem}\label{necessary.thm}
Let $\phi$ be a  real-valued
 continuous  function satisfying
\eqref{supportlength.def2}, \eqref{N.eq} and \eqref{Phi.def}, and let $I$ be a discrete set with sampling rate $D(I)$. If all nonseparable signals in $V(\phi)$ can be determined, up to a sign, from their phaseless samples taken on the set $I$, then the sampling rate $D(I)$ is at least one,
\begin{equation} \label{necessary.thm.eq1} D(I)\ge 1.
\end{equation}
\end{theorem}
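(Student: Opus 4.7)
The plan is to argue by contradiction: assuming $D(I) < 1$, I will produce a nonseparable $f \in V(\phi)$ and a distinct $g \in V(\phi)$ with $g \ne \pm f$ and $|f(y)| = |g(y)|$ for every $y \in I$, contradicting the hypothesis. Setting $\delta := (1 - D(I))/2 > 0$, the definition \eqref{samplingrate.def} furnishes $M_0$ such that $\#(I \cap [a,b]) \le (1 - \delta)(b - a)$ for all integers $a < b$ with $b - a \ge M_0$; additionally requiring $b - a > (N-1)/\delta$ sharpens this to $\#(I \cap [a,b]) < b - a - N + 1$.

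Let $W := \{\sum_{k=a}^{b-N} c_k \phi(\cdot - k) : c_k \in \R\} \subset V(\phi)$, a subspace whose members are supported in $[a,b]$. The linear independence of integer shifts of $\phi$ (a consequence of the full-spark hypothesis on the matrix $\Phi$) gives $\dim W = b - a - N + 1 > \#(I \cap [a,b])$, so the linear sampling map
\[
E : W \to \R^{\#(I \cap [a,b])}, \qquad h \mapsto (h(y))_{y \in I \cap [a,b]},
\]
has nontrivial kernel; I fix $h_0 \ne 0$ in $\ker E$. Next, Theorem \ref{realX.thm}(ii) shows that a nonzero signal $\sum_k c_k \phi(\cdot - k) \in W$ is separable iff its coefficient sequence has a block of $N-1$ consecutive zeros strictly between its first and last nonzero entries, so the separable signals in $W$ are contained in a finite union of coordinate subspaces $\{c_{k_0} = c_{k_0+1} = \cdots = c_{k_0+N-2} = 0\}$, each of codimension $N - 1 \ge 1$. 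The nonseparable signals therefore form an open dense subset of $W$, and I may select a nonseparable $f \in W$ with $f \ne h_0/2$. Setting $g := f - h_0 \in W$, one has $g \ne f$ (because $h_0 \ne 0$) and $g \ne -f$ (because $f \ne h_0/2$), while $f(y) = g(y)$ on $I \cap [a,b]$ (since $h_0 \in \ker E$) and $f(y) = g(y) = 0$ on $I \setminus [a,b]$ (by the support constraint). Thus $|f| = |g|$ pointwise on $I$, contradicting the hypothesized determination of $f$ up to sign.

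The main technical points I anticipate are (i) passing from the limit defining $D(I)$ to the uniform estimate on long intervals, and (ii) simultaneously arranging $f$ to be nonseparable and $f \ne h_0/2$, which is genuinely generic in $W$ by the codimension argument above. The essential conceptual step is the \emph{linearization} trick of reducing the nonlinear constraint $|f(y)| = |g(y)|$ to the linear equality $f(y) = g(y)$ via the choice $g = f - h_0$ with $h_0$ in the kernel of the linear sampling map on the subspace of signals supported in $[a,b]$.
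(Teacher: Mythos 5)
Your proof is correct, and its second half takes a genuinely different route from the paper's. Both arguments share the same opening move: locate an interval $[a,b]$ with $\#(I\cap[a,b])<b-a-N+1$ and use the resulting dimension count (plus linear independence of the shifts of $\phi$, which the full-spark hypothesis supplies) to produce a nonzero signal supported in $[a,b]$ that vanishes on all of $I$. From there the paper argues extremally: it takes a nonzero kernel element of \emph{minimal support length}, observes that it must be separable (a nonseparable signal with identically zero samples on $I$ would be indistinguishable from the zero signal, forcing it to be zero), and then uses the support-splitting characterization from Theorem \ref{realX.thm} to break it into two shorter nonzero kernel elements, contradicting minimality. You instead run the classical linear-sampling perturbation argument: keep an arbitrary nonzero kernel element $h_0$, choose a generic nonseparable $f$ in the finite-dimensional space $W$ avoiding the separable locus (a finite union of codimension-$(N-1)$ coordinate subspaces, again by Theorem \ref{realX.thm}(ii)) and the single point $h_0/2$, and exhibit the explicit ambiguous pair $f$ and $g=f-h_0$ with $f(y)=g(y)$ on $I$ but $g\ne\pm f$. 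Your version buys an explicit counterexample pair and makes transparent that the obstruction is already present at the level of \emph{linear} sampling (the two signals agree on $I$, not merely in magnitude); the paper's version avoids the genericity discussion entirely at the cost of the slightly slicker minimal-support induction. One small point worth making explicit in your write-up: the codimension-$(N-1)$ subspaces are proper in $W$ because the gap indices of a separable $f\in W$ necessarily lie strictly between $K_-(f)$ and $K_+(f)$ and hence inside the coefficient range $[a,b-N]$; you assert this implicitly and it does hold, but it is the one place where the containment of the separable locus in proper subspaces could have failed.
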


The lower bound estimate \eqref{necessary.thm.eq1}
is  smaller than the sampling rate   required for recovering bandlimited signals \cite[Theorem 1]{T11}.
So one may think that it can be improved. However as indicated in the example below,
the  lower bound estimate \eqref{necessary.thm.eq1} is optimal  if the requirement \eqref{Phi.def} on the  generator $\phi$ is dropped.

 \begin{example}{\rm
Let $\varphi_0$ be a continuous function supported in $[0, 1/2]$
and set $\varphi_N(t)=\varphi_0(t)-\varphi_0(t-N+1/2), N\ge 1$.
Similar to \eqref{N=1.function}, one may verify that
 a signal $f$ in $V(\varphi_N)$ is nonseparable
if and only if there exists $k_0\in \Z$ such that
$$f(t)=c({k_0})\varphi_N(t-k_0)\ \ {\rm  for\ some} \  \ c({k_0})\in \R.$$
Hence given any $t_0\in (0, 1/2)$ with $\varphi_0(t_0)\ne 0$,  all nonseparable signals in $V(\varphi_N)$ can be reconstructed, up to a sign,  from their phaseless samples taken on the set
$t_0+\Z$ with sampling rate one.
}\end{example}

%the necessary condition \eqref{samplingnecessary.req}.
In this section,
we next show that  nonseparable signals in  $V(\phi)$
are determined, up to a sign,  from their phaseless samples taken on a  set with sampling rate $N$.  %being about $N$ essentially.
Before stating the result, let us briefly discuss an  example of phaseless oversampling. % retrieval of piecewise linear signals.
 %required for phase retrieval of bandlimited signals, cf. \eqref{samplingnecessary.req} and \cite{T11}.
\begin{example}
\label{hat.example2} {\rm  (Continuation of Example \ref{hat.example}) \  Let $k_0\in \Z$ and $f\in V(B_2)$ be a nonseparable  piecewise linear signal. % with $f(t)\not\equiv 0$ on $k_0+[0,1]$.
One may verify that $3$ distinct points $k_0+x_1, k_0+x_2, k_0+x_3\in k_0+(0, 1)$ are enough to  determine $f(k_0)$ and $f(k_0+1)$ (hence $f(t), t\in k_0+[0,1]$), up to a phase, from  phaseless samples $|f(k_0+x_1)|, |f(k_0+x_2)|$ and $|f(k_0+x_3)|$. Particularly, solving
$$|f(k_0)(1-x_i)+x_i f(k_0+1)|^2=|f(k_0+x_i)|^2,\ i=1, 2, 3$$ gives
\begin{equation*}\label{f0.solution}
|f(k_0)|^2= \frac{\left |\begin{array}{ccc}  |f(k_0+x_1)|^2 & x_1(1-x_1) & x_1^2\\
|f(k_0+x_2)|^2 & x_2 (1-x_2) & x_2^2\\
|f(k_0+x_3)|^2 & x_3 (1-x_3) & x_3^2\end{array}\right| }{(x_2-x_1)(x_3-x_1)(x_3-x_2)},
\end{equation*}
\begin{equation*} \label{f0f1.solution}
2f(k_0)f(k_0+1)= \frac{\left |\begin{array}{ccc} (1-x_1)^2 &  |f(k_0+x_1)|^2  & x_1^2\\
(1-x_2)^2 & |f(k_0+x_2)|^2  & x_2^2\\
(1-x_3)^2 & |f(k_0+x_3)|^2  & x_3^2\end{array}\right| }{(x_2-x_1)(x_3-x_1)(x_3-x_2)},
\end{equation*}
and
\begin{equation*}  \label{f1.solution} |f(k_0+1)|^2= \frac{\left |\begin{array}{ccc} (1-x_1)^2 & x_1(1-x_1) &  |f(k_0+x_1)|^2 \\
(1-x_2)^2 & x_2(1-x_2) & |f(k_0+x_2)|^2 \\
(1-x_3)^2 & x_3 (1-x_3) & |f(k_0+x_3)|^2 \end{array}\right| }{(x_2-x_1)(x_3-x_1)(x_3-x_2)}.
\end{equation*}
%\begin{equation}\label{phaselesslinear.eq}
%\left\{\begin{array}{l}
%|f(0)|^2+ 2x_1 f(0)f(1) + x_1^2 |f(1)|^2 =|f(x_1)|^2\\
%|f(0)|^2+ 2x_2 f(0)f(1) + x_2^2|f(1)|^2 =|f(x_2)|^2\\
%|f(0)|^2+ 2x_3 f(0)f(1) + x_3^2 |f(1)|^2 =|f(x_3)|^2.
%\end{array}\right.
%\end{equation}
For the case that %the phase of the signal $f$ on $[k_0, k_0+1]$ is uniquely determined (i.e.,
at lease  one of
two evaluations $f(k_0)$ and $f(k_0+1)$ is nonzero,
%$$f(t)=0 \ {\rm for\ all} \ t\ge k_0+1$$
%if $f(k_0+1)=0$, and
\begin{equation}\label{piecewise.eq}
f(k_0+2) = \left\{ \begin{array}{ll} 0 & {\rm if}\ f(k_0+1)=0\\
 f(k_0+1)+ \triangle_{k_0}^+ & {\rm if} \  f(k_0+1)\ne 0,
 \end{array}\right.
\end{equation}
%if $f(k_0+1)\ne 0$,
where the first equality follows from  nonseparability of the signal $f$, the second
one is obtained by solving
the equations \begin{equation}\label{k0+1.eq}
|f(k_0+1)(1-x_i)+ x_i f(k_0+2)|^2=|f(k_0+1+x_i)|^2, i=1, 2,\end{equation}
and
\begin{eqnarray*} \triangle_{k_0}^+ &\hskip-0.08in = & \hskip-0.08in\frac{x_1^2 (|f(k_0+1+x_2)|^2- |f(k_0+1)|^2)}{2x_1x_2(x_1-x_2) f(k_0+1)}\nonumber\\
&\hskip-0.08in &\hskip-0.08in -\frac{x_2^2|f(k_0+1+x_1)|^2-|f(k_0+1)|^2)} {2x_1x_2(x_1-x_2) f(k_0+1)}.
\end{eqnarray*}
From \eqref{piecewise.eq} we see that two distinct points
$k_0+1+x_1, k_0+1+x_2\in k_0+1+(0,1)$ could sufficiently  determine $f(t), k_0+1\le t\le k_0+2$.
%from its evaluation $f(k_0+1)$  and its phaseless samples $|f(k_0+1+x_1)|$ and $|f(k_0+1+x_2)|$.

 For the case that %the phase of the signal $f$ on $[k_0, k_0+1]$ is not uniquely determined (i.e.,
   $f(k_0+1)= f(k_0)=0$,
solving \eqref{k0+1.eq}
yields
$$|f(k_0+2)|^2=\frac{|f(x_1+k_0+1)|^2+|f(x_2+k_0+1)|^2}{x_1^2+x_2^2}.$$
Then either  $f(t)=0$ for all $t\in [k_0, k_0+2]$  or the
phase of the signal $f$ on $[k_0, k_0+2]$ is determined up to the sign of nonzero evaluation $f(k_0+2)$.  Therefore, we can continue the above procedure to determine the signal $f$ on $[k_0,\infty)$ if there are two distinct points in intervals $k+(0, 1)$ for every $k\ge k_0+1\in \Z\backslash \{k_0\}$.

 Using the similar argument,  we can prove by induction on $k <k_0$ that the signal $f(t), t\in [k, \infty)$, can be
 determined, up to a sign, by its phaseless samples taken on $l-1+x_1$ and $l-1+x_2$, $k\le l<k_0$.
 By now, we conclude that a
nonseparable   signal in $V(B_2)$
could be  determined, up to a sign,  by its phaseless samples on  $(\{x_1, x_2\}+\Z)\cup\{x_3+k_0\}$, where $x_1, x_2, x_3\in (0, 1)$ are distinct
and $k_0\in \Z$. We remark that the additional point $x_3+k_0$ in the above phase retrievability is necessary in general. For  instance,
signals $f(t)\equiv 1/3$ and $g(t)=\sum_{k\in \Z} (-1)^{k}B_2(t-k)$ in $V(B_2)$ have the same magnitude  measurements  on  $\{1/3, 2/3\}+\Z$, but $f\ne \pm g$.
}\end{example}

 Finally, %In the next theorem,
 we state the result on the phase retrieval of  %in the above example
 %
 %in Example \ref{hat.example2}  % for piecewise linear signals
%  to
 nonseparable signals in a shift-invariant space $V(\phi)$ with sampling rate $N$. %with  $N\ge 2$.

\begin{theorem}\label{realcase.thm}
Let $\phi$ be a  real-valued  continuous  function satisfying
\eqref{supportlength.def2} and \eqref{N.eq}.
Take $X:=\{x_m, 1\le m\le 2N-1\}\subset (0, 1), \Gamma=\{\gamma_1, \ldots, \gamma_N\}\subset X$
and $\Gamma^*=\{\gamma_1^*, \ldots, \gamma_N^*\}\subset X$
so that  the matrix $\Phi$ in \eqref{Phi.def}
is of full spark,
 \begin{equation}\label{phinonzero.cond}
 \phi(\gamma_s)\ne 0 \   \ \text{for all} \ 1\le s \le N,
 \end{equation}
 and
 \begin{equation}\label{phinonzero*.cond}
 \phi(\gamma^*_s+N-1)\ne 0  \  \ \text{for all} \ 1\le s \le N.
 \end{equation}
Then for any $k_0\in \Z$, a nonseparable signal  in $V(\phi)$ is determined, up to  a sign,
from  its phaseless samples %$|f(t)|$
taken on $(X+k_0)\cup (\Gamma+k_0+\Z_+) \cup (\Gamma^{\ast}+k_0+\Z_-)$.
\end{theorem}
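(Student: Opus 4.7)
The plan is to reconstruct the coefficient sequence $(c(k))_{k\in\Z}$ of a nonseparable $f=\sum_k c(k)\phi(\cdot-k)$ one entry at a time, in three phases: first recover the central block from the dense samples on $X+k_0$, then propagate forward using the $\Gamma$-samples, and finally propagate backward using the $\Gamma^*$-samples. Without loss of generality take $k_0=0$. By \eqref{supportlength.def2}, on every interval $[l,l+1]$ the signal expands as $f(t)=\sum_{n=0}^{N-1}c(l-n)\phi(t-l+n)$, so any sample $f(\gamma_s+l)$ couples exactly the $N$ coefficients $c(l),c(l-1),\ldots,c(l-N+1)$, with extreme weights $\phi(\gamma_s)$ and $\phi(\gamma_s+N-1)$. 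The non-vanishing hypotheses \eqref{phinonzero.cond} and \eqref{phinonzero*.cond} are precisely what allow the forward and backward sweeps to peel off one new unknown per step.

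For the base case, I would view the central-interval samples as $f(x_m)=(\Phi\vec{c}_0)_m$ with $\vec{c}_0=(c(0),c(-1),\ldots,c(-N+1))^T\in\R^N$, and invoke finite-dimensional phase retrieval: a $(2N-1)\times N$ full-spark measurement matrix satisfies the complement property (any partition of its rows has at least one side containing $N$ linearly independent rows), so $|\Phi\vec{c}_0|^2$ determines $\vec{c}_0$ up to a global sign, which we fix. For the forward step at $l\ge 1$, given $c(l-1),\ldots,c(l-N+1)$ the $N$ scalar quadratics
$$|c(l)\phi(\gamma_s)+\alpha_{s,l}|^2=|f(\gamma_s+l)|^2,\qquad \alpha_{s,l}=\sum_{n=1}^{N-1}c(l-n)\phi(\gamma_s+n),$$
govern $c(l)$. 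Assuming some $\alpha_{s,l}\ne 0$, I would first show that two indices $s,t$ must have distinct ratios $\alpha_{s,l}/\phi(\gamma_s)\ne\alpha_{t,l}/\phi(\gamma_t)$ --- otherwise, setting $\rho$ to be the common ratio, the $N\times N$ submatrix of $\Phi$ with rows indexed by $\Gamma$ applied to $(-\rho,c(l-1),\ldots,c(l-N+1))^T$ would vanish, and its full spark would force every entry to be zero, contradicting $\alpha_{s,l}\ne 0$. The two quadratics for $s,t$ share the true $c(l)$ as a common root, and a direct computation shows their spurious roots coincide iff the ratios coincide, so $c(l)$ is uniquely determined. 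The backward induction is entirely symmetric: on $[l,l+1]$ with $l\le -1$, samples at $\gamma_s^*+l$ yield scalar quadratics in the new unknown $c(l-N+1)$, whose leading weight $\phi(\gamma_s^*+N-1)\ne 0$ and whose consistency is controlled by the full spark of the submatrix indexed by $\Gamma^*$.

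The main obstacle is the degenerate case where all $\alpha_{s,l}$ vanish. Full spark of the $N\times(N-1)$ submatrix obtained by deleting the first column of $(\phi(\gamma_s+n))_{s,n}$ then forces $c(l-1)=\cdots=c(l-N+1)=0$, a run of $N-1$ consecutive zeros. Here the nonseparability hypothesis enters decisively: by condition (ii) of Theorem \ref{realX.thm} such a window cannot sit with $l-N+1\in[K_-(f)-N+2,K_+(f)]$, so if some $|f(\gamma_{s_0}+l)|\ne 0$ (whence $c(l)\ne 0$), one concludes $l=K_-(f)$. Hence the degeneracy with a nonzero new coefficient can occur at most once in the forward sweep, and at that unique moment no previously-set coefficient has a sign to be consistent with, so the freely chosen sign of $c(l)$ merely anchors the global sign. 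The analogous analysis closes the backward sweep, and the two sweeps are mutually consistent because both inherit the sign fixed on the central block (or, when $\vec{c}_0=0$, inherit it from the single free sign choice made on whichever side the first nonzero coefficient appears), completing the reconstruction up to the expected overall sign.
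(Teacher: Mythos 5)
Your proposal is correct and follows essentially the same route as the paper: finite-dimensional phase retrieval on the central block via the complement property of the full-spark matrix $\Phi$, then forward/backward induction peeling off one coefficient per step from the quadratic sample equations, with your "distinct ratios $\alpha_{s,l}/\phi(\gamma_s)$" observation being exactly the content of the paper's Lemma on the rank-$2$ property of the $2\times N$ matrices \eqref{realcase.lem.eq1}--\eqref{realcase.lem.eq2}, and nonseparability (via statement (ii) of Theorem \ref{realX.thm}) ruling out the degenerate all-zero window except at the support edge, where the free sign anchors the global phase. The only cosmetic differences are that you resolve the step via the common root of two quadratics rather than the explicit solution \eqref{cp.def} of the rank-$2$ linear system, and you fold the paper's three support-location cases into a single sweep with a one-time degeneracy.
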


By the nonsingularity of any $N\times N$ submatrices of the matrix $\Phi$ in  \eqref{Phi.def}, there are at least $N$ distinct elements
$\gamma_1, \ldots, \gamma_N$ contained in $X$ such that
\eqref{phinonzero.cond} holds. Similarly there are
at least $N$ distinct elements
$\gamma_1^*, \ldots, \gamma_N^*\in X$ satisfying  \eqref{phinonzero*.cond}.
Therefore  \eqref{phinonzero.cond} and \eqref{phinonzero*.cond} hold for some $\Gamma, \Gamma^*\subset X$.

  The requirements
\eqref{phinonzero.cond} and \eqref{phinonzero*.cond} in Theorem \ref{realcase.thm}
 are met for any subsets $\Gamma, \Gamma^*\subset X$, provided that $\phi$ is a refinable function  with its symbol satisfying \eqref{rip.con} \cite{goodman92}.
 Therefore as an application of Theorem \ref{realcase.thm}, we have the following result for spline signals.

 \begin{corollary}\label{bsplinerealcase.cor}
Let $X$ contain $2N-1$ distinct points in $(0,1)$, and
 $\Gamma$ be a subset of $X$ of size $N$. Then  any nonseparable spline signal in $V(B_N)$
  is determined, up to a sign, from its phaseless samples taken
 on $(\Gamma+\Z) \cup X$.
\end{corollary}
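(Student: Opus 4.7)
The plan is to deduce Corollary \ref{bsplinerealcase.cor} directly from Theorem \ref{realcase.thm} by specializing $\phi=B_N$, $k_0=0$, and $\Gamma^{\ast}=\Gamma$; all of the structural work has already been done, so what remains is to check the three hypotheses of Theorem \ref{realcase.thm} and to identify the resulting sampling set with $(\Gamma+\Z)\cup X$.

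First, I would verify that the matrix $\Phi$ in \eqref{Phi.def} built from $B_N$ is of full spark. The B-spline $B_N$ satisfies the refinement equation \eqref{refinement.def} with symbol $2^{1-N}(1+z)^N=(1+z)Q(z)$, where $Q(z)=2^{1-N}(1+z)^{N-1}$ has strictly positive coefficients; this is exactly condition \eqref{rip.con}. Hence by the Goodman-type result recalled immediately after \eqref{rip.con}, $\Phi$ is of full spark for any choice of $2N-1$ distinct points $x_1,\ldots,x_{2N-1}\in(0,1)$. This is the same step that was used to deduce Corollary \ref{splineuniform.cor} from Theorem \ref{realX.thm}, so it is already established.

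Next, I would check conditions \eqref{phinonzero.cond} and \eqref{phinonzero*.cond} for $\Gamma^{\ast}=\Gamma$. Because $B_N$ is strictly positive on the open interval $(0,N)$ and $\Gamma\subset X\subset(0,1)\subset(0,N)$, we get $B_N(\gamma_s)>0$ for each $\gamma_s\in\Gamma$, which verifies \eqref{phinonzero.cond}. Setting $\gamma_s^{\ast}=\gamma_s$ gives $\gamma_s^{\ast}+N-1\in(N-1,N)\subset(0,N)$, so $B_N(\gamma_s^{\ast}+N-1)>0$, which verifies \eqref{phinonzero*.cond}. Note that $\#\Gamma=N$ as assumed, so $\Gamma^{\ast}=\Gamma$ also has the correct cardinality required by Theorem \ref{realcase.thm}.

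Finally, I would match the sampling sets. With $k_0=0$ and $\Gamma^{\ast}=\Gamma$, Theorem \ref{realcase.thm} reconstructs any nonseparable signal of $V(B_N)$ up to a sign from its phaseless samples on $X\cup(\Gamma+\Z_+)\cup(\Gamma+\Z_-)$. Since $\Gamma\subset X$, one has
\[
(\Gamma+\Z)\cup X \;=\; \Gamma\cup(\Gamma+\Z_+)\cup(\Gamma+\Z_-)\cup X \;=\; X\cup(\Gamma+\Z_+)\cup(\Gamma+\Z_-),
\]
so the two sampling sets coincide, completing the deduction. There is no substantive obstacle in this argument; it is essentially a careful bookkeeping exercise. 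The only things one must be mindful of are choosing $\Gamma^{\ast}=\Gamma$ (legitimate because positivity of $B_N$ on $(0,N)$ simultaneously gives both non-vanishing conditions) and absorbing the shift $\Gamma+\{0\}$ into $X$ when combining the positive, negative and zero integer translates.
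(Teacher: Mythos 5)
Your proposal is correct and follows essentially the same route as the paper: the corollary is obtained by applying Theorem \ref{realcase.thm} with $\Gamma^{\ast}=\Gamma$, using the refinability of $B_N$ with symbol $2^{1-N}(1+z)^N$ satisfying \eqref{rip.con} to get full spark of $\Phi$ and the non-vanishing conditions \eqref{phinonzero.cond}--\eqref{phinonzero*.cond}. The only (harmless) difference is that you verify the non-vanishing conditions directly from the strict positivity of $B_N$ on $(0,N)$, whereas the paper cites the general result for refinable functions with symbol of the form \eqref{rip.con}.
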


\section{Stability of phase retrieval}
\label{stable.section}

%Phase retrievalReconstruction from noisy data is  quite a challenge in phase retrieval.
%Phaseless samples in lots of engineering application are often corrupted.
Stability of phase retrieval is  of central importance, as
phaseless samples in lots of engineering applications are often corrupted.
In this section, we establish the stability of phase retrieval
%a  nonseparable signal
%$ f$
 in a shift-invariant space, when
its phaseless samples taken on
the  set $Y_L$
  are  corrupted by  additive noises
 ${\pmb \epsilon}= (\pmb\epsilon(y))_{y\in Y_L}$,
 \begin{equation}\label{noisedata.hat1}
z_{\pmb\epsilon}(y )=|f(y)|^2+{\pmb\epsilon}(y), \  y\in Y_L,
 \end{equation}
where  $Y_L$ is given in
\eqref{noise.samp} for an odd integer $L$, and
${\pmb \epsilon}$ has the noise level $$|{\pmb \epsilon}|=\sup\{|\pmb\epsilon(y)|:\ y\in Y_L\}.$$

For $L=1$, it follows from Theorem \ref{realX.thm}  that nonseparable signals in $V(\phi)$ can be recovered, up to a sign,  from their  exact phaseless samples on $Y_L$.  By Theorem \ref{realcase.thm},  nonseparable signals with finite duration are  determined, up to a sign,  from their
exact phaseless samples on $Y_L$ with sufficiently large $L$.

\smallskip

To present an algorithm for phase retrieval in a noisy environment, we introduce four auxiliary
functions.
 Let $X$, $\Gamma$ and $\Gamma^\ast$ be as in Theorem \ref{realcase.thm}. Define
\begin{equation}\label{gGamma.def}
h_1({\bf e})=\frac{\left |\hskip-0.05in\begin{array}{cc}
 \sum_{n=1}^N |\phi(\gamma_n)|^2   & \hskip-0.05in\sum_{n=1}^N   \phi(\gamma_n) e(n)\\
\sum_{n=1}^N \phi(\gamma_n) e(n) &  \hskip-0.05in\sum_{n=1}^N  |e(n)|^2
\end{array}\hskip-0.05in\right|}
{\sum_{n=1}^N |\phi(\gamma_n)|^2}
\end{equation}
and
\begin{equation}\label{gGammastar.def}
 h_1^\ast ({\bf e})=\frac{\left |\hskip-0.05in\begin{array}{cc}
 \sum_{n=1}^N |\phi(\gamma_n^{\ast\ast})|^2   & \hskip-0.05in\sum_{n=1}^N   \phi(\gamma_n^{\ast\ast}) e(n)\\
\sum_{n=1}^N \phi(\gamma_n^{\ast\ast}) e(n) &  \hskip-0.05in\sum_{n=1}^N  |e(n)|^2
\end{array}\hskip-0.05in\right|}
{\sum_{n=1}^N |\phi(\gamma_n^{\ast\ast})|^2}, \end{equation}
where ${\bf e}=(e(1),\ldots,e(N))\in \R^N$ and $\gamma_n^{\ast\ast}=\gamma^*_n+N-1, 1\le n\le N$.  Define
\begin{equation}\label{gGamma2.def}
h_2({\bf e}_1, {\bf e}_2)=
\frac{\left |\hskip-0.05in\begin{array}{cc}
 \sum_{n=1}^N |\phi(\gamma_n)|^2   & \hskip-0.05in\sum_{n=1}^N  e_2(n)\\
\sum_{n=1}^N \phi(\gamma_n) e_1(n) &  \hskip-0.05in\sum_{n=1}^N  \frac{e_1(n) e_2(n)}{\phi(\gamma_n)}
\end{array}\hskip-0.05in\right|}{\sum_{n=1}^N |\phi(\gamma_n)|^2}\end{equation}
and
\begin{equation}\label{gGamma2star.def}
h_2^*({\bf e}_1, {\bf e}_2)=\frac{\left |\hskip-0.05in\begin{array}{cc}
 \sum_{n=1}^N |\phi(\gamma_n^{\ast\ast})|^2   & \hskip-0.05in\sum_{n=1}^N  e_2(n)\\
\sum_{n=1}^N \phi(\gamma_n^{\ast\ast}) e_1(n) &  \hskip-0.05in\sum_{n=1}^N  \frac{e_1(n) e_2(n)}{\phi(\gamma_n^{\ast\ast})}
\end{array}\hskip-0.05in\right|}{\sum_{n=1}^N |\phi(\gamma_n^{\ast\ast})|^2},\end{equation}
where ${\bf e}_i=(e_i(1),\ldots,e_i(N))\in \R^N$,  $i=1, 2$.

\medskip

Take a threshold  $M_0\ge 0$, we propose
%t $b=\|\Phi^{-1}\|^2 \sum_{k=1}^{N-1} |c(\tilde L+ k)|^2$.
 the following  algorithm
 to construct an approximation
 \begin{equation}\label{fepsilon.def}
 f_{\pmb \epsilon}(t)=\sum_{k\in \Z} c_{\pmb \epsilon}(k) \phi(t-k)\in V(\phi)\end{equation}
of the original signal
\begin{equation}\label{signal.def}
 f=\sum_{k\in\Z}c(k)\phi(x-k)\in V(\phi),
 \end{equation}
 when only  its noisy phaseless samples
 %$z_{\pmb\epsilon}(y ), y\in Y_L$,
  in \eqref{noisedata.hat1}
 are available.

%The above MEPS algorithm for phase retrieval in a noisy environment contains
\begin{itemize}

\item[{(i)}] For any $k'\in \Z$, %$1\le l\le \frac{L-1}{2}$ %we %use noisy phaseless samples $y_{\pmb\epsilon}(m,k ), 0\le m\le 2N-1$, to
we obtain an approximation
\begin{equation}\label{c.epsilon.0.def}
{\bf c}_{\pmb \epsilon, k'}^0=(c_{\pmb \epsilon, k'}^0(k))_{k\in \Z},
\end{equation}
 of the original amplitude vector $\pm {\bf c}$ on $[k'L-N+1, k'L]$ as follows.
Initialize $c_{\pmb \epsilon, k'}^0(k)=0$  for all $k\le k'L-N$ and $k\ge k'L+1$, and
let $c_{\pmb \epsilon, k'}^0(k), k'L-N+1\le k\le k'L$, be  solutions of the  minimization problem
\begin{equation} \label{phaselessminimization.1}
\hskip-0.1in\min \sum_{m=1}^{2N-1}  \Bigg|\Big|\sum_{k=k'L-N+1}^{k'L} c(k) \phi(x_{m,k'}-k)\Big|-\sqrt{z_{\pmb\epsilon}(x_{m,k'} )}\Bigg|^2,
\end{equation}
%where $\delta_m\in \{-1, 1\}, 0\le m\le 2N-1$.
where $x_m\in X$ and  $x_{m,k'}=x_m+k'L, 1\le m\le 2N-1$, cf. (\cite{F78, gerchberg72, netrapalli15, qiu}). The support of ${\bf c}_{\pmb \epsilon, k'}^0$ is contained in $[k'L-N+1, k'L]$ for any $k'\in \Z$.
\item[{(ii)}] For any $k'\in \Z$, %$1\le l\le \frac{L-1}{2}$, %we %use noisy phaseless samples $y_{\pmb\epsilon}(m,k ), 0\le m\le 2N-1$, to
  define ${\bf c}_{\pmb \epsilon, k', 0}^{1/2}= {\bf c}_{\pmb \epsilon, k'}^{0}$
 and %then updating
\begin{equation}
\label{c1/2.defold} {\bf c}_{\pmb \epsilon, k', l}^{1/2}=(c_{\pmb \epsilon, k', l}^{1/2}(k))_{k\in \Z}, \ 1\le l \le (L-1)/2,\end{equation}
 recursively by the following:
 \begin{itemize}
 \item  If  $|h_1({\pmb \alpha}_{\pmb \epsilon, l})|\le  M_0$,
we obtain ${\bf c}^{1/2}_{\pmb\epsilon,k', l}$ from ${\bf c}^{1/2}_{\pmb\epsilon,k', l-1}$ with
 replacing its $(k'L+l)$-th  component by
 \begin{equation}\label{c.half.leastsquare}
c^{1/2}_{\pmb\epsilon,k', l}(k'L+l)=\frac{\sum_{n=1}^{N}|\phi(\gamma_n)|\sqrt{z_{\pmb\epsilon}(\gamma_n+k'L+l)}}{\sum_{n=1}^{N}|\phi(\gamma_n)|^2},
\end{equation}
 where ${\pmb \alpha}_{\pmb \epsilon, l}=(\alpha_{\pmb \epsilon, l}(1), \ldots$, $\alpha_{\pmb \epsilon, l}(N))$ and
 \begin{equation*}\label{alpha.def}
%\hskip-0.32in
\alpha_{\pmb\epsilon, l}(n)=\sum_{n'=1}^{N-1}c^{1/2}_{\pmb\epsilon,k', l-1}(k'L+l-n')\phi(\gamma_n+n')
\end{equation*}
for $1\le n\le N$.

 \item  If $|h_1({\pmb \alpha}_{\pmb \epsilon, l})|> M_0$,
 set
\begin{equation} \label{phaselessminimization.1n}
d_{\pmb\epsilon,k'}(k'L+l)=\frac{ h_2({\pmb \alpha}_{\pmb \epsilon, l}, {\pmb \eta}_{\pmb \epsilon, l})}
{ 2 h_1({\pmb \alpha}_{\pmb \epsilon, l})},
\end{equation}
 where ${\pmb \eta}_{\pmb \epsilon, l}=(\eta_{\pmb \epsilon, l}(1), \ldots, \eta_{\pmb \epsilon, l}(N))$
 is defined by
 $$\eta_{\pmb\epsilon, l}(n)=z_{\pmb\epsilon}(\gamma_n+k'L+l)-|\alpha_{\pmb\epsilon, l}(n)|^2, 1\le n\le N.$$
  For $1\le n\le N$, let
  \begin{eqnarray}\label{mewhalf.eq2}
  \delta_l(n) & \hskip-0.08in = & \hskip-0.08in  {\rm sgn} \Big(\sum_{m=1}^{N-1} c^{1/2}_{\pmb\epsilon,k', l-1}(k'L+l-m)\phi(\gamma_n+m)\nonumber\\
  &  \hskip-0.08in & \hskip-0.08in \quad + d_{\pmb\epsilon,k'}(k'L+l) \phi(\gamma_n)\Big)
 \end{eqnarray}
 and
 \begin{equation}\label{newhalf.eq3}
 \tilde {z}_{\pmb \epsilon}(k'L+l+\gamma_n)= \delta_l(n) \sqrt{z_{\pmb\epsilon}(\gamma_n+k'L+l)},
 \end{equation}
 where ${\rm sgn}(x)\in \{-1, 0, 1\}$ is the symbol of a real number $x$.
Now we obtain ${\bf c}^{1/2}_{\pmb\epsilon,k', l}$ from ${\bf c}^{1/2}_{\pmb\epsilon,k', l-1}$ by updating
its $(k'L+l-m)$-th terms by  % $ c^{1/2}_{\pmb\epsilon,k', l}(k'L+l-N+1), \ldots, c^{1/2}_{\pmb\epsilon,k'}(k'L+l)$ by
the unique solution $d(k'L+l-m), 0\le m\le N-1$, of the linear system,
  \begin{equation}  \label{newhalf.eq3}  %\label{phaselessminimization.1n}
 %\hskip-0.35in
 \sum_{m=0}^{N-1} d(k'L+l-m)\phi(\gamma_n+m)=\tilde {z}_{\pmb \epsilon}(k'L+l+\gamma_n), % 1\le n\le N.
\end{equation}
where $1\le n\le N$. One may verify that the support of ${\bf c}_{\pmb \epsilon, k',l}^{1/2}$ is contained in $[k'L-N+1, k'L+l]$ for any $k'\in \Z$ and $0\le l\le (L-1)/2$.

Finally define
\begin{equation}
\label{c1/2.def} {\bf c}_{\pmb \epsilon, k'}^{1/2}={\bf c}_{\pmb \epsilon, k', (L-1)/2}^{1/2}, \ k'\in \Z.\end{equation}

%Inductively, we define $c_{\pmb\epsilon}(k'L+l)$, $1\le l\le (L-1)/2$

\end{itemize}

\item[{(iii)}]  Set $\gamma_n^{\ast\ast}=\gamma^*_n+N-1, 1\le n\le N$.
Define
${\bf c}_{\pmb \epsilon, k', 0}^1= {\bf c}_{\pmb \epsilon, k'}^{1/2}$ and
${\bf c}_{\pmb \epsilon, k', l'}^1, 1\le l'\le (L-1)/2$, recursively:
\begin{itemize}
\item  If   $|h_1^\ast({\pmb \alpha}^\ast_{\pmb \epsilon, l'})|\le  M_0$,
then we update ${\bf c}_{\pmb \epsilon, k', l'}^1$ from
${\bf c}_{\pmb \epsilon, k', l'-1}^1$ by replacing its $(k'L+1-N-l')$-th term with
%$c^1_{\pmb\epsilon,k'}(k'L+1-N-l')$ by
  \begin{eqnarray} \label{III.11}
\hskip-0.35in  & \hskip-0.08in &  \hskip-0.38in c^1_{\pmb\epsilon,k', l'}(k'L+1-N-l')\nonumber \\
\hskip-0.35in & \hskip-0.08in  & \hskip-0.38in \ \  = \frac{\sum_{n=1}^{N}|\phi(\gamma^{\ast\ast}_n)|\sqrt{z_{\pmb\epsilon}(\gamma^{\ast\ast}_n+k'L-l')}}{\sum_{n=1}^{N}|\phi(\gamma^{\ast\ast}_n)|^2}, \end{eqnarray}
% if   $|h_1^\ast({\pmb \alpha}^\ast_{\pmb \epsilon, l'})|\le M_0$,
  where
 ${\pmb \alpha}^\ast_{\pmb \epsilon, l'}= (\alpha^\ast_{\pmb \epsilon, l'}(1),  \ldots$,   $\alpha^\ast_{\pmb \epsilon, l'}(N))$
is given by
\begin{equation*}%\label{beta.def}
\alpha^\ast_{\pmb\epsilon, l'}(n)=\sum_{n'=0}^{N-2}c^1_{\pmb\epsilon,k', l'-1}({k'L-l'-n'})\phi(\gamma_n^{\ast}+n').
\end{equation*}

\item If $|h_1^\ast({\pmb \alpha}^\ast_{\pmb \epsilon, l'})|> M_0$, set % $c^1_{\pmb\epsilon,k'}(k'L+1-N-l')$ by
  \begin{equation}\label{phaselessminimization.1n++}
\tilde d_{\pmb\epsilon,k'}(k'L+1-N-l')=\frac{ h_2^\ast({\pmb \alpha}^\ast_{\pmb \epsilon, l'}, {\pmb \eta}^\ast_{\pmb \epsilon, l'} )}{2h_1^\ast({\pmb \alpha}^\ast_{\pmb \epsilon, l'})},
\end{equation}
where
${\pmb \eta}^\ast_{\pmb \epsilon, l'}= (\eta^\ast_{\pmb \epsilon, l'}(1),  \ldots,   \eta^\ast_{\pmb \epsilon, l'}(N))$
is defined by
$$\eta^\ast_{\pmb\epsilon, l'}(n)=z_{\pmb\epsilon}(\gamma_n^{\ast\ast}+k'L-l')-|\alpha^\ast_{\pmb\epsilon, l'}(n)|^2, 1\le n\le N.$$
 For $1\le n\le N$, define
  \begin{eqnarray}\label{newone.eq2}
  \hskip-0.28in\tilde \delta_{l'}(n) & \hskip-0.08in = & \hskip-0.08in  {\rm sgn} \Big(\sum_{m=0}^{N-2} c^{1/2}_{\pmb\epsilon,k', l-1}(k'L-l'-m)\phi(\gamma_n^*+m)\nonumber\\
 \hskip-0.28in &  \hskip-0.08in & \hskip-0.08in \quad + \tilde d_{\pmb\epsilon,k'}(k'L+1-N-l') \phi(\gamma_n^{\ast\ast})\Big)
 \end{eqnarray}
 and
 \begin{equation}\label{newone.eq3-}
\hskip-0.18in \tilde {z}_{\pmb \epsilon}(k'L-l'+\gamma_n^*)= \tilde \delta_{l'}(n) \sqrt{z_{\pmb\epsilon}(k'L-l'+\gamma_n^*)}.
 \end{equation}
Now we get ${\bf c}^{1}_{\pmb\epsilon,k', l'}$ from ${\bf c}^{1}_{\pmb\epsilon,k', l'-1}$ by updating
its $(k'L-l'-m)$-th terms with  % $ c^{1/2}_{\pmb\epsilon,k', l}(k'L+l-N+1), \ldots, c^{1/2}_{\pmb\epsilon,k'}(k'L+l)$ by
the solution $d(k'L-l'-m), 0\le m\le N-1$, of the linear system:
  \begin{equation}  \label{newone.eq3}  %\label{phaselessminimization.1n}
  \sum_{m=0}^{N-1} d(k'L-l'-m)\phi(\gamma_n^*+m)=\tilde {z}_{\pmb \epsilon}(k'L-l'+\gamma_n^*), %\ 1\le n\le N.
\end{equation}
where $1\le n\le N$. From the above construction,  ${\bf c}_{\pmb \epsilon, k',l'}^{1}$ is supported in $[k'L-N+1-l', k'L+(L-1)/2]$ for any $k'\in \Z$ and $0\le l'\le (L-1)/2$.

Finally define the following approximation
\begin{equation}
\label{c1.def} {\bf c}_{\pmb \epsilon, k'}^{1}={\bf c}_{\pmb \epsilon, k', (L-1)/2}^{1}, \ k'\in \Z, \end{equation}
of the original amplitude vector $\pm {\bf c}$ on $[k'L-N-(L+1)/2, k'L+(L-1)/2]$.

\end{itemize}

\item[{(iv)}] Adjust  phases of ${\bf c}_{\pmb\epsilon, k'}^1, k'\in \Z$, by
\begin{equation} \label{phaselessminimization.2}
{\bf c}_{\pmb\epsilon, k'}^2= \tilde\delta_{k'} {\bf c}_{\pmb\epsilon, k'}^1,%\  k'\ne 0,
\end{equation}
where $\tilde \delta_{k'}\in \{-1, 1\}$ are so chosen that %$\delta_0=1$ and
\begin{equation} \label{phaselessminimization.3}
 \langle {\bf c}_{\pmb\epsilon, k'}^2, {\bf c}_{\pmb\epsilon, k'+1}^2\rangle\ge 0 \ {\rm for  \ all} \ k'\in \Z.
\end{equation}

\item[{(v)}] Define ${\bf c}_{\pmb\epsilon}=(c_{\pmb \epsilon}(k))_{k\in \Z}$ by
\begin{equation}  \label{phaselessminimization.4}
 c_{\pmb\epsilon}(k)={c}^2_{\pmb\epsilon, k'}(k)\end{equation}
\end{itemize}
where $ k'= \lfloor(2k+L-1)/(2L)\rfloor, k\in \Z$.

The above algorithm contains
four steps:
1) solving the {\bf M}inimization problem \eqref{phaselessminimization.1} to obtain local approximations
${\bf c}^0_{\pmb\epsilon,k'}, k'\in \Z$, of ${\bf c}$ on $k'L+[-N+1, 0]$, up to a phase $\delta_{k'}\in \{-1, 1\}$;
% see Lemma \ref{c0.lem};
 2)  {\bf  E}xtending ${\bf c}^0_{\pmb\epsilon,k'}$  to new local approximations ${\bf c}^1_{\pmb\epsilon,k'}, k'\in \Z$,
  of $\delta_{k'}{\bf c}$ on $k'L+[1-N-(L-1)/2, (L-1)/2]$; %, see Lemma \ref{c1/2.lem};
3) %using the nonzero overlapping property of the neighboring vectors ${\bf c}^1_{\pmb\epsilon,k'}, k'\in \Z$, to
adjusting {\bf P}hases of ${\bf c}^1_{\pmb \epsilon, k'}$ to obtain local approximations ${\bf c}^2_{\pmb\epsilon,k'}$
to either ${\bf c}$ or $-{\bf c}$ on  $k'L+[1-N-(L-1)/2, (L-1)/2]$; %, see Lemma \ref{c2.lem};
and 4) {\bf S}ewing ${\bf c}^2_{\pmb\epsilon,k'}, k'\in \Z$, together to get the approximation
${\bf c}_{\pmb\epsilon}$ to  either ${\bf c}$ or $-{\bf c}$. We call the algorithm
\eqref{c.epsilon.0.def}--\eqref{phaselessminimization.4}
 as the MEPS algorithm.

In  the noiseless sampling environment (i.e., ${\pmb \epsilon}={\bf 0}$),  we can set the threshold  $M_0=0$. Then
there exist signs $\delta_{k'}\in \{-1, 1\}, k'\in \Z$, and
$\delta\in \{-1, 1\}$ such that
$$
c_{{\bf 0}, k'}^0(k)=\delta_{k'} c(k),\ \  k\in k'L+[- N + 1, 0];
$$
%by \eqref{phaselessminimization.1};
$$c^1_{{\bf 0},k'}(k)= \delta_{k'} c(k),\ \  k\in k'L+\big[- N + 1-\frac{L-1}{2}, \frac{L-1}{2}\big];$$
%by  the proof of Theorem \ref{realother.thm};
and
 \begin{equation}  \label{phaselessminimization.4new}
 {c}_{{\bf 0}}^2(k)=\delta c(k), \ \  k\in \Z.\end{equation}
%by \eqref{phaselessminimization.3}, \eqref{phaselessminimization.4} and Theorem \ref{realX.thm}.
Therefore  the MEPS algorithm  provides a perfect reconstruction of  a nonseparable signal, up to a sign,   in the noiseless sampling environment.
 %in a shift-invariant space. %$V(\phi)$.
%where $\delta, \delta_{k'}\in \{-1, 1\}$ for all $k'\in \Z$.

For  the nonseparable signal $f\in V(\phi)$  in \eqref{signal.def}, set
\begin{equation} \label{mf.def} M_f= \sup_{K_-(f)-N+1<k<K_+(f)+1}
\frac{\sum_{l=-1}^{N-1} |c(k+l)|^2}{\sum_{l=0}^{N-2} |c(k+l)|^2}.
\end{equation}
 In the next theorem, we show that the MEPS algorithm \eqref{c.epsilon.0.def}--\eqref{phaselessminimization.4} provides, up to a sign,   a stable  approximation to the original
nonseparable signal $f$ in a noisy sampling environment.

 \begin{theorem}\label{realother.thm}
 Let $\phi$, $X$, $\Gamma$ and $\Gamma^\ast$  be as in Theorem \ref{realcase.thm}, % a  real-valued  continuous  function satisfying
%\eqref{supportlength.def2}, \eqref{N.eq} and \eqref{Phi.def},
%$X:=\{x_m, 1\le m\le 2N-1\}\subset (0, 1)$ be so chosen  that
%$\Phi$ in \eqref{Phi.def} is of full spark, %{\color{red} $f \in V(\Phi)$ in \eqref{vphi.def}}
$f(t)=\sum_{k=-\infty}^\infty c(k) \phi(t-k)$ in \eqref{signal.def} be a nonseparable real-valued signal
with
$S_f$ in \eqref{sf.def} being positive and $M_f$ in \eqref{mf.def} being finite,
and let  $f_{\pmb \epsilon}(t)=\sum_{k\in \Z} c_{\pmb \epsilon}(k)\phi(t-k)$  be the signal in \eqref{fepsilon.def} reconstructed
by the  MEPS algorithm \eqref{c.epsilon.0.def}--\eqref{phaselessminimization.4} with the threshold
\begin{equation}\label{M0.def}
M_0= \frac{S_f}{4\|(\Phi_N)^{-1}\|^{2}}.
\end{equation}
 If
\begin{equation}\label{realother.thm.eq2}
|{\pmb \epsilon}| \le
\frac{ S_f}
{2^7 N^3  \|(\Phi_N)^{-1}\|^2 (C_{f, \phi})^{4N+L-5}}, \end{equation}
 then
 there exists $\delta\in \{-1, 1\}$ such that
\begin{equation} \label{realother.thm.eq3}
 |c_{\pmb \epsilon}(k)-\delta c(k)|\le
 N\|(\Phi_N)^{-1}\| (C_{f, \phi})^{N-1+(L-1)/2} \sqrt{8|\pmb\epsilon|}
%  8N^2 \|(\Phi_N)^{-1}\|^2  (C_{f, \phi})^{(L-1)/2+N-1} |\pmb\epsilon|
% \|(\Phi_N)^{-1}\| (M_{f, \phi}+1)^{(L-1)/2}\sqrt{8 N |\pmb \epsilon|}
\end{equation}
for all $k\in \Z$, where
\begin{equation}\label{cfphi.def}
C_{f, \phi}=
  \frac{ 2^{8}   \|\Phi\|^4\|(\Phi_N)^{-1}\|^3 M_f }
{\min_{1\le n\le N} \{|\phi(\gamma_n)|, |\phi(\gamma_n^*+N-1)|\} }.
%\hskip-0.10in
% M_{f, \phi} =   %& \hskip-0.08in
%\frac{2^{10}  \|\Phi\|^6\|(\Phi_N)^{-1}\|^4 (M_f)^2}
%{\min_{1\le n\le N} \{|\phi(\gamma_n)|^{2}, |\phi(\gamma_n^*+N-1)|^{2}\}}.  % \big(8 (\sum_{n=1}^N |\phi(\gamma_n)|^2) \|(\Phi_N)^{-1}\|^2\big)^{-1},
\end{equation}
 \end{theorem}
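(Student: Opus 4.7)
The plan is to track noise propagation through the four stages of the MEPS algorithm and to show that the initial $O(\sqrt{|\pmb\epsilon|})$ error from the minimization stage survives, with per-step amplification by $C_{f,\phi}$, through the extensions. Fix a block index $k'\in\Z$. The minimization subproblem \eqref{phaselessminimization.1} on noiseless data has, up to a block sign $\delta_{k'}\in\{-1,1\}$, the unique solution $\delta_{k'}(c(k'L-N+1),\ldots,c(k'L))$ thanks to the full sparkness of $\Phi$; a routine perturbation estimate (pick any $N\times N$ submatrix and invert it) gives the initial bound $|c^0_{\pmb\epsilon,k'}(k)-\delta_{k'}c(k)|\le \|(\Phi_N)^{-1}\|\sqrt{8|\pmb\epsilon|}$ on $k'L+[-N+1,0]$.

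For the forward extension (ii), I plan to prove by induction on $l\in\{0,1,\ldots,(L-1)/2\}$ that $|c^{1/2}_{\pmb\epsilon,k',l}(k)-\delta_{k'}c(k)|\le \|(\Phi_N)^{-1}\|\sqrt{8|\pmb\epsilon|}(C_{f,\phi})^l$ throughout the growing support window. The new coefficient $c(k'L+l)$ must satisfy the $N$ quadratics
\begin{equation*}
2c(k'L+l)\,\alpha_{{\bf 0},l}(n)\phi(\gamma_n)+|\phi(\gamma_n)|^2 c(k'L+l)^2=\eta_{{\bf 0},l}(n),\quad 1\le n\le N,
\end{equation*}
and eliminating the squared term linearly yields the determinantal quotient $h_2/(2h_1)$ in \eqref{phaselessminimization.1n}. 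The dichotomy governed by $M_0$ separates the case where the local coefficient window has ``died out'' (small $|h_1|$, simple least-squares update \eqref{c.half.leastsquare} stable via \eqref{phinonzero.cond}) from the case where it is supported (large $|h_1|$, determinantal update stable because the denominator is bounded below by $M_0$). In the second case, the sign $\delta_l(n)$ chosen in \eqref{mewhalf.eq2} must match its noiseless counterpart; under the smallness hypothesis \eqref{realother.thm.eq2}, it does, so the $N\times N$ linear solve \eqref{newhalf.eq3} invokes $\|(\Phi_N)^{-1}\|$ cleanly. Using $M_f$ to bound the numerator quantities $\alpha,\eta$ and $M_0$ to bound the denominator from below produces the per-step amplification factor $C_{f,\phi}$ in \eqref{cfphi.def}. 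The backward extension (iii) is symmetric with $\gamma_n^{\ast\ast}$ in place of $\gamma_n$.

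For the phase-adjustment (iv) and sewing (v) steps, adjacent approximations ${\bf c}^1_{\pmb\epsilon,k'}$ and ${\bf c}^1_{\pmb\epsilon,k'+1}$ overlap on at least $N-1$ indices on which $\sum_l|c(k+l)|^2\ge S_f$ by the definition \eqref{sf.def} and nonseparability. Because the accumulated error per block is controlled by $\|(\Phi_N)^{-1}\|\sqrt{8|\pmb\epsilon|}(C_{f,\phi})^{N-1+(L-1)/2}$, the inner product in \eqref{phaselessminimization.3} has the same sign as $\delta_{k'}\delta_{k'+1}S_f$ under \eqref{realother.thm.eq2}. The inductive choice of $\tilde\delta_{k'}$ therefore collapses all block signs to a single global sign $\delta\in\{-1,1\}$, and \eqref{phaselessminimization.4} assembles an approximation obeying \eqref{realother.thm.eq3}.

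The hardest part will be the inductive bookkeeping in the extension step: a single sign flip in $\delta_l(n)$ or a misclassification of the $M_0$-dichotomy would destroy the subsequent linear solves and corrupt all later iterations. This is exactly why \eqref{realother.thm.eq2} carries the exponential factor $(C_{f,\phi})^{4N+L-5}$: the perturbation of $h_1(\pmb\alpha_{\pmb\epsilon,l})$ around $h_1(\pmb\alpha_{{\bf 0},l})$ grows geometrically with $l$, and must remain below $M_0=S_f/(4\|(\Phi_N)^{-1}\|^2)$ uniformly across the $(L-1)/2$ steps of both the forward and backward recursions, and uniformly across blocks $k'$. Calibrating the noise bound to guarantee this is where each of the constants $S_f$, $\|(\Phi_N)^{-1}\|$, $\|\Phi\|$, $M_f$ and $\min|\phi(\gamma_n)|$ enters.
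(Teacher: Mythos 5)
Your overall architecture is the same as the paper's: an $O(\sqrt{|\pmb\epsilon|})$ bound for the minimization stage (the paper's Proposition \ref{c0.pr}), an induction over the extension steps with per-step amplification by $C_{f,\phi}$ and the $M_0$-dichotomy distinguishing vanishing from non-vanishing local coefficient windows (Propositions \ref{c1/2.pr} and \ref{c1.pr}, using Lemma \ref{lemma.D1} to control $h_2$), and a positive-inner-product argument on the overlap to collapse block signs (Proposition \ref{c2.pr}). So the route is not different; the question is whether your induction closes.

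There is one concrete gap, and it sits exactly where you flag the difficulty. You assert that under \eqref{realother.thm.eq2} the sign $\delta_l(n)$ in \eqref{mewhalf.eq2} ``must match its noiseless counterpart,'' and you add that a single sign flip would destroy the subsequent linear solves. Neither claim is right as stated. No smallness condition on $|\pmb\epsilon|$ can force the recovered sign of $f(\gamma_n+k'L+l)$ to be correct when that sample value is itself zero or comparable to the accumulated error: the quantity inside ${\rm sgn}(\cdot)$ in \eqref{mewhalf.eq2} can then land on either side of zero. The paper resolves this by splitting on whether $|f(\gamma_n+k'L+l_0)|$ exceeds a threshold of order $(C_{f,\phi})^{l_0}\sqrt{N|\pmb\epsilon|}$ (condition \eqref{condition1}): above the threshold the sign is provably correct and \eqref{lasttildez.eq1} gives error $\le\sqrt{|\pmb\epsilon|}$; below it the sign may well be wrong, but then $|\tilde z_{\pmb\epsilon}(k'L+l+\gamma_n)-\delta_{k'}f(\gamma_n+k'L+l_0)|$ is bounded by twice the threshold (estimate \eqref{lasttildez.eq2}), so a flipped sign contributes only an $O((C_{f,\phi})^{l_0}\sqrt{|\pmb\epsilon|})$ error to the right-hand side of the linear system \eqref{newhalf.eq3} and the induction still closes. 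Without this case analysis your inductive step does not go through, because you cannot certify the hypothesis you are inducting on. A secondary, smaller issue: your initial bound from the minimization stage needs the pigeonhole observation that among the $2N-1$ sample points at least $N$ share a common relative sign (so that one $N\times N$ full-spark submatrix can be inverted against a single $\delta_{k'}$); ``pick any submatrix'' skips the step that produces the block sign $\delta_{k'}$ in the first place, and the resulting per-coefficient bound carries a $\sqrt{N}$ that your version drops.
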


The  requirement
\eqref{realother.thm.eq2}   on the noise level $|{\pmb \epsilon}|$ has exponential decay  about $L\ge 1$.
Our numerical simulations in the next section indicate that  for large $L$, the MEPS algorithm may fail to save phase of a nonseparable signal
(and hence reconstruct the signal approximately)
in a noisy sampling environment.

\section{Numerical simulations}
\label{simulation.section}

 In this section, we demonstrate the performance of the MEPS algorithm  \eqref{c.epsilon.0.def}--\eqref{phaselessminimization.4} to reconstruct
a nonseparable cubic spline signal
\begin{equation} \label{signal.sim}
f(t)=\sum_{k\in \Z} c(k) B_4(t-k)\end{equation}
with finite duration, where   $B_4$ is the cubic B-spline in \eqref{bspline.def}.
Our noisy phaseless samples are taken on $Y_L$,
\begin{equation}\label{data.sim}
z_{\pmb\epsilon}(y )=|f(y)|^2+{\pmb \epsilon}(y)*\|f\|_\infty^2\ge 0, \  y\in Y_L,\end{equation}
% obtained by convoluting the indicator function $\chi_{[0,1)}$ %on $[0, 1)$
% four times
%$$B_3(t)=\chi_{[0,1)}*\chi_{[0,1)}*\chi_{[0,1)}*\chi_{[0,1)},$$
 where $L$ is an odd integer,   ${\pmb\epsilon}(y)\in [-\varepsilon, \varepsilon], y\in Y_L$, are randomly selected with noise level $\varepsilon>0$, and $$Y_L=\Big ( \Big\{\frac{m}{8}, 1\le m\le 7\Big\}+L\Z\Big)\bigcup\Big(\Big\{\frac{1}{8},\frac{3}{8}, \frac{5}{8}, \frac{7}{8}\Big\}+\Z\Big)
%(\Z\backslash L\Z)\Big),
$$
 has sampling rate $4+3/L$.
In our simulations,
\begin{equation}\label{ck.sim}c(k)\in  [-1,1]\setminus[-0.1,0.1],\ K_-(f)\le k\le K_+(f), \end{equation} %in \eqref{signal.sim}
are randomly selected. %, $K_-=5$ and $K_+=32$.
Denote  the signal reconstructed by the MEPS algorithm from the noisy phaseless samples
\eqref{data.sim} by
\begin{equation} \label{fepsilon_dem}
f_{\varepsilon, L}(t)=\sum_{k\in \Z} c_{\varepsilon, L}(k) B_4(t-k).\end{equation}
Shown in  Figure \ref{splinesignal} are a nonseparable cubic spline signal  $f$ %(plotted on the left)
and  the  reconstruction error  $f_{\varepsilon, L}-f$, %(on the right)
which demonstrates the stability of the MEPS algorithm for phase retrieval of nonseparable cubic spline signals.
%from  noisy samples
%\eqref{simulation2}.
\begin{figure}[h]
\center
\begin{tabular}{cc}
\includegraphics[width=42mm, height=35mm]{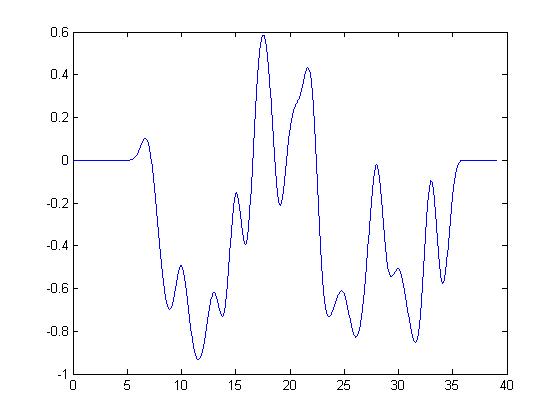} &
%\includegraphics[width=42mm, height=35mm]{amplitude.jpg}
%\\
%(a)&(b)\\
%\includegraphics[width=42mm, height=35mm]{signal_rec.jpg}&
%\includegraphics[width=42mm, height=35mm]{amplitude_rec.jpg}
%\\
%(c)&(d)\\
\includegraphics[width=42mm, height=35mm]{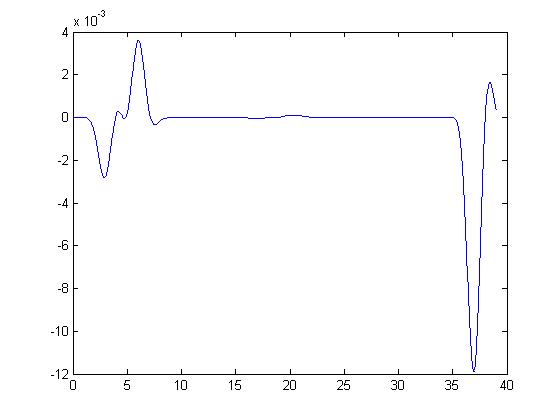} %&
%\includegraphics[width=42mm, height=35mm]{amplitude_diff.jpg}\\
%(e)&(f)
%\includegraphics[width=42mm, height=35mm]{signal2_diff.jpg}
\end{tabular}
\caption{Plotted on the left is a nonseparable cubic spline signal $f$ with
 $K_-(f)=5,  K_+(f)=32$ and $c(k), k\in \Z$, in \eqref{ck.sim}.
On the  right
is the difference between the above signal $f$   and the signal $f_{\varepsilon, L}$
reconstructed by the MEPS algorithm
from the noisy samples
\eqref{data.sim} with $\varepsilon=10^{-7}$ and $L=7$.  It is observed that the
 MEPS algorithm has higher accuracy to recover a signal  inside its support.
%On the bottom right
%is the difference between the signal $f^2$  in \eqref{signal.sim} and the reconstructed signal $f_{\varepsilon, L}^2$
%The maximum reconstruction error is  of size
%is a Demonstrate of reconstruction the signal $f$ with hat function from the noisy data $z_{\pmb \epsilon}$ with the noisy level $\delta=10^{-7}$: (a) original signal, (b) the amplitude of the signal, (c) reconstructed signal, (d) reconstructed amplitude by MESP algorithm, (e) the difference between the signal and reconstruction,  (f) the difference between the amplitude and reconstruction MEPS algorithm.
}
\label{splinesignal}
\end{figure}

Define a maximal reconstruction error of the MEPS algorithm by
  \begin{equation}\label{maximalerror.def}
  e(\varepsilon,L):=\min_{\delta\in \{-1, 1\}}
 \max_{k\in \Z} |c_{\varepsilon, L}(k)-\delta c(k)|.\end{equation}
%As in  the right hand side of the requirement \eqref{realother.thm.eq2} on the noise level $\varepsilon$ has exponential decay about $L\ge 0$, it is not guaranteed that
As the cubic B-spline $B_4$ is a nonnegative function satisfying %with the property that % and %its shifts form a partition of unity,
$$\sum_{k\in \Z} B_4(t-k)=1\ {\rm for \ all}\  t\in \R,$$
we have
$$\min_{\delta\in \{-1, 1\}}
 \max_{t\in \R}|f_{\varepsilon, L}(t)-\delta f(t)|\le  e(\varepsilon,L).$$ %{\rm for \ all} \ t\in \R.$$
For large odd integers $L$, the MEPS algorithm may not yield an approximation to the original signal in a noisy environment, as in Theorem \ref{realother.thm}
the stability requirement \eqref{realother.thm.eq2}  on the noise level $\varepsilon$ has exponential decay about $L\ge 1$.
 Our numerical simulations show that
for large  odd $L$, the MEPS algorithm  may fail to save phases of nonseparable cubic spline signals, but
its success rate  to save phases (and then to reconstruct signals approximately)  is still high  for large $L$. Presented in
Table \ref{iterationerror_revised.tab} is the success rate  after 500 trials
for different noisy levels $\varepsilon$ and extension lengths $L$ to recover  cubic spline signals $f$ in \eqref{signal.sim}
with
$c(k), k\in \Z$, in \eqref{ck.sim} and noisy samples
in \eqref{data.sim}.
%for different noisy levels $\varepsilon$ and extension lengths $L$
%and the average reconstruction error $e(\varepsilon,L)$
 Here the MEPS algorithm is considered to save the phase successfully if
 \begin{equation}\label{nonseperable.eq}
  e(\varepsilon,L)< 0.1. \end{equation}
  %holds.
  \begin{table}
\caption{\normalsize Success rate %of the MEPS algorithm to recover nonseparable cubic spline signals
for different noisy levels $\varepsilon$ and extension lengths $L$.}
\centering
\begin{tabular}{|c|c|c|c|c|c|c|c|}
\hline
\hline
  \backslashbox {$\varepsilon$}{$L$}
 & 7
&11 &15&23 &31 &47  \\
\hline
 $10^{-5}$& $0.3140$ & $0.2400$ &$0.1180$&$0.0500$&0.0180& $0.0040$ \\
\hline
 $10^{-6}$& $0.8440$  & $0.7780$ &0.7360&0.5980&0.5200& $0.4220$ \\
\hline
 $10^{-7}$& $0.9840$  & $0.9760$ &0.9660&0.9480&0.9340&  $0.9020$\\
\hline
 $10^{-8}$& $0.9980$ & $0.9960$ &1&0.9980& $0.9860$&0.9980\\
\hline
 $\le 10^{-9}$& $1$  & $1$ &1&1&1& 1\\
\hline
\end{tabular}
\label{iterationerror_revised.tab}
\end{table}
In the simulation,  a successful recovery %it follows from \eqref{nonseperable.eq}
implies that
$c_{\epsilon, L}(k)$ and $c(k), K_-(f)\le k\le K_+(f)$,  have same signs,
$$ c_{\varepsilon, L}(k) c(k)>0 \ {\rm for \ all} \ K_-(f)\le k\le K_+(f).$$
 The threshold  selected in \eqref{nonseperable.eq} for the maximal  reconstruction error $e(\varepsilon, L)$
 is less than
 $$\min_{K_-(f)\le k\le K_+(f)-N+2}\Big(\frac{1}{N-1} \sum_{l=0}^{N-2} |c(k+l)|^2\Big)^{1/2},$$
 which is similar to the quantity $S_f$ in \eqref{sf.def} to measure the distance of a nonseparable signal $f$ to the set of all separable signals in a shift-invariant space, cf. Theorem \ref{realX.thm}.

By \eqref{c.half.leastsquare}, \eqref{III.11} and Theorem \ref{realother.thm},
the maximal reconstruction error $e(\varepsilon, L)$ in  \eqref{maximalerror.def}
and  the reconstruction errors $|c_{\varepsilon, L}(k)-c(k)|, k\not\in [K_-(f), K_+(f)]$, outside the support region are about the order $\sqrt{\varepsilon}$.
Numerical simulations indicate that
the reconstruction errors $|c_{\varepsilon, L}(k)-c(k)|, K_-(f)+N\le k\le K_+(f)-N$,  are about the order $\varepsilon$, which is much smaller than
the maximal reconstruction error $e(\varepsilon, L)$, see Figure \ref{revised.fig}.
\begin{figure}[h]
\center
\begin{tabular}{cc}
\includegraphics[width=42mm, height=35mm]{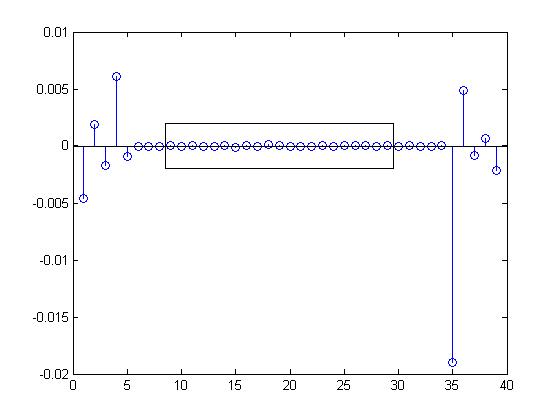} &
\includegraphics[width=42mm, height=35mm]{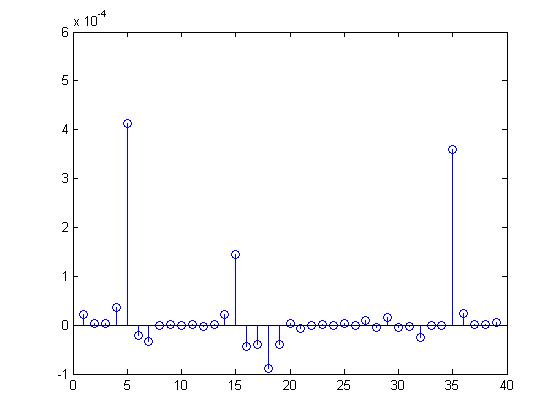}\\% &
%\includegraphics[width=42mm, height=35mm]{signal7and27.jpg} &
%\includegraphics[width=42mm, height=35mm]{magnitude7and27.jpg}\\
%\includegraphics[width=42mm, height=35mm]{signalsquare7and27.jpg} &
%\includegraphics[width=42mm, height=35mm]{magnitudesquare7and27.jpg}\\
%  %\label{reviseda.fig}
% &
%\includegraphics[width=42mm, height=35mm]{amplitude_diff_revised.jpg}  %\label{revisedb.fig}
\end{tabular}
\caption{Plotted on the left is the difference $c_{\varepsilon, L}(k)-c(k), 1\le k\le 39,$
between the reconstructed amplitudes %$c_{\varepsilon, L}$
 and original amplitudes in Figure 1, % with $L=7$ and $\varepsilon=10^{-7}$,
 while on the right
 is the squared difference $(c_{\varepsilon, L}(k))^2-(c(k))^2, 1\le k\le 39$. In the simulation, the maximal reconstruction error $e(\varepsilon, L)$ is $0.0190$, the maximal squared reconstruction error $e_2(\varepsilon, L)$ in \eqref{maximalerror2.def} is $4.1204\times 10^{-4}$,
 and the maximal reconstruction error $\max_{K_-(f)+N\le k\le K_+(f)-N} |c_{\varepsilon, L}(k)-c(k)|$
  inside the box
  is $1.5041\times 10^{-4}$. The maximal squared reconstructed error and the maximal reconstructed error inside the box are at the same order of
        $\|(\pmb \Phi_N)^{-1}\|\epsilon\approx 1.2390\times 10^{-4}$, cf. \eqref{realother.thm.eq3}.
      %  {\color{red} $\|(\pmb \Phi_N)^{-1}\|= 1.2390e+03$ and $\|\pmb \Phi\|=1.7404$}
%        which is about the reconstruction error .
 }
%On the bottom left is $e_2(\varepsilon, L)$ for signal $f$ and on the right is $e_2(\varepsilon, L)$ for $L=7$(red) and $L=27$(blue) with $\varepsilon=10^{-7}$.
% {\color{red} do not understand what you want}{\bf Cheng: Regenerate this figure as described in the text: one with boundary, the other is squared error, the original signal is the one in Figure 1., On the left is the difference between the original signal in Figure 1 and the reconstructed signal with L=27, while the on the right is the term $e_2(\varepsilon, L)$}}
\label{revised.fig}
\end{figure}

An alternative to  measure
the phase retrieval error is the following maximal squared reconstructed error,
  \begin{equation}\label{maximalerror2.def}
  e_2(\varepsilon,L):=
 \max_{k\in \Z} |(c_{\varepsilon, L}(k))^2- (c(k))^2|.\end{equation}
For small $\varepsilon>0$, it follows from Theorem \ref{realother.thm}  that
$$e_2(\varepsilon, L)\le 3
e(\varepsilon, L) \big(\max_{k\in \Z}|c(k)|\big)
\le C \big(\max_{k\in \Z}|c(k)|\big)\sqrt{\varepsilon},$$
where $C$ is a positive constant. The above upper bound estimate  for the measurement
$e_2(\varepsilon, L)$  should not be optimal, as our numerical simulations indicate that the above alternative measurement  $e_2(\varepsilon,L)$ is about the order $\varepsilon$, see
Figure \ref{revised.fig}.

The success rate of the MEPS algorithm could have
 significant improvement  if the phaseless samples
$|f(y)|, y\in Y_L\cap [K_-(f), K_+(f)]$, of the original signal $f$ in \eqref{signal.sim} are a distance away from the origin. Such a requirement holds
if the %nonseparable %cubic spline
cubic spline signal $f$ has only ``one" phase,  i.e.,
$c(k)>0$ for all $K_-(f)\le k\le K_+(f)$.
 Presented in Table \ref{iterationerror_positive.tab}
is the success rate of the MEPS algorithm to recover the positive phase of nonseparable  cubic spline signals $f$ in \eqref{signal.sim}
with
\begin{equation}\label{ckpositive.sim}c(k)\in  [0.1,1] \ {\rm for \ all} \  K_-(f)\le k\le K_+(f) \end{equation}
after 500 trails, where the noise level $\varepsilon$, the extension length $L$ and
the success threshold are the same as in Table  \ref{iterationerror_revised.tab}.
%for different noisy levels $\varepsilon$ and extension lengths $L$
%and the average reconstruction error $e(\varepsilon,L)$
 \begin{table}
\caption{Success rate %of the MEPS algorithm to recover nonseparable cubic spline signals
 to recover cubic spline %nonseparable
 signals with ``one" phase}
\centering
\begin{tabular}{|c|c|c|c|c|c|c|}
\hline
\hline
  \backslashbox {$\delta$}{L}
 & 7
&11 &15&23 &31 &47 \\
%\hline
 %$10^{-3}$& $1\%$ & $0\%$ &$\%$&$\%$&\%& $\%$ &\%\\
\hline
 $10^{-4}$& 0.6640  & $0.6300$ &0.5740&0.5420&0.5000& 0.4040\\
\hline
 $5*10^{-5}$& 0.8840 & $0.9080$ &0.8600&0.8600&0.8540&  0.8520\\
\hline
 $\le 10^{-5}$& 1 & $1$ &1&1&1& 1\\
\hline
\hline
\end{tabular}
\label{iterationerror_positive.tab}
\end{table}
Under the ``one" phase assumption on the original signal, the extension parts (ii) and (iii) in the MEPS algorithm
does not propagate noises at each extension step, because
 for all $1\le l\le (L-1)/2$,
signs  of
$\tilde z_{\epsilon}(k'L+l+\gamma_n)$ in \eqref{newhalf.eq3} are the same as $f( k'L+l+\gamma_n), 1\le l\le (L-1)/2$, and
similarly for $1\le l'\le (L-1)/2$,  signs  of
$\tilde z_{\epsilon}(k'L-l'+\gamma_n^*)$ in \eqref{newone.eq3-} are the same as $f( k'L-l+\gamma_n^*), 1\le n\le N$. %cf. the  Gerchberg-Saxton algorithm \cite{gerchberg72}.

\section{Conclusions} % and further study}

Let ${\mathcal S}(\phi)$ be the set of all real-valued signals in a shift-invariant space $V(\phi)$
that can, up to a sign,  be reconstructed from its  magnitude on the whole line.
For a compactly supported continuous generator $\phi$,
${\mathcal S}(\phi)$ is neither the whole linear space $V(\phi)$ nor its convex subset.
 This %phenomenon for phase retrievability
 is a different phenomenon from  the bandlimited case,  for which it is observed that all bandlimited signals  can, up to a sign, be reconstructed from its  magnitude on the whole line (\cite{T11, shenoy16}).

Phase retrieval of signals in a shift-invariant space is a sampling and reconstruction problem.
The set ${\mathcal S}(\phi)$ contains all nonseparable signals, which could be
determined from its phaseless sampling on some sets with sampling rate large than the support length of  the generator $\phi$.
%It is also observed that for some continuous generators $\phi$,
%all signals in  ${\mathcal S}(\phi)$ are
%determined from their phaseless sampling taken on a set with sampling rate one,  which is smaller than the sampling rate   required for recovering bandlimited signals (\cite{T11}).

Many algorithms have been introduced to solve a phase
retrieval problem in the finite-dimensional setting.
 The MEPS algorithm is proposed to solve the infinite-dimensional
 phase retrieval problem for nonseparable signals in  a shift-invariant space.
  The MEPS algorithm can be implemented in a distributed manner (\cite{bertsekasbook1989, chengsds}), and it is stable against bounded sampling noises.

\begin{appendices}
%\section{Proofs}
\section{Proof of Theorem \ref{separable.tm}}
($\Longrightarrow$) \ Suppose, on the contrary, that there exist nonzero signals $f_1, f_2\in V$ such that $f=f_1+f_2$
and $f_1f_2=0$.  Set $g=f_1-f_2\in V$. Then $g\ne \pm f$ and  $|g|=|f|=\sqrt{|f_1|^2+|f_2|^2}$. This is a contradiction.

($\Longleftarrow$) \ Let $g$ be a signal in $V$ with $|g|=|f|$.  Set
$ g_1:=(f+g)/2\in V$ and $g_2:=(f-g)/2\in V$.
Then $f=g_1+g_2$ and $g_1g_2=0$. This together with  nonseparability of the signal $f$ implies that
    either $g_1 \equiv 0$ or $g_2\equiv 0$. Hence %either $h \equiv -1$ or $h\equiv 1$,  which implies that
    $g$ is either $-f$ or $f$. This completes the proof. % of the sufficiency.
%    Then it suffices to prove that
%either $g=f$ or $g=-f$.
%The proof is complete.
% Therefore $f$ is determined by its amplitude $|f|$ up to a global phase.

\section{Proof of Theorem \ref{realX.thm}}
%In the proof, we  %consider nonzero signals $f(t)=\sum_{k=-\infty}^{+\infty}c_k\phi(t-k)$  in $V(\phi)$,
 We divide the proof into three implications iii)$\Longrightarrow$i), i)$\Longrightarrow$ii) and ii)$\Longrightarrow$iii).

%\vskip 5mm

iii)$\Longrightarrow$i): \ The implication follows immediately from Theorem \ref{separable.tm}.

i)$\Longrightarrow$ii): \
Set  $K_{\pm}=K_{\pm}(f)$. %We establish the implication by indirect proof.
For $K_-+1-N<k<K_-+1$ or $K_++1-N<k< K_++1$, the conclusion $\sum_{l=0}^{N-2}|c(k+l)|^2\ne 0$  follows from  the definitions of $K_-$ and $K_+$.
Then it remains to  establish the statement ii) for $K_-< k< K_++2-N$. %  by indirect proof.
Suppose, on the contrary, that
%\begin{equation}\label{ak00.eq}
%c_{k_0+l}=0, \ \ 0\le l\le N-2\end{equation}
\begin{equation}\label{zeroassumption} \sum_{l=0}^{N-2}|c(k_1+l)|^2=0\end{equation} %\ \ {\rm for \ all} \   0\le l\le N-2$$
for some $K_-< k_1< K_+-N+2$.
 Set  \begin{equation*} %\label{f_1.def}
 f_1(t):=\sum_{l=K_-}^{k_1-1}c(l)\phi(t-l) \ \  {\rm and} \ \  f_2(t):=\sum_{l=k_1+N-1}^{K_+}c(l)\phi(t-l). \end{equation*}
Then
 \begin{equation} \label{ff1f2}
  f=f_1+f_2\ \ {\rm and} \ \ f_1f_2=0 \end{equation}
  by \eqref{zeroassumption} and
%  \begin{equation} \label{f1f2zero}
%  \end{equation}
the observation that $f_1$ and $f_2$ are  supported in $(-\infty, k_1+ N-1]$ and $[k_1+N-1,\infty)$ respectively.
%By  the definitions of $K_{\pm}(f)$  and supporting property of the generator $\phi$,
 Clearly, $f_1$ and $f_2$ are nonzero signals in $V(\phi)$. This together with
  \eqref{ff1f2} %and \eqref{f1f2zero}
  implies that $f$ is separable, which  contradicts to the assumption i).

%\vskip 5mm
ii)$\Longrightarrow$iii): \ To prove the implication, we need a lemma.
\begin{lemma}\label{phaseretrievalL.lem}
Let $\phi$ and $X$ be  as in Theorem \ref{realX.thm}.
 Then   for any $l\in \Z$ and  signal $g(t)=\sum_{k=-\infty}^{\infty}d(k)\phi(t-k)\in V(\phi)$,
  coefficients
$d(k), l-N+1\le k\le l$, are completely determined, up to a sign, by phaseless samples $|g(x_m+l)|, \ x_m\in X$, of the signal $g$.
\end{lemma}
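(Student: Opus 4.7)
The plan is to reduce the lemma to a finite-dimensional real-valued phase retrieval problem in $\R^N$ and then invoke the complement-property characterization of phase retrieval (Balan--Casazza--Edidin) using the full-spark hypothesis on $\Phi$.

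First, since $\phi$ vanishes outside $[0, N]$ by \eqref{supportlength.def2} and $x_m \in (0,1)$, the only integers $k$ for which $\phi(x_m + l - k) \neq 0$ is possible are $l - N + 1 \leq k \leq l$. Consequently, each sample at $x_m + l$ depends on exactly the $N$ coefficients $d(l-N+1), \ldots, d(l)$, and writing $\mathbf{d} := (d(l), d(l-1), \ldots, d(l-N+1))^T \in \R^N$ gives
\begin{equation*}
g(x_m + l) \;=\; \sum_{n=0}^{N-1} d(l - n)\, \phi(x_m + n) \;=\; (\Phi \mathbf{d})_m, \quad 1 \leq m \leq 2N-1,
\end{equation*}
where $\Phi$ is the $(2N-1) \times N$ matrix in \eqref{Phi.def}. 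Hence it suffices to show that any $\mathbf{d} \in \R^N$ is determined, up to a global sign, by the $2N-1$ magnitudes $|(\Phi \mathbf{d})_m|$.

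Next, I would suppose $\mathbf{d}' \in \R^N$ also satisfies $|\Phi \mathbf{d}'| = |\Phi \mathbf{d}|$ componentwise, so there exist signs $\epsilon_m \in \{-1, +1\}$ with $(\Phi \mathbf{d}')_m = \epsilon_m (\Phi \mathbf{d})_m$. Splitting the row indices according to $S := \{m : \epsilon_m = 1\}$ and $S^c := \{1, \ldots, 2N-1\} \setminus S$, the rows $\{\Phi_m\}_{m \in S}$ annihilate $\mathbf{d} - \mathbf{d}'$, while the rows $\{\Phi_m\}_{m \in S^c}$ annihilate $\mathbf{d} + \mathbf{d}'$.

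The crucial ingredient is the full-spark hypothesis on $\Phi$: any $N$ of its rows are linearly independent and therefore span $\R^N$. Since $|S| + |S^c| = 2N - 1$, a pigeonhole step forces $\max(|S|, |S^c|) \geq N$, and the corresponding span argument forces $\mathbf{d}' = \mathbf{d}$ in the first case and $\mathbf{d}' = -\mathbf{d}$ in the second. This is the complement property applied at the critical sampling count $M = 2N - 1$. There is no real obstacle: the statement \eqref{Phi.def} about nonsingularity of all $N \times N$ submatrices is exactly what is needed to make the pigeonhole/span dichotomy immediate, and all three steps above are short.
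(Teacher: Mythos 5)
Your proof is correct and follows essentially the same route as the paper: the paper reduces the lemma to the identity $g(x_m+l)=\sum_{k=l-N+1}^{l}d(k)\phi(x_m+l-k)$ and then cites \cite[Theorem 2.8]{BCE06}, which is precisely the complement-property/full-spark argument for $2N-1$ measurements in $\R^N$ that you spell out explicitly. The only difference is that you unpack the cited pigeonhole-and-span argument rather than invoking it by reference.
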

The above lemma   follows immediately from \cite[Theorem 2.8]{BCE06} and the observation that
 $$g(x_m+l)=\sum_{k=l-N+1}^{l}d(k)\phi(x_m+l-k), \ \ x_m\in X.$$

Take a particular integer $K_--1<k_0<K_++1$ with $c(k_0)\ne 0$. Without loss of generality, we assume that
\begin{equation}\label{ko.assumption1} c(k_0)>0,\end{equation}
otherwise replacing $f$ by $-f$.

Using \eqref{ko.assumption1} and applying Lemma \ref{phaseretrievalL.lem} with $g$  and $l$ replaced by $f$ and  $k_0$ respectively, we conclude that
$c({k_0-N+1}), \cdots, c({k_0})$ are completely determined by phaseless samples $|f(X+k_0)|$ of the signal $f$ on $X+k_0$.
Now we prove that
\begin{equation}\label{B5.eq}
c(k), \ k\le k_0,\ {\rm  are\  determined\ by} \  |f(X+k)|,\ k\le k_0\end{equation}
by induction.
Inductively we assume that  $c({k}), k_0-p-N+1\le k\le k_0$, are determined from $|f(X+k)|, k_0-p\le  k\le k_0$.
The inductive proof is complete if $k_0-p-N+1\le K_-$. Otherwise $k_0-p-N+1>K_-$ and
\begin{equation}\label{inductive.hyp} \sum_{l=0}^{N-2} |c({k_0-p-N+l+1})|^2\ne 0\end{equation}
 by the assumption ii). Applying Lemma \ref{phaseretrievalL.lem} with $g$
and $k_0$ replaced by $f$ and  $k_0-p-1$ respectively, we conclude that $c({k_0-N-p}), \cdots, c({k_0-p-1})$ are determined by
$|f(X+k_0-p-1)|$ up to a global phase.  This together with \eqref{inductive.hyp} and the inductive hypothesis
implies that $c({k_0-N-p}), \cdots, c({k_0-p-1})$  is determined by
$|f(X+k)|, k_0-p-1\le k\le k_0$. Thus the inductive argument can proceed.

Using the similar argument, we can show that
\begin{equation}\label{B7.eq}c(k),\ k\ge  k_0,\ {\rm  are\  determined\ by} \  |f(X+k)|, k\ge k_0.\end{equation}
 Combining \eqref{B5.eq} and \eqref{B7.eq} completes the proof.

\section{Proof of Theorem \ref{necessary.thm}}
By \eqref{samplingrate.def}  it suffices to prove that
\begin{equation*}\label{necessary.pf.eq1} \#(I\cap [a, b])\ge b-a-N+1\end{equation*}
for all integers $a$ and $b$ with $b-a\ge N$. Suppose, on the contrary, that
\begin{equation}\label{necessary.pf.eq2} \#(I\cap [a_0, b_0])<b_0-a_0-N+1\end{equation}
for some integers $a_0$ and  $b_0$. Let
\begin{equation*}
{\mathcal N}=\Big\{ f(t):=\sum_{k=a_0}^{b_0-N} c(k) \phi(t-k), \ f(y)=0 \ {\rm for \ all}\ y\in I\Big\}.
\end{equation*}
Then ${\mathcal N}$ contains some nonzero signals in $V(\phi)$, because
any signal of the form $\sum_{k=a_0}^{b_0-N} c(k) \phi(t-k)$ is supported in $[a_0, b_0]$, and
the homogenous linear system
$$\sum_{k=a_0}^{b_0-N} c(k) \phi(y-k)=0, \ \ y\in I\cap [a_0, b_0]$$
of size $(\#(I\cap [a_0, b_0]))\times (b_0-a_0-N+1)$ has a nontrivial solution by \eqref{necessary.pf.eq2}.

Take a nonzero signal $f\in {\mathcal N}$ with minimal support length.   By the assumption on the set $I$, it must be separable
as it is a nonzero signal having zero  magnitude measurements on $I$. Therefore by  Theorem \ref{realX.thm}
there exist nonzero signals $f_1$ and $f_2\in V(\phi)$ and an integer $k_0\in (a_0, b_0)$ such that $f_1$ vanishes outside $[k_0, b_0]$, $f_2$  vanishes outside $[a_0, k_0]$ and
$ f=f_1+f_2$.  This implies that both $f_1$ and $f_2$ are nonzero  signals in ${\mathcal N}$, which contradicts to the assumption that
$f\in {\mathcal N}$ has minimal support length.

\section{Proof of Theorem \ref{realcase.thm}}
To prove Theorem \ref{realcase.thm}, we need a technical lemma.
%begin the proof, we need the following lemma.

\begin{lemma}\label{realcase.lem}
Let $\gamma_n$ and $\gamma_n^\ast$, $ 1\le n\le N$, and $\phi$ be  as in Theorem \ref{realcase.thm}. Then
\begin{equation}\label{realcase.lem.eq1}
\left(\begin{array}{ccc} \phi(\gamma_1)  & \ldots & \phi(\gamma_N)\\
 \displaystyle{\sum_{l=0}^{N-2}}  a(l) \phi(\gamma_{1}+l+1) & \ldots &  \displaystyle{\sum_{l=0}^{N-2}}  a(l) \phi(\gamma_{N}+l+1)\end{array}\right)
\end{equation}
and
\begin{equation}\label{realcase.lem.eq2}
\left(\begin{array}{ccc} \phi(\gamma^\ast_1+N-1)  & \ldots & \phi(\gamma^\ast_N+N-1)\\
 \displaystyle{\sum_{l=0}^{N-2}}  a(l) \phi(\gamma^\ast_{1}+l) & \ldots &  \displaystyle{\sum_{l=0}^{N-2}}  a(l) \phi(\gamma^\ast_{N}+l)\end{array}\right)
\end{equation}
have rank $2$ for any nonzero vector $(a(0), \ldots, a(N-2))$. % and integer$1\le n_0\le N$, there exist $n_1, n_2\ne n_0$.
 \end{lemma}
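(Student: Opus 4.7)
The plan is to prove the rank-$2$ conclusion for each of the two matrices by contradiction, reducing in both cases to a homogeneous linear system governed by an $N\times N$ submatrix of $\Phi$ that is invertible by the full-sparkness hypothesis.

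For the matrix in \eqref{realcase.lem.eq1}, I would suppose that some nontrivial combination $(\mu,\lambda)\ne(0,0)$ of its two rows vanishes, i.e.\
\[
\mu\,\phi(\gamma_n)+\lambda\sum_{l=0}^{N-2}a(l)\,\phi(\gamma_n+l+1)=0,\qquad n=1,\ldots,N.
\]
The key observation is that the shifts $l+1$ appearing in the second row range over $1,\ldots,N-1$, which together with the shift $0$ in the first row cover \emph{exactly} the column index set $\{0,1,\ldots,N-1\}$ of the $N\times N$ submatrix $\bigl(\phi(\gamma_n+m)\bigr)_{1\le n\le N,\,0\le m\le N-1}$ of $\Phi$. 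Setting $b(0)=\mu$ and $b(m)=\lambda\,a(m-1)$ for $1\le m\le N-1$, the displayed equations become
\[
\sum_{m=0}^{N-1}b(m)\,\phi(\gamma_n+m)=0,\qquad n=1,\ldots,N.
\]
Full sparkness of $\Phi$ makes this submatrix nonsingular, so $b\equiv 0$; thus $\mu=0$ and $\lambda\,a(m-1)=0$ for each $m$. Since $(a(0),\ldots,a(N-2))\ne 0$, we get $\lambda=0$, contradicting $(\mu,\lambda)\ne(0,0)$.

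The same strategy proves the rank-$2$ property of \eqref{realcase.lem.eq2}, with the roles of the smallest and largest shift interchanged: in the second row the shifts $l$ range over $0,\ldots,N-2$, while the first row contributes the shift $N-1$, so together they again exhaust $\{0,1,\ldots,N-1\}$, this time as columns of the $N\times N$ submatrix $\bigl(\phi(\gamma_n^*+m)\bigr)_{1\le n\le N,\,0\le m\le N-1}$ of $\Phi$. Setting $b(N-1)=\mu$ and $b(l)=\lambda\,a(l)$ for $0\le l\le N-2$ and invoking the nonsingularity of this submatrix closes the argument in exactly the same way.

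There is no substantive obstacle here; the only thing to get right is the bookkeeping that identifies the two shifted index sets with the column range $\{0,1,\ldots,N-1\}$ of an $N\times N$ submatrix of $\Phi$. Once that identification is in place, the full-spark assumption on $\Phi$ does all the work, and the nonvanishing assumptions \eqref{phinonzero.cond}--\eqref{phinonzero*.cond} are not needed for this particular lemma.
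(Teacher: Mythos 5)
Your proof is correct and follows essentially the same route as the paper: both arguments reduce linear dependence of the two rows to a nontrivial vector in the kernel of the nonsingular $N\times N$ submatrix $\bigl(\phi(\gamma_n+m)\bigr)_{1\le n\le N,\,0\le m\le N-1}$ of $\Phi$ (and its analogue for $\Gamma^*$). The only difference is cosmetic: the paper writes the dependence as a ratio $\sum_l a(l)\phi(\gamma_n+l+1)/\phi(\gamma_n)=\alpha$, implicitly using \eqref{phinonzero.cond}, whereas your general combination $(\mu,\lambda)$ avoids that division and correctly shows the nonvanishing assumptions are not needed here.
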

 
 \begin{proof}
 We prove \eqref{realcase.lem.eq1} by indirect proof.
Suppose, on the contrary, that
$$\frac{\sum_{l=0}^{N-2}  a(l) \phi(\gamma_{n}+l+1)} {\phi(\gamma_{n})}=\alpha, \ \ 1\le n\le N$$
%\frac{\sum_{l=0}^{N-2}  a(l) \phi(\gamma_{n_0}+l+1)} {\phi(\gamma_{n_0})}$$
for some  $\alpha\in \R$.
%for all $n\ne n_0$. Set $\alpha=\frac{ \sum_{l=0}^{N-2}  a(l) \phi(\gamma_{n_0}+l+1)}{ \phi(\gamma_{n_0})}$.
%Then
%$$\sum_{l=0}^{N-2}  a(l) \phi(\gamma_{n}+l+1)=\alpha \phi({\gamma_n}) \ {\rm for \ all} \ 1\le n\le N.$$
This together with nonsingularity  of the matrix $(\phi(\gamma_n+m))_{1\le n\le N, 0\le m\le N-1}$ implies that
$(-\alpha, a(0), \ldots, a({N-2}))$ is a zero vector, which is a contradiction.

The full rank property \eqref{realcase.lem.eq2} can be proved similarly. %in a similar manner.
 \end{proof}

\begin{proof}[Proof of Theorem \ref{realcase.thm}]   Due to the shift-invariance, without loss of generality, we assume that $k_0=0$.
Set  $K_{\pm}=K_{\pm}(f)$.
We divide the proof into three cases: $K_-\le 0\le K_++N-1$, $K_-\ge 1$ and $K_+\le -N$.

 {\em Case 1: $K_-\le 0\le K_++N-1$.}

 In this case, it follows from Theorem \ref{realX.thm} and  nonseparability of the signal $f$ that
  there exists  $-N+2\le n_0\le 0$ such  that $c({l_0})\ne 0$ and $c(l)=0$ for all $l_0<l \le 0$.
  Without loss of generality, we assume that
\begin{equation}\label{ko.assumption} c({l_0})>0 \ {\ \rm and \ }\  c(l)=0  \ {\ \rm for \ all \ } \ l_0< l\le 0,\end{equation}
otherwise replacing $f$ by $-f$.
 By  \eqref{ko.assumption} and Lemma \ref{phaseretrievalL.lem},
 \begin{equation}\label{ckvalue.def}
 c({-N+1}),\ldots,c(-1), c(0)\end{equation}
 are  determined  from phaseless samples $|f(X)|$. %$|f(x_m)|, x_m\in X$.
Next we prove by induction that  $c(k), k\ge -N+1$, are determined by  phaseless samples $|f(X)|$ and $|f(\Gamma+q)|, q\ge 1$.
Inductively, we assume that $c(k), -N+1\le k<p$, can be recovered from  $|f(X)|$ and  $|f(\Gamma+q)|, 1\le q<p$. The induction proof is finished if $p>K_+$. Now it remains to consider $p\le K_+$.

 Observe that
 \begin{eqnarray}\label{signalLplus1.def}
 f(\gamma_n+p)&\hskip-0.08in =&  \hskip-0.08in c(p)\phi(\gamma_n)+\sum_{l=0}^{N-2}c({p-l-1})\phi(\gamma_n+l+1)\nonumber\\
 & \hskip-0.08in =:& \hskip-0.08in c(p)\phi(\gamma_n)+\alpha(n)\ \ {\rm for \ all} \  \gamma_n\in \Gamma.
 \end{eqnarray}
 Taking squares at both sides of the above equations yields
$$
|\phi(\gamma_n)|^2 (c(p))^2+2 \phi(\gamma_n) \alpha(n) c(p)+|\alpha(n)|^2= |f(\gamma_n+p)|^2,
$$
 where $1\le n\le N$. Moving $|\alpha(n)|^2$ to the right hand side and then dividing $\phi(\gamma_n)$ at both sides, we obtain
 \begin{equation} \label{signalLplus1.def2}
\phi(\gamma_n) (c(p))^2+2  \alpha(n) c(p)= \frac{|f(\gamma_n+p)|^2-|\alpha(n)|^2}{\phi(\gamma_n)},
 \end{equation}
where $1\le n\le N$.
% where
%\begin{equation}
%\alpha_n=\sum_{k=L+1-N}^{L-1}c_k\phi(\gamma_n+L-k),\ \ \gamma_n\in \Gamma_0.
%\end{equation}
As $K_-< p\le K_+$, we obtain from Theorem \ref{realX.thm} that $(c({p-N+1}), \ldots, c({p-1}))$ is a nonzero vector. Therefore by Lemma \ref{realcase.lem},
%\begin{equation} \label{n1n0.con}
%%\Big|\frac{\alpha_{n_1}} {\phi(\gamma_{n_1})}-\frac{\alpha_{n_0}} {\phi(\gamma_{n_0})}\Big|=
%\max_{1\le m,n\le N} \Big|\frac{\alpha_{m}} {\phi(\gamma_{m})}-\frac{\alpha_{n}} {\phi(\gamma_{n})}\Big|>0,
%\end{equation}
%which implies that
the $2\times N$ matrix
$$\left(\begin{array}{ccc} \phi(\gamma_1)  & \cdots & \phi(\gamma_N)\\
\alpha(1) & \ldots &  \alpha(N)\end{array}\right)$$
has rank $2$. So there is a unique %(least squares)
 solution
\begin{equation}\label{cp.def}
c(p)= \frac%{(\sum_{n=1}^N \phi(\gamma_n) (|f(\gamma_n+L)|^2-|\alpha_n|^2) \sum_{n=1}^N \alpha_n \phi(\gamma_n))-
%(\sum_{n=1}^N \alpha_n (|f(\gamma_n+L)|^2-|\alpha_n|^2) \sum_{n=1}^N \alpha_n}
{h_2({\pmb \alpha}, {\pmb \eta})}
{  2 h_1({\pmb \alpha}) }
\end{equation}
to the linear system \eqref{signalLplus1.def2}, where $h_1, h_2$ are functions given in \eqref{gGamma.def} and \eqref{gGamma2.def} respectively, ${\pmb \alpha}=(\alpha(1), \ldots, \alpha(N))$,
and
$${\pmb \eta}=(|f(\gamma_1+p)|^2-|\alpha(1)|^2, \ldots, |f(\gamma_N+p)|^2-|\alpha(N)|^2).$$
%\begin{equation}
%c(L)= \frac%{(\sum_{n=1}^N \phi(\gamma_n) (|f(\gamma_n+L)|^2-|\alpha_n|^2) \sum_{n=1}^N \alpha_n \phi(\gamma_n))-
%%(\sum_{n=1}^N \alpha_n (|f(\gamma_n+L)|^2-|\alpha_n|^2) \sum_{n=1}^N \alpha_n}
%{\left |\begin{array}{cc} \sum_{n=1}^N |\phi(\gamma_n)|^2 &  \sum_{n=1}^N  |f(\gamma_n+L)|^2-|\alpha(n)|^2\\
%\sum_{n=1}^N \phi(\gamma_n) \alpha (n) &  \sum_{n=1}^N  \frac{\alpha(n) (|f(\gamma_n+L)|^2-|\alpha(n)|^2)}{\phi(\gamma_n)}
%\end{array}\hskip-0.05in\right|}
%%
%{  2 \left |\begin{array}{cc} \sum_{n=1}^N |\phi(\gamma_n)|^2 &  \sum_{n=1}^N \phi(\gamma_n) \alpha(n)\\
%\sum_{n=1}^N \phi(\gamma_n) \alpha_(n) &  \sum_{n=1}^N  |\alpha(n)|^2
%\end{array}\hskip-0.05in\right|}.
%\end{equation}
  This completes the inductive proof. Hence
$c(k), k\ge -N+1$,  are determined  from  $|f(X)|$ and  $|f(\Gamma+q)|, q\ge 1$.

\smallskip

 Finally we use similar arguments to determine $c(k),  k\le -N+1$, from $|f(X)|$ and  $|f(\Gamma^{\ast}+q)|, q\le -1$.
  Inductively, we assume that $c(k), \tilde p<k\le 0$, has been recovered from   $|f(X)|$ and  $|f(\Gamma^{\ast}+q)|, \tilde p+N-1<q\le -1$. The induction proof is done if $\tilde p<K_-$. Then it remains to discuss $\tilde p\ge K_-$.
Observe that
{%\small
 \begin{eqnarray}\label{signalL1.def}
 f(\gamma^{\ast\ast}_n+\tilde p) &\hskip-0.1in = &\hskip-0.1in  c({\tilde p})\phi(\gamma^{\ast\ast}_n) +  %\nonumber\\
%  &\hskip-0.1in+ &
  \sum_{l=0}^{N-2}c(\tilde p+N-1-l)\phi(\gamma^*_n+l)\nonumber\\
 &\hskip-0.1in =: &\hskip-0.1in  c({\tilde p})\phi(\gamma_n^{\ast\ast})+\alpha^\ast(n).
 \end{eqnarray}}
 where $\gamma_n^{\ast\ast}=\gamma^\ast_n+N-1, 1\le n\le N$.
% where
%\begin{equation*}
%\beta_n=\sum_{k=\tilde L+1}^{N+\tilde L-1}c_{k}\phi(\gamma^*_n+\tilde L+N-1-k), \ \ \gamma^\ast_n\in \Gamma^\ast_0.
%\end{equation*}
%Given any $1\le \tilde n_0\le N$, following the similar argument to prove \eqref{nonsingular.1}, we obtain from \eqref{fullspark} and \eqref{phinonzero*.cond} that
% \begin{equation}\label{phitildenonsingularity.pr}
% \Big(\frac{\phi(\gamma^\ast_n+k)}{\phi(\gamma^\ast_n+N-1)}-\frac{\phi(\gamma^\ast_{\tilde n_0}+k)}{\phi(\gamma^\ast_{\tilde n_0}+N-1)}\Big)_{n\ne \tilde n_0, 0\le k\le N-2}
% \end{equation} is nonsingular.
By Lemma \ref{realcase.lem}, the $2\times N$ matrix
\begin{equation*}
\left(\begin{array}{ccc} \phi(\gamma_1^{\ast\ast})  & \ldots & \phi(\gamma^{\ast\ast}_N)\\
 \alpha^\ast(1) & \ldots &  \alpha^\ast(N)
 \end{array}\right)
\end{equation*}
has rank $2$. Therefore
%\begin{eqnarray}
%&&\Big|\frac{\beta_{\tilde n_1}}{\phi(\gamma^\ast_{\tilde n_1}+N-1)}-\frac{\beta_{\tilde n_0}}{\phi(\gamma^\ast_{\tilde n_0}+N-1)}\Big| \nonumber\\
% &\hskip-0.2in = &\hskip-0.2in
%\max_{1\le m,n\le N} \Big|\frac{\beta_{m}}{\phi(\gamma^\ast_{m}+N-1)}-\frac{\beta_{ n}}{\phi(\gamma^\ast_{ n}+N-1)}\Big|
%\end{eqnarray}
%for some $1\le \tilde n_1\ne \tilde n_0\le N$,  and
\begin{equation}\label{ctildeL.formula}
c_{\tilde p}=\frac%{(\sum_{n=1}^N \phi(\gamma_n) (|f(\gamma_n+L)|^2-|\alpha_n|^2) \sum_{n=1}^N \alpha_n \phi(\gamma_n))-
%(\sum_{n=1}^N \alpha_n (|f(\gamma_n+L)|^2-|\alpha_n|^2) \sum_{n=1}^N \alpha_n}
{h_2^*({\pmb \alpha}^*, {\pmb \eta}^*)
}
{  2 h_1^*({\pmb \alpha}^*)},
\end{equation}
where $h_1^*$ and $h_2^*$ are %functions
 given in \eqref{gGammastar.def} and \eqref{gGamma2star.def} respectively, ${\pmb \alpha}^\ast=(\alpha^\ast(1), \ldots, \alpha^\ast (N))$,
and
$${\pmb \eta}^\ast=(|f(\gamma_1^{\ast\ast}+\tilde p)|^2-|\alpha^{\ast}(1)|^2, \ldots, |f(\gamma_N^{\ast\ast}+\tilde p)|^2-|\alpha^\ast(N)|^2).$$
%\begin{equation}\label{ctildeL.formula}
%c_{\tilde p}=\frac%{(\sum_{n=1}^N \phi(\gamma_n) (|f(\gamma_n+L)|^2-|\alpha_n|^2) \sum_{n=1}^N \alpha_n \phi(\gamma_n))-
%%(\sum_{n=1}^N \alpha_n (|f(\gamma_n+L)|^2-|\alpha_n|^2) \sum_{n=1}^N \alpha_n}
%{\left |\begin{array}{cc}
%\displaystyle{ \sum_{n=1}^N} |\phi(\gamma_n^{\ast\ast})|^2 &  \displaystyle{\sum_{n=1}^N}  |f(\gamma_n^{\ast\ast}+L)|^2-|\alpha^\ast(n)|^2\\
%\displaystyle{\sum_{n=1}^N} \phi(\gamma_n^{\ast\ast}) \alpha^\ast(n) &  \displaystyle{\sum_{n=1}^N}  \frac{\alpha^\ast(n) (|f(\gamma_n^{\ast\ast}+L)|^2-|\alpha^\ast(n)|^2)}{\phi(\gamma_n^{\ast\ast})}
%\end{array}\hskip-0.05in \right|}
%%
%{  2 \left |\begin{array}{cc}
%\displaystyle{\sum_{n=1}^N} |\phi(\gamma_n^{\ast\ast})|^2 & \displaystyle{ \sum_{n=1}^N} \phi(\gamma_n^{\ast\ast}) \alpha^\ast(n)\\
%\displaystyle{\sum_{n=1}^N} \phi(\gamma_n^{\ast\ast}) \alpha^\ast(n) &  \displaystyle{\sum_{n=1}^N}  |\alpha^\ast(n)|^2
%\end{array}\hskip-0.05in \right|}
%\end{equation}
%where $1\le \tilde n_1\ne \tilde n_0\le N$ is so chosen that
%$$\phi(\gamma^\ast_{\tilde n_1}+N-1)\beta_{\tilde n_0}-\phi(\gamma^\ast_{\tilde n_0}+N-1)\beta_{\tilde n_1}\ne 0$$
%and such $1\le \tilde n_1\le N$ follows from  \eqref{phitildenonsingularity.pr}.
This completes the inductive proof. Therefore
 $c(k), k\le 0$, are determined  from  $|f(X)|$ and  $|f(\Gamma^{\ast}+q)|, q\le -1$.
%This completes the recovery of the signal.

{\em Case 2: $K_-\ge 1$.}

In this case, the signal $f$ is supported in $[1, \infty)$. Without loss of generality, we assume  that  $c({K_-})>0$, otherwise considering $-f$ instead of $f$.  From
the definition of $K_-$ and the supporting property of  $\phi$, we have
\begin{equation*}
f(\gamma_n+K_-)=c({K_-})\phi(\gamma_n), \ \ \gamma_n\in \Gamma.
\end{equation*}
Thus
$$c(K_-)=\frac{\sum_{n=1}^N |\phi(\gamma_n)| |f(\gamma_n+K_-)|}{\sum_{n=1}^N |\phi(\gamma_n)|^2}.$$
 Then  following the same procedure as in Case 1,  we obtain that $c(k), k\ge K_-$, are determined from
 $|f(\Gamma+q)|, q\ge K_-$.

{\em Case 3: $K_{+}\le -N$.}

In this case, the signal $f$ is supported in $(-\infty, 0]$, and
$c({K_+})$ can be obtained,  up to a sign, from phaseless samples $|f(\Gamma^\ast+K_++N-1)|$.
Following the same procedure as in Case 1, we can determine  $c(k), k\le K_+\le -N$,  from  $|f(\Gamma^\ast+q)|, q\le
K_++N-1$.
\end{proof}

\section{Proof of Theorem \ref{realother.thm}}

The proof of Theorem \ref{realother.thm} is quite technical.
 It includes  three propositions on the approximation property of  vectors in the first three steps of the MEPS algorithm  \eqref{c.epsilon.0.def}--\eqref{phaselessminimization.4},
and one proposition on the phase adjustment.

To prove Theorem \ref{realother.thm}, we first show that for any $k'\in \Z$,  the  vector  ${\bf c}_{\pmb \epsilon, k'}^0$ in
 the first step of  the MEPS algorithm
approximates, up to a sign,  the original vector ${\bf c}$ on $[k'L+1-N, k'L]$.

\begin{proposition}\label{c0.pr}
Let ${\bf c}, {\pmb \epsilon}$ be as in Theorem \ref{realother.thm}, and ${\bf c}_{\pmb \epsilon, k'}^0$, $k'\in\Z$, be as in \eqref{c.epsilon.0.def}. Then for any $k'\in \Z$, there exists $\delta_{k'}\in \{-1, 1\}$ such that
\begin{equation}\label{realother.thm.pf.eq3}
\sum_{k=k'L-N+1}^{k'L} |c_{\pmb \epsilon, k'}^0 (k)-\delta_{k'} c(k)|^2\le  8N \|(\Phi_N)^{-1}\|^2 |{\pmb \epsilon}|.
\end{equation}
\end{proposition}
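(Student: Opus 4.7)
Fix $k'\in\Z$. Write $x_{m,k'}=x_m+k'L$ and $g_m(\mathbf{d}):=\sum_{n=0}^{N-1}d(k'L-n)\phi(x_m+n)$ for a vector $\mathbf{d}$ supported on $[k'L-N+1,k'L]$. Since $\phi$ vanishes outside $[0,N]$ by \eqref{supportlength.def2} and $x_m\in(0,1)$, the index $k$ for which $\phi(x_{m,k'}-k)\neq 0$ is automatically restricted to $k'L-N+1\le k\le k'L$, so the exact signal satisfies $f(x_{m,k'})=g_m(\mathbf{c}_{\mathrm{true}})$, where $\mathbf{c}_{\mathrm{true}}$ is the restriction of the true amplitude sequence to the window. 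Let $\mathbf{c}^{0}:=\mathbf{c}^{0}_{\pmb\epsilon,k'}$, and set $y_m:=g_m(\mathbf{c}^{0})$ and $w_m:=f(x_{m,k'})$ for $1\le m\le 2N-1$.

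The first step is to pass the noise through the minimization. Using the elementary inequality $(|a|-|b|)^2\le||a|^2-|b|^2|$ valid for real $a,b$, one has $||g_m(\mathbf{c}_{\mathrm{true}})|-\sqrt{z_{\pmb\epsilon}(x_{m,k'})}|^2\le|\pmb\epsilon(x_{m,k'})|$, so optimality of $\mathbf{c}^{0}$ in \eqref{phaselessminimization.1} yields
\begin{equation*}
\sum_{m=1}^{2N-1}\big||y_m|-\sqrt{z_{\pmb\epsilon}(x_{m,k'})}\big|^2\le(2N-1)|\pmb\epsilon|.
\end{equation*}
Applying the same inequality to $|\sqrt{z_{\pmb\epsilon}(x_{m,k'})}-|w_m||$ and combining via $(p+q)^2\le 2p^2+2q^2$ gives
\begin{equation*}
\sum_{m=1}^{2N-1}\big(|y_m|-|w_m|\big)^2\le 4(2N-1)|\pmb\epsilon|.
\end{equation*}

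The second step converts the modulus discrepancy into a signed discrepancy at the cost of introducing local sign choices. For each $m$ the minimum $\min_{\sigma\in\{-1,1\}}|y_m-\sigma w_m|$ equals $||y_m|-|w_m||$, attained at $\sigma_m^*:=\mathrm{sgn}(y_m w_m)$ (either sign works when $y_m w_m=0$). By the pigeonhole principle applied to the $2N-1$ sign choices $\sigma_m^*$, there exist $\delta_{k'}\in\{-1,1\}$ and a subset $S\subset\{1,\ldots,2N-1\}$ of size at least $N$ on which $\sigma_m^*=\delta_{k'}$, hence
\begin{equation*}
\sum_{m\in S}|y_m-\delta_{k'}w_m|^2=\sum_{m\in S}\big(|y_m|-|w_m|\big)^2\le 4(2N-1)|\pmb\epsilon|.
\end{equation*}

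The third step uses the full-spark hypothesis on $\Phi$. Since $|S|\ge N$, pick any $N$-element subset $S_0\subseteq S$. The restriction $\Phi_{S_0}=(\phi(x_{m}+n))_{m\in S_0,\,0\le n\le N-1}$ is invertible, and from $y_m-\delta_{k'}w_m=\sum_{n=0}^{N-1}(c^{0}(k'L-n)-\delta_{k'}c(k'L-n))\phi(x_m+n)$ for $m\in S_0$ we obtain, via the definition \eqref{realother.thm.eq1},
\begin{equation*}
\sum_{k=k'L-N+1}^{k'L}|c^{0}(k)-\delta_{k'}c(k)|^2\le\|(\Phi_N)^{-1}\|^2\sum_{m\in S_0}|y_m-\delta_{k'}w_m|^2\le 4(2N-1)\|(\Phi_N)^{-1}\|^2|\pmb\epsilon|,
\end{equation*}
which is $\le 8N\|(\Phi_N)^{-1}\|^2|\pmb\epsilon|$ as claimed.

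The only delicate step I expect is the pigeonhole move that fixes a common sign $\delta_{k'}$ on an $N$-element subset; this is exactly what the oversampling condition $|X|=2N-1$ buys us, and it is the place where full spark is crucially used afterwards. The rest is a direct comparison of the minimizer against the true coefficient vector through the elementary inequality $(|a|-|b|)^2\le||a|^2-|b|^2|$, so no further technicalities should arise.
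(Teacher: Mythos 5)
Your proposal is correct and follows essentially the same route as the paper's proof: bound the modulus discrepancy at the $2N-1$ sample points by $O(N|\pmb\epsilon|)$ using optimality of the minimizer together with $\bigl|\sqrt{x^2+y}-|x|\bigr|\le\sqrt{|y|}$, select by pigeonhole a common sign on at least $N$ of the points, and invert the corresponding $N\times N$ submatrix via the full-spark bound $\|(\Phi_N)^{-1}\|$. The only difference is cosmetic: you make the pigeonhole sign-selection explicit where the paper leaves it implicit, and your constant $4(2N-1)\le 8N$ matches theirs.
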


\begin{proof} Set  $x_{m,k'}=x_m+k'L, 1\le m\le 2N-1$.
Then
\begin{eqnarray*}
&\hskip-0.08in &  \hskip-0.08in\sum_{m=1}^{2N-1} \Big(\Big|\sum_{k=k'L-N+1}^{k'L} c_{\pmb \epsilon, k'}^0 (k) \phi(x_{m,k'}-k)\Big|\nonumber\\
& \hskip-0.08in & \hskip-0.08in\qquad \qquad - \Big|\sum_{k=k'L-N+1}^{k'L} c (k) \phi(x_{m,k'}-k)\Big|\Big)^2\nonumber\\
&\hskip-0.08in \le  & \hskip-0.08in 2 \sum_{m=1}^{2N-1} \Bigg(\Big|\sum_{k=k'L-N+1}^{k'L} c_{\pmb \epsilon, k'}^0 (k) \phi(x_{m,k'}-k)\Big|\nonumber\\
 & \hskip-0.08in & \hskip-0.08in \quad  - \sqrt{z_{\pmb\epsilon}(x_{m,k'})}\Bigg )^2  +2 \sum_{m=0}^{2N-1} \Bigg(  \sqrt{z_{\pmb \epsilon}(x_{m,k'})}\nonumber\\
& \hskip-0.08in & \hskip-0.08in \qquad \qquad - \Big|\sum_{k=k'L-N+1}^{k'L} c(k) \phi(x_{m,k'}-k)\Big|\Bigg)^2\nonumber\\
& \hskip-0.08in \le & \hskip-0.08in  4  \sum_{m=1}^{2N-1} \Big|| f(x_{m,k'})| - \sqrt{z_{\pmb \epsilon}(x_{m,k'})}\Big |^2\nonumber \\
& \hskip-0.08in \le  & \hskip-0.08in 4\sum_{m=1}^{2N-1}  |\pmb \epsilon(x_{m,k'})|\le 8N |{\pmb \epsilon}|,
\end{eqnarray*}
where  the second inequality  holds by  \eqref{phaselessminimization.1}, and the third estimate follows from the triangle inequality
\begin{equation} \label{halfinequality} |\sqrt{x^2+y}-|x| |\le \sqrt{|y|}\end{equation}
for all  $x\ge 0$ and $y\ge -x^2$.
Therefore there exist $1\le m_1, \ldots, m_N\le 2N-1$ and $\delta_{k'}\in \{-1, 1\}$ such that
\begin{equation*}
 \sum_{l=1}^{N} \Big(\sum_{k=k'L-N+1}^{k'L} \big(c_{\pmb \epsilon, k'}^0 (k)-\delta_{k'} c(k)\big) \phi(x_{m_l,k'}-k)\Big)^2
 \le 8N |{\pmb \epsilon}|.
\end{equation*}
This proves  \eqref{realother.thm.pf.eq3}.
%\begin{equation}\label{realother.thm.pf.eq3}
%\sum_{k=k'-N+1}^{k'} |c_{\pmb \epsilon, k'}^0 (k)-\delta_{k'} c(k)|^2\le  8N \|(\Phi_N)^{-1}\|^2 |{\pmb \epsilon}|.
%\end{equation}
\end{proof}

To prove Theorem \ref{realother.thm}, we next verify that for any $k'\in \Z$,
the  vector ${\bf c}_{\pmb \epsilon, k'}^{1/2}$
in the second step of the MEPS algorithm %of
% the local approximation ${\bf c}_{\pmb \epsilon, k'}^0, k'\in \Z$,
 is, up to a sign, not far away from ${\bf c}$ on $[k'L+1-N, k'L+(L-1)/2]$.

\begin{proposition}\label{c1/2.pr} Let ${\bf c}, {\pmb \epsilon}$ be as in Theorem \ref{realother.thm}, and
let vectors ${\bf c}_{\pmb \epsilon, k'}^{1/2}, k'\in \Z$, be as in  \eqref{c1/2.def}.
Then for any $k'\in\Z$,
there exists $\delta_{k'}\in \{-1, 1\}$ such that
 %let $b_{\pmb\epsilon, k'}(l)$ be as in \eqref{denomiator.cond.1}.
%Set $C_-=\inf_{K_-(f)-N<k<K_+(f)+1}\sum_{n=0}^{N-2}|c(k+n)|^2$. Then
%\begin{equation*}
%b_{\pmb\epsilon,k'}(l)\ge M_0,\ \ \ if \ K_-(f)-1< k'L+l< K_+(f)+N;
%\end{equation*}
%and
%\begin{equation*}
% b_{\pmb\epsilon,k'}(l)< M_0,\ \ \ if\ k'L+l> K_+(f)+N-1\ \ or \ k'L < K_-(f).
%\end{equation*}
\begin{equation}\label{c.half.error}
|c^{1/2}_{\pmb \epsilon,k'}(k)-\delta_{k'}c(k)|\le   \|(\Phi_N)^{-1}\|  (C_{f, \phi})^{k-k'L+N-1}\sqrt{8N |\pmb\epsilon|} %\ 0\le l\le (L-1)/2,   \
% {\rm for \ all} \
% |l|\le \frac{L-1}{2}, %(L-1)/2,
%\quad
\end{equation}
for all $k'L+1-N\le k\le  k'L+(L-1)/2$.
 %, where  is  given in Lemma \ref{c0.lem}.
\end{proposition}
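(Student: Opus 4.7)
The plan is to prove Proposition \ref{c1/2.pr} by induction on $l$ from $0$ to $(L-1)/2$, tracking how the per-coordinate reconstruction error propagates through each extension step of the MEPS algorithm. The sign $\delta_{k'}\in\{-1,1\}$ is fixed once and for all at the base case as the one delivered by Proposition \ref{c0.pr}, and the induction shows that this same sign works throughout the extension.

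For the base case $l=0$, we have ${\bf c}^{1/2}_{\pmb\epsilon,k',0}={\bf c}^0_{\pmb\epsilon,k'}$, and Proposition \ref{c0.pr} provides the $\ell^2$ bound on $[k'L-N+1,k'L]$. Dominating each component by the full $\ell^2$ norm gives $|c^{1/2}_{\pmb\epsilon,k',0}(k)-\delta_{k'}c(k)|\le \|(\Phi_N)^{-1}\|\sqrt{8N|\pmb\epsilon|}$, which fits inside the claimed envelope because $C_{f,\phi}\ge 1$ and the exponent $k-k'L+N-1$ is nonnegative on this range.

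For the inductive step $l-1\mapsto l$, the analysis splits according to the branch taken in step (ii) of the algorithm.

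\emph{Case A} ($|h_1(\pmb\alpha_{\pmb\epsilon,l})|\le M_0$). Only the $(k'L+l)$-th component is updated via the least-squares formula \eqref{c.half.leastsquare}. Writing $h_1$ as the Cauchy--Schwarz defect between $(\phi(\gamma_n))_n$ and $(\alpha_{\pmb\epsilon,l}(n))_n$ and using the full-spark condition on $\Phi$ together with the choice $M_0=S_f/(4\|(\Phi_N)^{-1}\|^2)$, I would deduce that the previous noiseless coefficients $c(k'L+l-m)$, $1\le m\le N-1$, must themselves be essentially zero (otherwise the noiseless analogue of $h_1$ would exceed $M_0$ by the definition of $S_f$). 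Then $|f(\gamma_n+k'L+l)|\approx |c(k'L+l)\phi(\gamma_n)|$, and the weighted least-squares formula \eqref{c.half.leastsquare} recovers $|c(k'L+l)|$ with error controlled by $\sqrt{|\pmb\epsilon|}$ via \eqref{halfinequality}; any residual sign ambiguity is harmless here because $c(k'L+l)$ itself is close to $0$ (else a nonseparability contradiction) and absorbs into the bound.

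\emph{Case B} ($|h_1(\pmb\alpha_{\pmb\epsilon,l})|>M_0$). This is the main case, and mimics the noiseless proof of Theorem \ref{realcase.thm}. I would proceed in three substeps. First, show that $d_{\pmb\epsilon,k'}(k'L+l)$ from \eqref{phaselessminimization.1n} approximates $\delta_{k'}c(k'L+l)$: subtract from the noiseless identity \eqref{cp.def} and perform a perturbation analysis of the rational expression $h_2/(2h_1)$, using $|h_1|>M_0$ to control the denominator, the inductive bound to control $\pmb\alpha_{\pmb\epsilon,l}-\pmb\alpha$, and \eqref{halfinequality} to control $\pmb\eta_{\pmb\epsilon,l}$. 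Second, verify that the computed signs $\delta_l(n)$ in \eqref{mewhalf.eq2} agree with $\operatorname{sgn}(f(\gamma_n+k'L+l))$: the argument of the sgn differs from the noiseless value by at most the accumulated error in ${\bf c}^{1/2}_{\pmb\epsilon,k',l-1}$ plus the error in $d_{\pmb\epsilon,k'}(k'L+l)$, and the nonseparability quantity $S_f$ together with the noise hypothesis \eqref{realother.thm.eq2} forces $|f(\gamma_n+k'L+l)|$ to exceed this combined error. Third, once the signs are correct, the linear system \eqref{newhalf.eq3} is exactly the noiseless interpolation equation with right-hand-side perturbation bounded (via \eqref{halfinequality}) by $\sqrt{|\pmb\epsilon|}$, and inverting by $\|(\Phi_N)^{-1}\|$ yields the update error. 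Combining with $M_f<\infty$ to control the ratio $\|f|_{\gamma\text{-window}}\|/\|c|_{\text{window}}\|$ shows the new error is at most $C_{f,\phi}$ times the previous one, which is precisely what the exponential envelope $(C_{f,\phi})^{k-k'L+N-1}$ allows.

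The main obstacle is Case B, substep (b): certifying sign recovery. This is where the constant $C_{f,\phi}$ in \eqref{cfphi.def} and the exponential decay of the noise tolerance in \eqref{realother.thm.eq2} enter nontrivially. The amplification factor at each extension step is $C_{f,\phi}$, so after $(L-1)/2$ steps the bound on the right of \eqref{c.half.error} grows like $(C_{f,\phi})^{N-1+(L-1)/2}\sqrt{|\pmb\epsilon|}$; the hypothesis \eqref{realother.thm.eq2} is engineered so that this quantity never exceeds $\min_n|f(\gamma_n+k'L+l)|$ inside the duration of $f$, ensuring each new sign is computed correctly and the induction closes.
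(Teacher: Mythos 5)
Your overall architecture (induction on $l$, base case from Proposition \ref{c0.pr}, a two-branch inductive step keyed to the threshold test on $h_1$, and a perturbation analysis of $h_2/(2h_1)$ in the main branch) matches the paper's proof, which organizes the same dichotomy by whether $\sum_{k=k'L+l_0+2-N}^{k'L+l_0}|c(k)|^2$ vanishes and then shows this determines which branch the algorithm takes. However, both of your sign-handling steps contain genuine gaps.

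First, in Case A you assert that the residual sign ambiguity is harmless because ``$c(k'L+l)$ itself is close to $0$ (else a nonseparability contradiction).'' This is false. The preceding window of true coefficients vanishes exactly when $k'L+l_0+1\le K_-(f)$ or $k'L+l_0+1> K_+(f)+N-1$; in the first situation the new coefficient $c(k'L+l_0+1)$ can equal $c(K_-(f))$, which is nonzero and arbitrarily large. Since \eqref{c.half.leastsquare} always returns a nonnegative value approximating $|c(k'L+l_0+1)|$, your ``sign fixed once and for all at the base case'' then gives an error of order $2|c(K_-(f))|$ whenever the base-case sign disagrees. The correct resolution (the paper's Case 1b) is that when all true coefficients up to $k'L+l_0$ vanish, the inductive hypothesis holds for \emph{either} choice of $\delta_{k'}$, so one is free to (re)select $\delta_{k'}={\rm sgn}\, c(k'L+l_0+1)$ at this step. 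Your framing forecloses that move.

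Second, in Case B substep (b) and again in your closing paragraph, you claim the noise hypothesis forces the combined error below $\min_n|f(\gamma_n+k'L+l)|$ ``inside the duration of $f$,'' so that every sign $\delta_l(n)$ is computed correctly. This cannot be proved: $S_f$ lower-bounds the $\ell^2$ norm of a coefficient window, not the individual sample values, and $f$ may vanish (or be arbitrarily small) at a particular point $\gamma_n+k'L+l$ even when the window of coefficients is large. The paper's proof does not certify all signs; instead it splits the indices $n$ into those where $|f(\gamma_n+k'L+l_0)|$ exceeds the accumulated error (condition \eqref{condition1}, sign provably correct, error $\le\sqrt{|\pmb\epsilon|}$ by \eqref{halfinequality}) and those where it does not (sign possibly wrong, but then $|\tilde z_{\pmb\epsilon}-\delta_{k'}f|\le 2|f|+O(\sqrt{|\pmb\epsilon|})$ is already of the admissible order $(C_{f,\phi})^{l_0}\sqrt{N|\pmb\epsilon|}$). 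Without this split your inductive step does not close, since a single sample with $f(\gamma_n+k'L+l)=0$ defeats the uniform lower bound you rely on.
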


To prove Proposition \ref{c1/2.pr}, we need a technical lemma.
% about functions
% some estimates about functions $h_2$ and $h_2^*$ given
%in \eqref{gGamma2.def} and \eqref{gGamma2star.def} respectively.

\begin{lemma}\label{lemma.D1} Let $h_2$ and $h_2^\ast$ be as in \eqref{gGamma2.def} and \eqref{gGamma2star.def}.
Then %it is a bilinear form about ${\bf e}_1$ and  ${\bf e}_2$, and it satisfies
 \begin{equation}\label{nomiator.bound.eq1}
 |h_2({\bf e}_1, {\bf e}_2)| \le  \frac{\|{\bf e}_1\| \|{\bf e}_2\|}
{\min_{1\le n\le N} |\phi(\gamma_n)|}
%\nonumber\\
% & \hskip-0.08in & \hskip-0.08in \times \Big(\sum_{n=1}^N |{\bf e}_1(n)|\Big)
%\Big(\sup_{1\le n\le N}|{\bf e}_2(n)|\Big)
\end{equation}
and
 \begin{equation}\label{nomiator.bound.eq2}
 |h_2^\ast ({\bf e}_1, {\bf e}_2)| \le  \frac{\|{\bf e}_1\| \|{\bf e}_2\|}
{\min_{1\le n\le N} |\phi(\gamma_n^*+N-1)|} %{\min_{1\le m\le 2N-1} |\phi(x_m+N-1)|}
%\nonumber\\
% & \hskip-0.08in & \hskip-0.08in \times \Big(\sum_{n=1}^N |{\bf e}_1(n)|\Big)
%\Big(\sup_{1\le n\le N}|{\bf e}_2(n)|\Big)
\end{equation}
for all ${\bf e}_1, {\bf e}_2\in\R^N$.
%and
%\begin{equation} \label{nomiator.bound.eq2}
%|g(\Gamma, {\bf e})- g(\Gamma, {\bf e}')|\le  \|\Phi_\Gamma\|^2\Big(\sum_{n=1}^{N}|\phi(\gamma_n)|^2\Big)\|{\bf d}+{\bf d}'\| \|{\bf d}-{\bf d}'\|.
%\end{equation}
\end{lemma}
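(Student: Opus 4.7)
The plan is to recognize the $2\times 2$ determinant in the numerator of $h_2({\bf e}_1,{\bf e}_2)$ as a Binet--Cauchy expression and estimate it via Cauchy--Schwarz together with Lagrange's identity. First I would introduce the auxiliary vectors $u_n=\phi(\gamma_n)$, $v_n=e_1(n)$, and $w_n=e_2(n)/\phi(\gamma_n)$, $1\le n\le N$, where the denominators are nonzero by \eqref{phinonzero.cond}. With this substitution the four entries of the displayed determinant become $\sum_n u_n^2$, $\sum_n u_n w_n$, $\sum_n u_n v_n$, and $\sum_n v_n w_n$, respectively, so the numerator equals $D:=\|u\|^2\langle v,w\rangle-\langle u,v\rangle\langle u,w\rangle$.

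Next I would apply the Binet--Cauchy identity to the two $N\times 2$ matrices with rows $(u_n,v_n)$ and $(u_n,w_n)$ to obtain
\[
D=\sum_{1\le n<m\le N}(u_n v_m-u_m v_n)(u_n w_m-u_m w_n).
\]
A Cauchy--Schwarz inequality on this sum, combined with Lagrange's identity to re-sum each factor into a Gram-type quantity, yields
\[
D^{2}\le\bigl(\|u\|^{2}\|v\|^{2}-\langle u,v\rangle^{2}\bigr)\bigl(\|u\|^{2}\|w\|^{2}-\langle u,w\rangle^{2}\bigr)\le \|u\|^{4}\|v\|^{2}\|w\|^{2}.
\]

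Finally, to translate the bound back to the original data, observe that $\|u\|^{2}=\sum_{n}|\phi(\gamma_n)|^{2}$, $\|v\|=\|{\bf e}_1\|$, and
\[
\|w\|^{2}=\sum_{n=1}^{N}\frac{e_2(n)^{2}}{\phi(\gamma_n)^{2}}\le\frac{\|{\bf e}_2\|^{2}}{\bigl(\min_{1\le n\le N}|\phi(\gamma_n)|\bigr)^{2}}
\]
by \eqref{phinonzero.cond}. Dividing $|D|$ by $\|u\|^{2}=\sum_n|\phi(\gamma_n)|^2$ then produces \eqref{nomiator.bound.eq1}. The bound \eqref{nomiator.bound.eq2} for $h_2^\ast$ follows by repeating the argument verbatim with $\gamma_n$ replaced by $\gamma_n^{\ast}+N-1$, which is nonvanishing by \eqref{phinonzero*.cond}. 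I do not expect a real obstacle here: the only nonobvious step is spotting the Binet--Cauchy structure in the numerator, after which Cauchy--Schwarz and Lagrange's identity finish the job routinely.
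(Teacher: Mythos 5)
Your proof is correct, and it takes a genuinely different (though closely related) route from the paper's. You expand the $2\times 2$ numerator via the Binet--Cauchy identity into $\sum_{n<m}(u_nv_m-u_mv_n)(u_nw_m-u_mw_n)$ and then apply Cauchy--Schwarz plus Lagrange's identity to get $D^2\le(\|u\|^2\|v\|^2-\langle u,v\rangle^2)(\|u\|^2\|w\|^2-\langle u,w\rangle^2)$. The paper instead divides the determinant by its top-left entry directly and recognizes the quotient as $\sum_n\bigl(\tfrac{e_2(n)}{\phi(\gamma_n)}-\alpha\phi(\gamma_n)\bigr)e_1(n)=\langle w-\alpha u,\,v\rangle$ with $\alpha=\langle u,w\rangle/\|u\|^2$, i.e.\ as the inner product of ${\bf e}_1$ with the component of $w$ orthogonal to $u$; a single Cauchy--Schwarz and the Pythagorean bound $\|w-\alpha u\|^2=\|w\|^2-\alpha^2\|u\|^2\le\|w\|^2$ then finish. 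Your version is symmetric in $v$ and $w$ and in fact yields the slightly sharper intermediate bound $|h_2|\le\|u\|^{-2}\sqrt{(\|u\|^2\|v\|^2-\langle u,v\rangle^2)(\|u\|^2\|w\|^2-\langle u,w\rangle^2)}$, whereas the paper's projection argument is shorter and only discards the Gram defect in the $w$-factor; both collapse to the same stated estimate after bounding $\|w\|\le\|{\bf e}_2\|/\min_n|\phi(\gamma_n)|$, and both rely on \eqref{phinonzero.cond} and \eqref{phinonzero*.cond} to divide by $\phi(\gamma_n)$ and $\phi(\gamma_n^*+N-1)$ as you note.
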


%where $$z_n=(\sum_{k=0}^{N-1}e(k)\phi(\gamma_n+k))^2,$$
%and $$\alpha_n=\sum_{k=1}^{N-1}d(k)\phi(\gamma_n+k).$$
 %, and

\begin{proof} The upper bound estimate  \eqref{nomiator.bound.eq1} %for $h_2({\bf e}_1,{\bf e}_2)$
holds, since
\begin{eqnarray*}
|h_2({\bf e}_1,{\bf e}_2)|%&=&\sum_{n=1}^{N}|\phi(\gamma_n)|^2(\sum_{n=1}^{N}\frac{d_ne_n}{\phi(\gamma_n)}
%-\frac{(\sum_{n=1}^{N}\phi(\gamma_n)e_n)(\sum_{n=1}^{N}d_n)}{\sum_{n=1}^{N}|\phi(\gamma_n)|^2})\\
\hskip-0.05in&\hskip-0.05in=&\hskip-0.1in \Big|\sum_{n=1}^{N}\Big(\frac{e_2(n)}{\phi(\gamma_n)}- \alpha\phi(\gamma_n)\Big)e_1(n)\Big|\\
\hskip-0.05in&\hskip-0.05in\le &\hskip-0.1in
\Big(\sum_{n=1}^{N}\Big|\frac{e_2(n)}{\phi(\gamma_n)}- \alpha \phi(\gamma_n)\Big|^2\Big)^{1/2}
\|{\bf e}_1\|\nonumber\\
\hskip-0.05in&\hskip-0.05in= &\hskip-0.1in
\Big(\sum_{n=1}^{N}\Big|\frac{e_2(n)}{\phi(\gamma_n)}\Big|^2 - \alpha^2 \sum_{n=1}^N|\phi(\gamma_n)|^2\Big)^{1/2}
\|{\bf e}_1\|\nonumber\\
\hskip-0.05in&\hskip-0.05in\le &\hskip-0.1in
\Big(\sum_{n=1}^{N}\Big|\frac{e_2(n)}{\phi(\gamma_n)}\Big|^2\Big)^{1/2}
\|{\bf e}_1\|,
%\\
%\hskip-0.05in&\hskip-0.05in\le&\hskip-0.1in C_\phi\Big(\sum_{n=1}^{N}\frac{e_2(n)^2}{|\phi(\gamma_n)|^2} %-\frac{(\sum_{n=1}^{N}d_n)^2}{\sum_{n=1}^{N}|\phi(\gamma_n)|^2
%\Big)^{1/2}\|{\bf e}\|, %\\
%\hskip-0.05in&\hskip-0.05in\le&\hskip-0.05in\max_{n}(|\phi(\gamma_n)|^{-1})\sum_{n=1}^{N}|\phi(\gamma_n)|^2\|{\bf e}\|\|{\bf d}\|
\end{eqnarray*}
where $\alpha=\frac{\sum_{n=1}^{N}e_2(n)}{\sum_{n=1}^{N}|\phi(\gamma_n)|^2}$.

Applying similar argument, we can prove  \eqref{nomiator.bound.eq2}.
\end{proof}

Now we return to the proof of Proposition \ref{c1/2.pr}.

\begin{proof}[Proof of Proposition \ref{c1/2.pr}] Take $k'\in \Z$, and let
 ${\bf c}_{\pmb \epsilon, k', l}^{1/2},  0\le l\le (L-1)/2$, be as in \eqref{c.half.leastsquare}--\eqref{newhalf.eq3}.
 Observe that
  $$ c^{1/2}_{\pmb \epsilon,k'}(k)=c^{1/2}_{\pmb \epsilon,k', l}(k)$$ for all $k\in
  [k'L+1-N, k'L+(L-1)/2]$ and $l\ge \min(k-k'L+N-1, (L-1)/2)$.
%Take $\delta_{k'}\in \{-1, 1\}$ as in Proposition
%\ref{c0.pr}.
Then it suffices to find $\delta_{k'}\in \{-1, 1\}$ such that
 \begin{eqnarray}\label{c.half.newerror}
 \hskip-0.28in & \hskip-0.08in  & \hskip-0.08in  \sum_{k=k'L+l+1-N}^{k'L+l}|c^{1/2}_{\pmb \epsilon,k', l}(k)-\delta_{k'}c(k)|^2\nonumber\\
\hskip-0.18in  &  \hskip-0.08in  &  \hskip0.08in \qquad \le   8N\|(\Phi_N)^{-1}\|^2  (C_{f, \phi})^{2l} |\pmb\epsilon|  \
% {\rm for \ all} \
% |l|\le \frac{L-1}{2}, %(L-1)/2,
\end{eqnarray}
 for all $0\le l\le (L-1)/2$.

 We establish the above conclusion \eqref{c.half.newerror} by induction.
 The conclusion \eqref{c.half.newerror} for $l=0$  follows from \eqref{realother.thm.pf.eq3}
 in  Proposition \ref{c0.pr}.
Inductively we assume that
\begin{eqnarray}\label{chalf.error.inductive}
 \hskip-0.28in & \hskip-0.08in  & \hskip-0.08in  \sum_{k=k'L+l_0+1-N}^{k'L+l_0}|c^{1/2}_{\pmb\epsilon,k', l_0}(k)-\delta_{k'}c(k)|^2\nonumber\\
\hskip-0.18in  &  \hskip-0.08in  &  \hskip0.08in \qquad
\le   8N\|(\Phi_N)^{-1}\|^2  (C_{f, \phi})^{2l_0} |\pmb\epsilon|
% {\rm for \ all} \
% |l|\le \frac{L-1}{2}, %(L-1)/2,
\end{eqnarray}
for some $0\le l_0\le (L-1)/2$.
Set
${\bf e}_{\pmb \epsilon, 1}= (\alpha_{\pmb\epsilon}(1), \ldots, \alpha_{\pmb\epsilon}(N))$ and
${\bf e}_{0, 1}= (\alpha(1), \ldots, \alpha(N))$,
where
$$\alpha_{\pmb\epsilon}(n)=\sum_{l'=0}^{N-2}c^{1/2}_{\pmb\epsilon,k', l_0}(k'L+l_0-l')\phi(\gamma_n+l'+1)$$
and
$$\alpha(n)=\sum_{l'=0}^{N-2}c(k'L+l_0-l')\phi(\gamma_n+l'+1), 1\le n\le N.$$
Now we divide  into two cases to prove  \eqref{c.half.newerror} for $l=l_0+1\le (L-1)/2$.

 {\sl Case 1}: $\sum_{k=k'L+l_0+2-N}^{k'L+l_0}|c(k)|^2=0$.

   Set
  $$\alpha_{\pmb \epsilon}(0)=\frac{\sum_{n=1}^N \phi(\gamma_n) \alpha_{\pmb \epsilon} (n)}
{\sum_{n=1}^N |\phi(\gamma_n)|^2}\ \ {\rm and} \ \  \alpha(0)=\frac{\sum_{n=1}^N \phi(\gamma_n) \alpha(n)}
{\sum_{n=1}^N |\phi(\gamma_n)|^2}.$$
Therefore  for the function $h_1$ in \eqref{gGamma.def}, we have
\begin{eqnarray} \label{denomiator.bound.pf.eq1}
\label{c1/2.lem.pf.eq1}
\hskip-0.13in  h_1({\bf e}_{\pmb\epsilon, 1})
% \left |\begin{array}{cc} \displaystyle{\sum_{n=1}^N} |\phi(\gamma_n)|^2 & \hskip-0.05in  \displaystyle{\sum_{n=1}^N} \phi(\gamma_n) \alpha_{\pmb\epsilon}(n,1)\\
% \displaystyle{\sum_{n=1}^N} \phi(\gamma_n) \alpha_{\pmb\epsilon}(n,1) & \hskip-0.05in  \displaystyle{\sum_{n=1}^N}  |\alpha_{\pmb\epsilon}(n,1)|^2
%\end{array}\right|\\
 &\hskip-0.1in =& \hskip-0.08in
\sum_{n=1}^N (\alpha_{\pmb \epsilon}(n)- \phi(\gamma_n) \alpha_{\pmb \epsilon}(0))^{2}\nonumber
\\
\hskip-0.02in  & \hskip-0.08in \le  & \hskip-0.08in
\sum_{n=1}^N |\alpha_{\pmb \epsilon}(n)|^2\le \| \Phi\|^2
 \sum_{k=k'L+l_0+2-N}^{k'L+l_0}|c^{1/2}_{\pmb\epsilon,k', l_0}(k)|^2
 \nonumber \\
& \hskip-0.08in \le  & \hskip-0.08in
 8N\| \Phi\|^2\|(\Phi_N)^{-1}\|^2  (C_{f, \phi})^{2l_0} |\pmb\epsilon|\nonumber\\
& \hskip-0.08in \le  & \hskip-0.08in
\frac{\|\Phi\|^2 S_f}{2^4N^2 (C_{f, \phi})^{4N+L-5-2l_0}}\le M_0,
 \end{eqnarray}
where the third inequality follows from the inductive hypothesis \eqref{chalf.error.inductive} and the last two estimates hold by
\eqref{M0.def} and \eqref{realother.thm.eq2}.
Hence
\begin{equation}\label{c1case1.def}
c^{1/2}_{\pmb\epsilon,k', l_0+1}(l_0^*+1)=\frac{\sum_{n=1}^{N} |\phi(\gamma_n)| \sqrt{z_{\pmb\epsilon}(\gamma_n+l_0^*+1)} }{\sum_{n=1}^{N}|\phi(\gamma_n)|^2}
\end{equation}
by \eqref{c.half.leastsquare}, where $\ell_0^*=k'L+l_0$.

 {\sl Case 1a}:  $c(k'L+l_0+1)=0$.

 In this subcase,
$$z_{\pmb\epsilon}(\gamma_n+k'L+l_0+1)= \pmb \epsilon (\gamma_n+k'L+l_0+1), 1\le n\le N,$$
and
\begin{eqnarray} \label{case1.a}
 \hskip-0.18in & \hskip-0.08in &  \hskip-0.08in |c^{1/2}_{\pmb\epsilon,k', l_0+1}(k'L+l_0+1)-\delta_{k'} c(k'L+l_0+1)|\nonumber\\
\hskip-0.18in& \hskip-0.08in = &  \hskip-0.08in\frac{\sum_{n=1}^{N} |\phi(\gamma_n)| \sqrt{\pmb\epsilon (\gamma_n+k'L+l_0+1)} }{\sum_{n=1}^{N}|\phi(\gamma_n)|^2}\nonumber\\
\hskip-0.18in &\hskip-0.08in \le & \hskip-0.08in   \Big(\sum_{n=1}^{N}|\phi(\gamma_n)|^2\Big)^{-1/2}\hskip-0.08in \sqrt{N|{\pmb \epsilon}|}\nonumber\\
\hskip-0.18in &\hskip-0.08in \le & \hskip-0.08in %\le
\|(\Phi_N)^{-1}\| \sqrt{N|{\pmb \epsilon}|} .
%\le  8N\|(\Phi_N)^{-1}\|^2  (M_{f, \phi}+1)^{l_0} M_{f, \phi} |\pmb\epsilon| .
\end{eqnarray}

 {\sl Case 1b}:   $c(k'L+l_0+1)\ne 0$.

 In this subcase,  it follows from %the second statement of
 Theorem \ref{realX.thm}
that
$$c(k)=0 \  {\rm  for\ all} \ k\le k'L+l_0.$$
Therefore
the inductive hypothesis \eqref{c.half.newerror} holds for all $0\le l\le l_0$  with arbitrary $\delta_{k'}\in \{-1, 1\}$. So we may select
$$\delta_{k'}=\frac{c(k'L+l_0+1)}{|c(k'L+l_0+1)|}$$
in this subcase.
Hence
 \begin{eqnarray} \label{case1.b}
\hskip-0.18in & \hskip-0.08in & \hskip-0.08in |c^{1/2}_{\pmb\epsilon,k', l_0+1}(k'L+l_0+1)- \delta_{k'} c(k'L+l_0+1)|\nonumber\\
\hskip-0.18in &\hskip-0.08in \le & \hskip-0.08in\frac{\sum_{n=1}^{N}|\phi(\gamma_n)| \sqrt{|\pmb\epsilon (\gamma_n+k'L+l_0+1)| }}{\sum_{n=1}^{N}|\phi(\gamma_n)|^2}\nonumber\\
\hskip-0.18in &\hskip-0.08in \le & \hskip-0.08in \Big(\sum_{n=1}^{N}|\phi(\gamma_n)|^2\Big)^{-1/2} \hskip-0.08in \sqrt{N|{\pmb \epsilon}|}
\le \|(\Phi_N)^{-1}\| \sqrt{N|{\pmb \epsilon}|} ,
\end{eqnarray}
where the first equality follows from   \eqref{halfinequality}, \eqref{c1case1.def} and
$$f(\gamma_n+k'L+l_0+1)= c(k'L+l_0+1)\phi(\gamma_n), 1\le n\le N.$$
Thus for the Case 1, the estimates in \eqref{case1.a} and \eqref{case1.b}, together with the inductive hypothesis \eqref{chalf.error.inductive},
 imply %  proves %, we obtain
%\begin{eqnarray*} \hskip-0.28in & &
%\sum_{k=k'L+l_0+2-N}^{k'L+l_0+1}|c^{1/2}_{\pmb \epsilon,k', l_0+1}(k)-\delta_{k'}c(k)|^2
%\\
%\hskip-0.28in & &\qquad \qquad \le
%8N\|(\Phi_N)^{-1}\|^2  (C_{f, \phi})^{l_0+1} |\pmb\epsilon|. \end{eqnarray*}
%\begin{equation*}
%\sum_{k=k'L+l_0+2-N}^{k'L+l_0+1}|c^{1/2}_{\pmb \epsilon,k', l_0+1}(k)-\delta_{k'}c(k)|^2
%\le
%8N\|(\Phi_N)^{-1}\|^2  (C_{f, \phi})^{l_0+1} |\pmb\epsilon|. \end{equation*}
%
%This together with the inductive hypothesis \eqref{chalf.error.inductive}
%proves
\eqref{c.half.newerror} for $l=l_0+1$.

\medskip
 {\sl Case 2}: $\sum_{k=k'L+l_0+2-N}^{k'L+l_0}|c(k)|^2\ne 0$.

In this case,
\begin{equation}\label{eepsilon0.eq}
\|{\bf e}_{\pmb \epsilon, 1}-\delta_{k'} {\bf e}_{\pmb 0, 1}\|\le \|\Phi\|  \|(\Phi_N)^{-1}\|
  (C_{f, \phi})^{l_0}\sqrt{8N  |{\pmb \epsilon}|}
\end{equation}
by inductive hypothesis  \eqref{chalf.error.inductive}, and
\begin{eqnarray}\label{eepsilon0.eq2}
\|{\bf e}_{\pmb \epsilon, 1}\| & \hskip-0.08in \le & \hskip-0.08in  \|{\bf e}_{\pmb \epsilon, 1}-\delta_{k'} {\bf e}_{\pmb 0, 1}\|+\| {\bf e}_{\pmb 0, 1}\|\nonumber\\
& \hskip-0.08in \le & \hskip-0.08in  2 \|\Phi\| \Big(\sum_{k=k'L+l_0+2-N}^{k'L+l_0}|c(k)|^2\Big)^{1/2}
\end{eqnarray}
by \eqref{realother.thm.eq2} and the property that
\begin{equation}\|\Phi\| \|(\Phi_N)^{-1}\|\ge 1.\end{equation}
%  Set
%  $$\alpha_{\pmb \epsilon}(0)=\frac{\sum_{n=1}^N \phi(\gamma_n) \alpha_{\pmb \epsilon} (n)}
%{\sum_{n=1}^N |\phi(\gamma_n)|^2}\ \ {\rm and} \ \  \alpha(0)=\frac{\sum_{n=1}^N \phi(\gamma_n) \alpha(n)}
%{\sum_{n=1}^N |\phi(\gamma_n)|^2}.$$
Therefore
\begin{eqnarray} \label{c1/2.lem.pf.eq1}
\hskip-0.33in  h_1({\bf e}_{\pmb\epsilon, 1})
% \left |\begin{array}{cc} \displaystyle{\sum_{n=1}^N} |\phi(\gamma_n)|^2 & \hskip-0.05in  \displaystyle{\sum_{n=1}^N} \phi(\gamma_n) \alpha_{\pmb\epsilon}(n,1)\\
% \displaystyle{\sum_{n=1}^N} \phi(\gamma_n) \alpha_{\pmb\epsilon}(n,1) & \hskip-0.05in  \displaystyle{\sum_{n=1}^N}  |\alpha_{\pmb\epsilon}(n,1)|^2
%\end{array}\right|\\
 &\hskip-0.1in =& \hskip-0.08in
\sum_{n=1}^N (\alpha_{\pmb \epsilon}(n)- \phi(\gamma_n) \alpha_{\pmb \epsilon}(0))^{2}\nonumber
\\
\hskip-0.33in&\hskip-0.10in \ge& \hskip-0.08in\|(\Phi_N)^{-1}\|^{-2}   %\Big(\sum_{n=1}^{N}|\phi(\gamma_n)|^2\Big)
\hskip-0.08in  \sum_{k=k'L+l_0+2-N}^{k'L+l_0}|c^{1/2}_{\pmb\epsilon,k'}(k)|^2
%\sum_{l'=0}^{N-2}|c^{1/2}_{\pmb\epsilon,k'}(k'L+l_0-l')|^2
\nonumber\\
\hskip-0.33in &\hskip-0.10in \ge& \hskip-0.08in \frac{\sum_{k=k'L+l_0+2-N}^{k'L+l_0}|c(k)|^2}{4\|(\Phi_N)^{-1}\|^{2} } %\Big(\sum_{n=1}^{N}|\phi(\gamma_n)|^2\Big)
 > M_0,%\sum_{l'=0}^{N-2}|c(k'L+l_0-l')|^2,
\end{eqnarray}
where the second inequality follows from \eqref{realother.thm.eq2} and the inductive hypothesis \eqref{chalf.error.inductive}, and
the last inequality holds by \eqref{sf.def} and \eqref{M0.def}.
Hence
% \eqref{phaselessminimization.1n}
\begin{equation} \label{c1/2.lem.pf.eq2}
d_{\pmb\epsilon,k'}(k'L+l_0+1)=\frac{ h_2({\bf e}_{\pmb \epsilon, 1}, {\bf e}_{\pmb \epsilon, 2})}{2 h_1({\bf e}_{\pmb\epsilon, 1})}
\end{equation}
by  \eqref{phaselessminimization.1n}, where
${\bf e}_{\pmb \epsilon, 2}= (\eta_{\pmb \epsilon}(1), \ldots, \eta_{\pmb \epsilon}(N))$
with
$$\eta_{\pmb \epsilon}(n)= z_{\pmb\epsilon}(\gamma_n+k'L+l_0+1)-|\alpha_{\pmb\epsilon}(n)|^2, \ 1\le n\le N.$$
Set
${\bf e}_{\pmb 0, 2}= (\eta(1), \ldots, \eta(N))$,
where
$$\eta(n)  =   |f(\gamma_n+k'L+l_0+1)|^2-|\alpha(n)|^2, 1\le n\le N.$$
Then it follows from \eqref{cp.def} that
\begin{equation} \label{c1/2.lem.pf.eq2+}
c(k'L+l_0+1)=\frac{ h_2({\bf e}_{0, 1}, {\bf e}_{0, 2})}{2 h_1({\bf e}_{0, 1})}.
\end{equation}

To estimate  $|d_{\pmb\epsilon,k'}(k'L+l_0+1)- \delta_{k'} c(k'L+l_0+1)|$, we set
$$\beta_{\pmb \epsilon}(n)= \alpha_{\pmb \epsilon}(n)-\phi(\gamma_n)\alpha_{\pmb \epsilon}(0) \ \ {\rm and}\
\  \beta_{0}(n)= \alpha(n)-\phi(\gamma_n)\alpha(0)$$
for $1\le n\le N$. Then
\begin{eqnarray}\label{beta.temp}
\hskip-0.1in  &\hskip-0.1in   & \hskip-0.1in  \sum_{n=1}^N |\beta_{\pmb \epsilon}(n)-\delta_{k'} \beta_{0}(n)|^2\nonumber\\
\hskip-0.1in  & \hskip-0.1in \le &  \hskip-0.1in \sum_{n=1}^N \Big|\sum_{l=0}^{N-2} \Big(c_{\pmb \epsilon, k', l_0}^{1/2}(k'L+l_0-l)\nonumber\\
\hskip-0.1in  & \hskip-0.1in  & -\delta_{k'} c(k'L+l_0-l)\Big) \phi(\gamma_n+l'+1)\Big|^2\nonumber\\
\hskip-0.1in &\hskip-0.1in  \le &  \hskip-0.1in \|\Phi\|^2\Big(
\sum_{k=k'L+l_0+2-N}^{k'L+l_0}|c^{1/2}_{\pmb\epsilon,k', l_0}(k)-\delta_{k'}c(k)|^2 %\sum_{l=0}^{N-2} \big|c_{\pmb \epsilon, k', l_0}^{1/2}(k'L+l_0-l)-\delta_{k'} c(k'L+l_0-l)\Big|^2
\Big).
\end{eqnarray}
Hence
\begin{eqnarray} \label{c1/2.lem.pf.eq1+}
\hskip-0.08in & \hskip-0.08in& \hskip-0.08in| h_1({\bf e}_{\pmb\epsilon, 1})- h_1({\bf e}_{0, 1})|\nonumber\\
\hskip-0.08in & \hskip-0.08in=& \hskip-0.08in
\Big|\sum_{n=1}^N (\beta_{\pmb \epsilon}(n)-  \delta_{k'} \beta_0(n)) (\beta_{\pmb \epsilon}(n)+ \delta_{k'} \beta_0(n))\Big|\nonumber\\
%& \le &
%\|\Phi\|^2\Big(\sum_{n=1}^{N}|\phi(\gamma_n)|^2\Big)\|{\bf d}_{\pmb\epsilon}+\delta_{k'}{\bf d}_{\bf 0}\| \|{\bf d}_{\pmb\epsilon}-\delta_{k'}{\bf d}_{\bf 0}\|\nonumber\\
%&\le& \|\Phi_\Gamma\|^2\Big(\sum_{n=1}^{N}|\phi(\gamma_n)|^2\Big)(\|{\bf d}_{\pmb\epsilon}-\delta_{k'}{\bf d}_0\|+2\|{\bf d}_{\bf 0}\|)(\|{\bf d}_{\pmb\epsilon}-\delta_{k'}{\bf d}_{\bf 0}\|)\nonumber\\
\hskip-0.08in & \hskip-0.08in\le& \hskip-0.08in
\|\Phi\|^2 \Big(\sum_{k=k'L+l_0+2-N}^{k'L+l_0}|c^{1/2}_{\pmb\epsilon,k', l_0}(k)-\delta_{k'}c(k)|^2\Big)^{1/2}\nonumber\\
\hskip-0.08in & \hskip-0.08in & \hskip-0.08in\times
\Big(\sum_{k=k'L+l_0+2-N}^{k'L+l_0}|c^{1/2}_{\pmb\epsilon,k', l_0}(k)+\delta_{k'}c(k)|^2\Big)^{1/2}\nonumber\\
 \hskip-0.08in & \hskip-0.08in\le& \hskip-0.08in  3 \sqrt{8N |{\pmb \epsilon}|} \|\Phi\|^2
\|(\Phi_{N})^{-1}\| (C_{f, \phi})^{l_0/2}\nonumber\\
& & \times \Big(\sum_{k=k'L+l_0+2-N}^{k'L+l_0}|c(k)|^2\Big)^{1/2},
\end{eqnarray}
 where the equality is true by  the equality in  \eqref{denomiator.bound.pf.eq1},
 the  first inequality  holds by \eqref{beta.temp},  and
the last inequality  follows from \eqref{realother.thm.eq2} and  the inductive hypothesis \eqref{chalf.error.inductive}. %assumption.%\eqref{realother.thm.pf.eq3}.

Observe that
\begin{equation}\label{e02.estimate}
\|{\bf e}_{\pmb 0, 2}\| \le %& \hskip-0.08in  \le &  \hskip-0.08in
\sum_{n=1}^N |\eta(n)|%\nonumber\\
%%& \hskip-0.08in  \le &  \hskip-0.08in  \sum_{n=1}^N |f(\gamma_n+k'L+l_0+1)|^2+|\alpha(n)|^2\\
%%& \hskip-0.08in  &  \hskip-0.08in \
%%\times \Big(\sum_{n=1}^N  |f(\gamma_n+k'L+l_0+1)+\alpha(n)|^2\Big)^{1/2}\nonumber\\
%& \hskip-0.08in  \le &
\le 2\|\Phi\|^2\Big(\sum_{k=k'L+l_0+2-N}^{k'L+l_0+1}|c(k)|^2
\Big)
%2|c(k'L+l_0+1)| \Big(\sum_{1\le n\le N} |\phi(\gamma_n)|^2\Big)^{1/2}
% \|\Phi_\Gamma\| \nonumber\\
% \hskip-0.08in  & \hskip-0.08in & \times \Big(\sum_{l'=-1}^{N-2}|c(k'L+l_0-l')|^2\Big)^{1/2},
\end{equation}
and
\begin{eqnarray}\label{eepsilone0.estimate}
& \hskip-0.08in  &  \hskip-0.08in  \|{\bf e}_{\pmb \epsilon, 2}- {\bf e}_{\pmb 0, 2}\|\le   \sum_{n=1}^N |\eta_{\pmb \epsilon}(n)- \eta(n)|\nonumber\\
 & \hskip-0.08in \le   & \hskip-0.08in  N |{\pmb \epsilon}|+ \|\Phi\|^2%\nonumber\\
% & \hskip-0.08in   &  \hskip-0.08in \times
\Big(\sum_{k=k'L+l_0+2-N}^{k'L+l_0}|c^{1/2}_{\pmb\epsilon,k', l_0}(k)-\delta_{k'}c(k)|^2\Big)^{1/2}\nonumber\\
& \hskip-0.08in   &  \hskip-0.08in \times
\Big(\sum_{k=k'L+l_0+2-N}^{k'L+l_0}|c^{1/2}_{\pmb\epsilon,k', l_0}(k)+\delta_{k'}c(k)|^2\Big)^{1/2}
%\Big(\sum_{l'=0}^{N-2}|c^{1/2}_{\epsilon,k'}(k'L-l'+l_0)+\delta_{k'}c(k'L-l'+l_0)|^2\Big)^{1/2}
\nonumber\\
& \hskip-0.08in  \le  &  \hskip-0.08in N |{\pmb \epsilon}|+ 3\|\Phi\|^2 \|(\Phi_N)^{-1}\|
\sqrt{8N  |{\pmb\epsilon}}| (C_{f, \phi})^{l_0}\nonumber\\
& \hskip-0.08in   &  \hskip-0.08in \times
\Big(\sum_{k=k'L+l_0+2-N}^{k'L+l_0}|c(k)|^2\Big)^{1/2}\nonumber\\
& \hskip-0.08in  \le  &  \hskip-0.08in 4 \|\Phi\|^2 \|(\Phi_N)^{-1}\|
 \sqrt{8N  |{\pmb \epsilon}}| (C_{f, \phi})^{l_0}
\nonumber\\
& \hskip-0.08in   &  \hskip-0.08in \times
 \Big(\sum_{k=k'L+l_0+2-N}^{k'L+l_0}|c(k)|^2\Big)^{1/2},
\end{eqnarray}
where the third inequality follows from the inductive hypothesis \eqref{chalf.error.inductive} and the last one holds by \eqref{realother.thm.eq2}.
Therefore we get from  \eqref{eepsilon0.eq}, \eqref{eepsilon0.eq2},  \eqref{e02.estimate}, \eqref{eepsilone0.estimate} and Lemma \ref{lemma.D1}
that
\begin{eqnarray}\label{h2.difference}
&\hskip-0.08in &  \hskip-0.08in|h_2({\bf e}_{\pmb\epsilon,1}, {\bf e}_{\pmb\epsilon,2})-\delta_{k'} h_2({\bf  e}_{0,1}, {\bf e}_{0,2})|\nonumber\\
& \hskip-0.08in \le & \hskip-0.08in |h_2({\bf e}_{\pmb\epsilon,1}, {\bf e}_{\pmb\epsilon,2}-{\bf e}_{0,2})|+|h_2({\bf  e}_{\pmb\epsilon,1}- \delta_{k'}{\bf e}_{0,1}, {\bf e}_{0,2})|\nonumber\\
&\hskip-0.08in \le&  \hskip-0.08in
 \frac{\|{\bf e}_{\pmb\epsilon,1}\|\|{\bf e}_{\pmb\epsilon,2}-{\bf e}_{0,2}\| +\|{\bf  e}_{\pmb\epsilon,1}-\delta_{k'}{\bf e}_{0,1}\|\|{\bf e}_{0,2}\|}
{\min_{1\le n\le N} |\phi(\gamma_n)|}
\nonumber\\
&  \hskip-0.08in \le& \hskip-0.08in
\frac{  10 \|\Phi\|^3 \|(\Phi_N)^{-1}\|}
{\min_{1\le n\le 2N-1} |\phi(\gamma_n)|}\sqrt{8N  |{\pmb \epsilon}}| (C_{f, \phi})^{l_0}
\nonumber\\
& \hskip-0.08in   &  \hskip-0.08in \times
 \Big(\sum_{k=k'L+l_0+2-N}^{k'L+l_0+1}|c(k)|^2\Big).
% \\
%& & 20 C_{1,\phi}\|\Phi_\Gamma\|^3  \|(\Phi_N)^{-1}\|
%\sqrt{2N  |{\pmb \epsilon}|} e^{\alpha l_0/2}
%\times\sum_{l'=-1}^{N-2}|c(k'L+l_0-l')|^2\\
%&&+6\|\Phi_\Gamma\|^2 \sqrt{8N \|(\Phi_N)^{-1}\|^{2} |{\epsilon}}| e^{\alpha l_0}\times
%\Big(\sum_{l'=0}^{N-2}|c(k'L-l'+l_0)|^2\Big)^{1/2}\Big)\\
%&&+6\|\Phi_\Gamma\|^3 \sum_{l'=0}^{N-2}|c(k'L+l_0-l')|^2 \|(\Phi_N)^{-1}\|
% \sqrt{8N |{\epsilon}}| e^{\alpha l_0/2}\Big)
\end{eqnarray}
Hence
\begin{eqnarray}\label{dk'.estimate}
\hskip-0.08in &\hskip-0.08in &  \hskip-0.08in |d_{\pmb\epsilon,k'}(k'L+l_0+1)- \delta_{k'} c(k'L+l_0+1)|\nonumber\\
%& \le &
%\frac{|b_\epsilon-b_0|}{2 a_\epsilon} + \frac{|a_\epsilon-a_0| |c(k'L+1)|} {a_\epsilon }\\
\hskip-0.08in & \hskip-0.08in \le &  \hskip-0.08in \frac{|h_2({\bf  e}_{\pmb\epsilon,1}, {\bf e}_{\pmb\epsilon,2})-\delta_{k'} h_2({\bf e}_{0,1}, {\bf e}_{0,2})|}{2 h_1({\bf e}_{\pmb\epsilon, 1})}\nonumber\\
\hskip-0.08in& \hskip-0.08in & \hskip-0.08in + \frac{|h_1({\bf e}_{\pmb\epsilon, 1})-h_1({\bf e}_{0, 1})|}{ h_1({\bf e}_{\pmb\epsilon, 1})}|c(k'L+l_0+1)|\nonumber\\
%&\le& \frac{|g_2(\Gamma, e_{\pmb\epsilon,1}-e_{0,1}, e_{\pmb\epsilon,2})|+|g_2(\Gamma, e_{0,1}, e_{0,2}-e_{\pmb\epsilon,2}|)}{2 g_1(\Gamma, {\bf d}_{\pmb\epsilon})}\\
%&+& \frac{|g_1(\Gamma, {\bf d}_{\pmb\epsilon})-g_1(\Gamma, {\bf d}_{0})|}{ g_1(\Gamma, {\bf d}_{\pmb\epsilon})}|c(k'L+1)|\\
\hskip-0.08in &  \hskip-0.08in \le&  \hskip-0.08in % \frac{\sum_{n=1}^{N}|\phi(\gamma_n)|^2}
%{\min_{1\le n'\le N} |\phi(\gamma_{n'})|}
%C_{1,\phi}\Big(\| e_{\pmb\epsilon,1}-e_{0,1}\|(\|e_{0,2}\|+\|e_{\pmb\epsilon,2}-e_{0,2}\|)\\
%&&+\|e_{0,1}\|\|e_{\pmb\epsilon,2}-e_{0,2}\|\Big)\times\frac{\|(\Phi_{\Gamma})^{-1}\|^{2}}{2\Big(\sum_{n=1}^{N}|\phi(\gamma_n)|^2\Big)\|{\bf d}_{\pmb\epsilon}\|^2}\\
%&+&\|\Phi_\Gamma\|^2\Big(\sum_{n=1}^{N}|\phi(\gamma_n)|^2\Big)\|{\bf d}_{\pmb\epsilon}+{\bf d}_0\| \|{\bf d}_{\pmb\epsilon}-{\bf d}_0\|\\
%&&\times\frac{\|(\Phi_{\Gamma})^{-1}\|^{2}}{\Big(\sum_{n=1}^{N}|\phi(\gamma_n)|^2\Big)\|{\bf d}_{\pmb\epsilon}\|^2}
\Big(\frac{  20 \|\Phi\|}
{\min_{1\le n\le N} |\phi(\gamma_n)|}+12\Big)  \|\Phi\|^2 \|(\Phi_N)^{-1}\|^3
\nonumber \\
\hskip-0.08in &   \hskip-0.08in&  \hskip-0.08in\times  \sqrt{8N  |{\pmb \epsilon}|} (C_{f, \phi})^{l_0}
\frac{\sum_{k=k'L+l_0+2-N}^{k'L+l_0+1}|c(k)|^2}{\sum_{k=k'L+l_0+2-N}^{k'L+l_0}|c(k)|^2} \nonumber\\
\hskip-0.08in &  \hskip-0.08in \le &
\hskip-0.08in\frac{  2^5 \|\Phi\|^3\|(\Phi_N)^{-1}\|^3 M_f }
{\min_{1\le n\le N} |\phi(\gamma_n)|}
 \sqrt{8N  |{\pmb \epsilon}|} (C_{f, \phi})^{l_0},\quad
 %\nonumber\\
 %&  \hskip-0.08in \le &   \hskip-0.08in  \sqrt{8N M_{f, \phi}|{\pmb \epsilon}|} \|(\Phi_N)^{-1}\|  (M_{f, \phi}+1)^{l_0},
\end{eqnarray}
where the first inequality follows from  \eqref{c1/2.lem.pf.eq2} and \eqref{c1/2.lem.pf.eq2+},
and the second estimate holds by  \eqref{c1/2.lem.pf.eq1}, \eqref{c1/2.lem.pf.eq1+} and \eqref{h2.difference}.
By \eqref{dk'.estimate} and the inductive hypothesis  \eqref{chalf.error.inductive}, we obtain
\begin{eqnarray}
 & & \Big| \sum_{m=1}^{N-1} c^{1/2}_{\pmb\epsilon, k', l_0}(k'L+l_0-m)\phi(\gamma_n+m)\nonumber\\
& &  + d_{\pmb\epsilon,k'}(k'L+l) \phi(\gamma_n)- \delta_{k'}
f(\gamma_n+k'L+l_0)\Big|^2\nonumber\\
& \le & \Big(\sum_{m=0}^{N-1} |\phi(\gamma_n+m)|^2\Big)
\nonumber\\
 & & \times\Big(
\sum_{m=1}^{N-1} |c^{1/2}_{\pmb\epsilon, k', l_0}(k'L+l_0-m)- \delta_{k'} c(k'L+l_0-m)|^2\nonumber\\
& & +
|d_{\pmb\epsilon,k'}(k'L+l_0+1)- \delta_{k'} c(k'L+l_0+1)|^2\Big)\nonumber\\
& \le  &
 \frac{  2^{14}  N \|\Phi\|^8\|(\Phi_N)^{-1}\|^6 (M_f)^2 }
{\min_{1\le n\le N} |\phi(\gamma_n)|^2} (C_{f, \phi})^{2l_0}  |{\pmb \epsilon}|
  \end{eqnarray}
  for all $1\le n\le N$.
  This implies that,
for those $1\le n\le N$ satisfying
 \begin{equation}\label{condition1}
 |f(\gamma_n+k'L+l_0)|>  \frac{  2^{7}   \|\Phi\|^4\|(\Phi_N)^{-1}\|^3 M_f }
{\min_{1\le n\le N} |\phi(\gamma_n)|} (C_{f, \phi})^{l_0}  \sqrt{N |{\pmb \epsilon}|}, \end{equation}
the sign  $\delta_l(n)$ in \eqref{mewhalf.eq2}
  are the same as the one of $\delta_{k'}
f(\gamma_n+k'L+l_0)$, and hence
 $\tilde {z}_{\pmb \epsilon}(k'L+l+\gamma_n)$ in  \eqref{newhalf.eq3}
 satisfies
 \begin{equation}\label{lasttildez.eq1}
 |\tilde {z}_{\pmb \epsilon}(k'L+l+\gamma_n)-\delta_{k'}
f(\gamma_n+k'L+l_0)|\le \sqrt{|\pmb \epsilon|}.
 \end{equation}
% provided that
% \begin{equation}\label{condition1}
% |f(\gamma_n+k'L+l_0)|\ge  \frac{  2^{7}   \|\Phi\|^4\|(\Phi_N)^{-1}\|^3 M_f }
%{\min_{1\le n\le N} |\phi(\gamma_n)|} (C_{f, \phi})^{l_0}  \sqrt{N |{\pmb \epsilon}|}. \end{equation}
For those $1\le n\le N$ such that \eqref{condition1} fails, %we obtain
\begin{eqnarray}\label{lasttildez.eq2}
 & & |\tilde {z}_{\pmb \epsilon}(k'L+l+\gamma_n)-\delta_{k'}
f(\gamma_n+k'L+l_0)| \nonumber\\
&  \le &
|\tilde {z}_{\pmb \epsilon}(k'L+l+\gamma_n)|+|f(\gamma_n+k'L+l_0)|\nonumber \\
& \le &
\frac{ 2^{9}   \|\Phi\|^4\|(\Phi_N)^{-1}\|^3 M_f }
{\min_{1\le n\le N} |\phi(\gamma_n)|} (C_{f, \phi})^{l_0}  \sqrt{N |{\pmb \epsilon}|}. %\nonumber\\
\end{eqnarray}
Combining \eqref{newhalf.eq3}, \eqref{lasttildez.eq1} and \eqref{lasttildez.eq2}, we get
\begin{eqnarray}\label{chalf.error.inductivenew}
 \hskip-0.28in & \hskip-0.08in  & \hskip-0.08in  \sum_{k=k'L+l_0+2-N}^{k'L+l_0+1}|c^{1/2}_{\pmb\epsilon,k', l_0+1}(k)-\delta_{k'}c(k)|^2\nonumber\\
\hskip-0.18in  &  \hskip-0.08in\le   &  \hskip-0.08in
  \frac{ 2^{18}  N \|\Phi\|^8\|(\Phi_N)^{-1}\|^8 (M_f)^2 }
{\min_{1\le n\le N} |\phi(\gamma_n)|^2} (C_{f, \phi})^{2l_0}  N |{\pmb \epsilon}|\nonumber\\
\hskip-0.18in  &  \hskip-0.08in\le   &  \hskip-0.08in 8N\|(\Phi_N)^{-1}\|^2  (C_{f, \phi})^{2l_0+2} |\pmb\epsilon|.
% {\rm for \ all} \
% |l|\le \frac{L-1}{2}, %(L-1)/2,
\end{eqnarray}
Thus the inductive proof can proceed for the Case 2. This  completes the  proof.
  %\label{case1.b}
\end{proof}

To prove Theorem \ref{realother.thm}, we then justify that for any $k'\in \Z$,
 the vector ${\bf c}_{\pmb \epsilon, k'}^{1}$
in the  third step of the MEPS algorithm %of
% the local approximation ${\bf c}_{\pmb \epsilon, k'}^0, k'\in \Z$,
 is, up to a sign, not far away from ${\bf c}$ on $[k'L+1-N-(L-1)/2, k'L+(L-1)/2]$.

\begin{proposition}\label{c1.pr} Let ${\bf c}, {\pmb \epsilon}$ be as in Theorem \ref{realother.thm},
and let
vectors ${\bf c}_{\pmb \epsilon, k'}^{1}, k'\in \Z$, be as in  \eqref{c1.def}.
Then for any $k'\in\Z$
there exists $\delta_{k'}\in \{-1, 1\}$ such that
\begin{equation}\label{cone.error}
|c^{1}_{\pmb \epsilon,k'}(k)-\delta_{k'}c(k)|\le  N \|(\Phi_N)^{-1}\|  (C_{f, \phi})^{|k-k'L|+N-1}\sqrt{8|\pmb\epsilon|} %\ 0\le l\le (L-1)/2,   \
\end{equation}
for all $k'L+1-N-(L-1)/2\le k\le  k'L+(L-1)/2$. %, where  is  given in Lemma \ref{c0.lem}.
\end{proposition}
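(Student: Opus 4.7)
The proof follows the same blueprint as Proposition \ref{c1/2.pr}, with the right-extension step replaced by the left-extension step: the roles of $\gamma_n, h_1, h_2, \pmb\alpha_{\pmb\epsilon,l}, \pmb\eta_{\pmb\epsilon,l}$ are played by $\gamma_n^{\ast\ast}, h_1^\ast, h_2^\ast, \pmb\alpha^\ast_{\pmb\epsilon,l'}, \pmb\eta^\ast_{\pmb\epsilon,l'}$, and the forward sliding window $[k'L+l+1-N,k'L+l]$ is replaced by the backward sliding window $[k'L-l'-N+1,k'L-l']$. Concretely, the plan is to prove by induction on $l'$, for $0\le l'\le (L-1)/2$, the sliding-window estimate
\begin{equation*}
\sum_{k=k'L-l'-N+1}^{k'L-l'} \big|c^1_{\pmb\epsilon,k',l'}(k)-\delta_{k'}c(k)\big|^2 \le 8 N^2 \|(\Phi_N)^{-1}\|^2 (C_{f,\phi})^{2l'+2N-2} |\pmb\epsilon|
\end{equation*}
with the same sign $\delta_{k'}$ supplied by Proposition \ref{c1/2.pr}. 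The base $l'=0$ follows from Proposition \ref{c1/2.pr} by summing the pointwise estimate \eqref{c.half.error} over the $N$ indices of $[k'L-N+1,k'L]$.

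For the inductive step from $l_0'$ to $l_0'+1$, the new component at index $k'L-l_0'-N$ is produced by either \eqref{III.11} or \eqref{phaselessminimization.1n++}, according to whether $|h_1^\ast(\pmb\alpha^\ast_{\pmb\epsilon,l_0'+1})|$ is $\le M_0$ or $>M_0$. I would split into two cases, mirroring Cases 1 and 2 of the proof of Proposition \ref{c1/2.pr}. In the case $\sum_{k=k'L-l_0'-N+1}^{k'L-l_0'-1}|c(k)|^2=0$, the inductive hypothesis together with \eqref{realother.thm.eq2} forces $|h_1^\ast(\pmb\alpha^\ast_{\pmb\epsilon,l_0'+1})|\le M_0$ (as in \eqref{denomiator.bound.pf.eq1}), so \eqref{III.11} applies and the inequality $|\sqrt{x^2+y}-|x||\le\sqrt{|y|}$ produces the bound $\|(\Phi_N)^{-1}\|\sqrt{N|\pmb\epsilon|}$; the sign $\delta_{k'}$ is either consistent with the frozen coefficients or free to be flipped (as in Case~1b) because Theorem \ref{realX.thm}(ii) forces all earlier coefficients of $f$ to vanish. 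In the opposite case, $|h_1^\ast(\pmb\alpha^\ast_{\pmb\epsilon,l_0'+1})|>M_0$, and \eqref{phaselessminimization.1n++} is used; decomposing
\begin{equation*}
\big|\tilde d_{\pmb\epsilon,k'}-\delta_{k'}c\big| \le \frac{|h_2^\ast(\pmb\alpha^\ast_{\pmb\epsilon},\pmb\eta^\ast_{\pmb\epsilon})-\delta_{k'}h_2^\ast(\pmb\alpha^\ast_0,\pmb\eta^\ast_0)|}{2|h_1^\ast(\pmb\alpha^\ast_{\pmb\epsilon})|} + \frac{|h_1^\ast(\pmb\alpha^\ast_{\pmb\epsilon})-h_1^\ast(\pmb\alpha^\ast_0)|}{|h_1^\ast(\pmb\alpha^\ast_{\pmb\epsilon})|}|c|
\end{equation*}
and bounding each piece via Lemma \ref{lemma.D1} (specifically \eqref{nomiator.bound.eq2}) together with \eqref{mf.def} to absorb ratios of successive $N$-block energies, yields the same one-step recursion with multiplicative factor $C_{f,\phi}$ as in \eqref{dk'.estimate}. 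Solving the linear system \eqref{newone.eq3} then introduces an extra $\|(\Phi_N)^{-1}\|$ factor in the $\ell^2$ error, exactly as at the end of the proof of Proposition \ref{c1/2.pr}, and advances the induction.

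The pointwise bound \eqref{cone.error} then follows from the sliding-window estimate because, as in Proposition \ref{c1/2.pr}, once $l'\ge k'L-N-k$ the coordinate $c^1_{\pmb\epsilon,k',l'}(k)$ is frozen at the value $c^1_{\pmb\epsilon,k'}(k)$, and for $k\ge k'L+1-N$ the bound is inherited from Proposition \ref{c1/2.pr} (a factor of $\sqrt{N}$ being absorbed into the $N$ in front of \eqref{cone.error}, and using $C_{f,\phi}\ge 1$ to upgrade the exponent $k-k'L+N-1$ of \eqref{c.half.error} to $|k-k'L|+N-1$). The main obstacle is not conceptual but entirely bookkeeping: every substantive estimate is a term-by-term translation of the forward-extension estimates in Proposition \ref{c1/2.pr}, but one must carefully track the exponents of $C_{f,\phi}$, $\|\Phi\|$, $\|(\Phi_N)^{-1}\|$, and $M_f$ through the $(L-1)/2$ left-extension steps so that the aggregated bound matches the claimed geometric growth $(C_{f,\phi})^{|k-k'L|+N-1}$. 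This cumulative growth is precisely what dictates the exponential-in-$L$ factor $(C_{f,\phi})^{4N+L-5}$ appearing in the noise-level requirement \eqref{realother.thm.eq2}.
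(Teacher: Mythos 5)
Your proposal is correct and follows essentially the same route as the paper: the paper's proof likewise reduces \eqref{cone.error} to the sliding-window induction hypothesis $\sum_{k=k'L-l'+1-N}^{k'L-l'}|c^{1}_{\pmb \epsilon,k', l'}(k)-\delta_{k'}c(k)|^2\le 8N^2\|(\Phi_N)^{-1}\|^2(C_{f,\phi})^{2(l'+N-1)}|\pmb\epsilon|$, takes the base case $l'=0$ from Proposition \ref{c1/2.pr}, and dispatches the inductive step as a mirror image of the argument for \eqref{c.half.newerror} with $\gamma_n^{\ast\ast}$, $h_1^\ast$, $h_2^\ast$ in place of $\gamma_n$, $h_1$, $h_2$. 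Your write-up is in fact slightly more explicit than the paper's (which simply cites the similarity), and your bookkeeping of the exponents and the absorption of the $\sqrt{N}$ factor is consistent with the stated bound.
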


\begin{proof}
 Let
 ${\bf c}_{\pmb \epsilon, k', l'}^{1/2}, k'\in \Z, 0\le l'\le (L-1)/2$, be as in \eqref{III.11}--\eqref{newone.eq3}.
 Observe that
  $$c^{1}_{\pmb \epsilon,k'}(k)= c^{1}_{\pmb \epsilon,k', l'}(k)$$ for all $k\in
  [k'L+1-N-l', k'L+(L-1)/2]$ and $l'\ge \min(k'L-k, (L-1)/2)$.
 Then it suffices to prove
 \begin{eqnarray}\label{c.one.newerror}
 \hskip-0.28in & \hskip-0.08in  & \hskip-0.08in  \sum_{k=k'L-l'+1-N}^{k'L-l'}|c^{1}_{\pmb \epsilon,k', l'}(k)-\delta_{k'}c(k)|^2\nonumber\\
\hskip-0.18in  &  \hskip-0.08in  &  \hskip0.08in \qquad \le   8N^2\|(\Phi_N)^{-1}\|^2  (C_{f, \phi})^{2(l'+N-1)} |\pmb\epsilon|  \
% {\rm for \ all} \
% |l|\le \frac{L-1}{2}, %(L-1)/2,
\end{eqnarray}
 by induction on $0\le l'\le (L-1)/2$.
 The conclusion \eqref{c.one.newerror} for $l'=0$ holds by Proposition  \ref{c1/2.pr}.
Similar to argument used to prove \eqref{c.half.newerror}, we can show that
\eqref{c.one.newerror} holds for any $0<l'\le (L-1)/2$ by induction.
\end{proof}

To prove Theorem \ref{realother.thm}, we finally  adjust phases of
 ${\bf c}_{\pmb \epsilon, k'}^{1}, k'\in \Z$,
in the fourth  step of the MEPS algorithm. %of
% the local approximation ${\bf c}_{\pmb \epsilon, k'}^0, k'\in \Z$,
% are, up to a sign, not far away from ${\bf c}$ on $[k'L+1-N-(L-1)/2, k'L+(L-1)/2]$.

\begin{proposition}\label{c2.pr}
Let  ${\bf c}_{\pmb \epsilon, k'}^{1}$ and $\delta_{k'}\in \{-1, 1\}, k'\in \Z$, be as in \eqref{c1/2.def} and  Proposition \ref{c1.pr} respectively.
If \eqref{realother.thm.eq2} holds,  there exists $\delta\in \{-1, -1\}$ such that
\begin{equation}\label{c2.lem.eq222}
{\bf c}_{\pmb \epsilon, k'}^2= \delta \delta_{k'} {\bf c}_{\pmb \epsilon, k'}^1
\end{equation}
%\begin{equation}\label{c2.lem.eq222}
%\delta_{k'}\delta_{k'+1} \langle {\bf c}_{\pmb \epsilon, k'}^1,  {\bf c}_{\pmb \epsilon, k'+1}^1\rangle\ge \frac{S_f}{4}>0\end{equation}
for all $k'\in \Z$ with $\sum_{k=2-N}^{0} |c(k-k'L-(L-1)/2)|^2\ne 0$.
\end{proposition}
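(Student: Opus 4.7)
The plan is to pin down the sign of each pairwise inner product $\langle {\bf c}_{\pmb\epsilon, k'}^1, {\bf c}_{\pmb\epsilon, k'+1}^1\rangle$ that governs the choice of $\tilde\delta_{k'}$ in \eqref{phaselessminimization.2}--\eqref{phaselessminimization.3}, and then to propagate the resulting sign constraint across $k'$ to extract one global $\delta$. From the support structure recorded in step (iii), ${\bf c}_{\pmb\epsilon, k'}^1$ and ${\bf c}_{\pmb\epsilon, k'+1}^1$ overlap on exactly the window
\[
I_{k'} := [k'L + (L-1)/2 - N + 2,\ k'L + (L-1)/2]
\]
of $N-1$ consecutive integers sitting at the right end of the support of ${\bf c}_{\pmb\epsilon, k'}^1$, so
\[
\langle {\bf c}_{\pmb\epsilon, k'}^1, {\bf c}_{\pmb\epsilon, k'+1}^1\rangle = \sum_{k \in I_{k'}} c_{\pmb\epsilon, k'}^1(k)\, c_{\pmb\epsilon, k'+1}^1(k).
\]
The hypothesis in the proposition amounts, after reindexing, to $\Sigma_{k'} := \sum_{k \in I_{k'}} |c(k)|^2 > 0$, and by the definition \eqref{sf.def} of $S_f$ this quantity is bounded below by $S_f$.

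Next, Proposition \ref{c1.pr} supplies the approximation estimates $|c_{\pmb\epsilon, k'}^1(k) - \delta_{k'} c(k)| \le B$ and $|c_{\pmb\epsilon, k'+1}^1(k) - \delta_{k'+1} c(k)| \le B$ for every $k \in I_{k'}$, where $B := N\|(\Phi_N)^{-1}\|(C_{f,\phi})^{N-1+(L-1)/2}\sqrt{8|\pmb\epsilon|}$. Decomposing each product via $xy - XY = (x-X)(y-Y) + X(y-Y) + Y(x-X)$, summing over $I_{k'}$, and applying Cauchy--Schwarz to the cross terms, I would obtain
\[
\langle {\bf c}_{\pmb\epsilon, k'}^1, {\bf c}_{\pmb\epsilon, k'+1}^1\rangle = \delta_{k'}\delta_{k'+1}\, \Sigma_{k'} + E_{k'},
\]
with $|E_{k'}| \le (N-1) B^2 + 2B\sqrt{N-1}\,\sqrt{\Sigma_{k'}}$. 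Plugging the noise bound \eqref{realother.thm.eq2} into $B$, a direct check shows $|E_{k'}| < \Sigma_{k'}$; hence $\langle {\bf c}_{\pmb\epsilon, k'}^1, {\bf c}_{\pmb\epsilon, k'+1}^1\rangle$ carries the sign $\delta_{k'}\delta_{k'+1}$, and the constraint \eqref{phaselessminimization.3} forces $\tilde\delta_{k'}\tilde\delta_{k'+1} = \delta_{k'}\delta_{k'+1}$. Equivalently, $\delta_{k'}\tilde\delta_{k'}$ is invariant under $k' \mapsto k'+1$ within each block of consecutive good indices.

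Finally, defining $\delta \in \{-1,+1\}$ as this common invariant value immediately yields ${\bf c}_{\pmb\epsilon, k'}^2 = \tilde\delta_{k'}{\bf c}_{\pmb\epsilon, k'}^1 = \delta\,\delta_{k'}{\bf c}_{\pmb\epsilon, k'}^1$, which is \eqref{c2.lem.eq222}. Consistency of $\delta$ across \emph{all} good $k'$ is ensured by Theorem \ref{realX.thm}: nonseparability of $f$ forbids $N-1$ consecutive zero coefficients inside $[K_-(f), K_+(f)]$, so the set of good indices forms a single contiguous block within the signal's active range rather than a disconnected union. The main obstacle I anticipate is the constant-chasing in the middle paragraph: one must verify that the factor $(C_{f,\phi})^{2(N-1)+L-1}$ hidden inside $B^2$ is dominated by the exponent $4N+L-5$ in the denominator of \eqref{realother.thm.eq2}, so that $|E_{k'}| < \Sigma_{k'}$ holds uniformly in $k'$. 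This exponential-in-$L$ cushion is precisely the reason the noise level has to decay exponentially in $L$.
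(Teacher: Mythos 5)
Your proposal is correct and follows essentially the same route as the paper: reduce the inner product $\langle {\bf c}_{\pmb\epsilon,k'}^1,{\bf c}_{\pmb\epsilon,k'+1}^1\rangle$ to the sum over the $N-1$ overlapping indices, use Proposition \ref{c1.pr} together with the noise bound \eqref{realother.thm.eq2} to show this inner product equals $\delta_{k'}\delta_{k'+1}\Sigma_{k'}$ up to an error smaller than $\Sigma_{k'}$ (the paper bounds it by $\tfrac34\Sigma_{k'}$), and then invoke Theorem \ref{realX.thm} to propagate the common sign across the contiguous block of good indices. Your version merely spells out the sign-propagation step that the paper leaves implicit.
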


\begin{proof}
By \eqref{c.half.leastsquare}, \eqref{phaselessminimization.1n},
\eqref{III.11} and \eqref{phaselessminimization.1n++},
we have
$$\langle c_{\pmb \epsilon, k'}^1,  c_{\pmb \epsilon, k'+1}^1\rangle=
\sum_{k=k'L+(L-1)/2-N+2}^{k'L+(L-1)/2} c_{\pmb \epsilon, k'}^1(k) c_{\pmb \epsilon, k'+1}^1(k).$$
Therefore   for any  $k'\in \Z$,
\begin{eqnarray*} \label{realother.thm.pf.eq4}
&\hskip-0.08in  &  \hskip-0.08in \Big|\langle  \delta_{k'} c_{\pmb \epsilon, k'}^1, \delta_{k'+1}  c_{\pmb \epsilon, k'+1}^1\rangle - \sum_{k=k'L+(L-1)/2-N+2}^{k'L+(L-1)/2} |c(k)|^2\Big|
\nonumber\\
%&=& \Big|\sum_{k=k'L+(L-1)/2-N+2}^{k'L+(L-1)/2} \big(c_{\epsilon, k^\prime}^{1}(k)c^1_{\epsilon, k^\prime+1}(k)-\delta_{k'}\delta_{k'+1}  |c(k)|^2\big)\Big|\nonumber\\
& \hskip-0.08in \le &  \hskip-0.08in \sum_{k=k'L+(L-1)/2-N+2}^{k'L+(L-1)/2}
|\delta_{k'}  c_{\pmb \epsilon, k^\prime}^{1}(k)-c(k)| |c(k)| \nonumber\\
& \hskip-0.08in & \hskip-0.08in
+  \sum_{k=k'L+(L-1)/2-N+2}^{k'L+(L-1)/2} %\sum^{(k'+1)L-(L-1)/2-1}_{k=(k'+1)L-(L-1)/2+1-N}
|\delta_{k'+1}c^1_{\pmb \epsilon, k^\prime+1}(k)-c(k)| |c_{\pmb \epsilon, k'}^1(k)|\nonumber\\
% &&+|\sum_{k=k'-N+2}^{k'} (c_{\epsilon, k^\prime}^{0}(k)-\delta_{k'} c(k))(c^0_{\epsilon, k^\prime+1}(k)-\delta_{k'+1}c(k))|\nonumber\\
& \hskip-0.08in \le &  \hskip-0.08in  3 \sqrt{ 8N^3 |\pmb\epsilon |} \|(\Phi_N)^{-1}\|  (C_{f, \phi})^{2(N-1)+(L-1)/2} \nonumber\\
& \hskip-0.08in & \hskip-0.08in \times \Big(\sum_{k=k'L+(L-1)/2-N+2}^{k'L+(L-1)/2} |c(k)|^2\Big)^{1/2}\nonumber\\
&  \hskip-0.08in \le  &  \hskip-0.08in  \frac{3}{4} \Big(\sum_{k=k'L+(L-1)/2-N+2}^{k'L+(L-1)/2} |c(k)|^2\Big),
%2 \|(\Phi_N)^{-1}\|
% \sqrt{8N |{\pmb \epsilon}|} \Big(\sum_{k=k'-N+2}^{k'} |c(k)|^2\Big)^{1/2}+8N \|(\Phi_N)^{-1}\|^2 |{\pmb \epsilon}|
\end{eqnarray*}
where the second estimate follows from Proposition \ref{c1.pr}, and the last inequality holds by the assumption \eqref{realother.thm.eq2}
on the noise level $|{\pmb \epsilon}|$.
Therefore the vectors $\delta_{k'} c_{\pmb \epsilon, k'}^1$ and $\delta_{k'}c_{\pmb \epsilon, k'+1}^1$  have positive inner product.
This together Theorem \ref{realX.thm} proves \eqref{c2.lem.eq222}.
 \end{proof}

We finish this section with the proof of Theorem \ref{realother.thm}.

\begin{proof}[Proof of Theorem \ref{realother.thm}] Let
$$k'_+= \lfloor (K_+(f)+(L-1)/2)/L\rfloor$$
and
$$k'_-= \lfloor (K_-(f)-(L-1)/2)/L\rfloor,
$$
 and   set $ k'= \lfloor(2k+L-1)/(2L)\rfloor, k\in \Z$.
  Then for $k\in [k'_- L-(L-1)/2, k'_+L+(L-1)/2]$,
  we obtain from  \eqref{phaselessminimization.2}, \eqref{phaselessminimization.3}, \eqref{phaselessminimization.4}, and  Propositions \ref{c1.pr} and \ref{c2.pr} that
\begin{eqnarray}\label{realother.thm.pf.eq1}
& \hskip-0.08in & \hskip-0.08in |c_{\pmb \epsilon}(k)-\delta c(k)|  =   |\delta \delta_{k'} c^1_{\pmb \epsilon, k'}(k)-\delta c(k)|
%=|c^1_{\pmb \epsilon, k'}(k)-\delta_{k'} c(k)|^2
\nonumber\\
& \hskip-0.08in \le & \hskip-0.08in  N\|(\Phi_N)^{-1}\| (C_{f, \phi})^{N-1+(L-1)/2} \sqrt{8|\pmb\epsilon|},
\end{eqnarray}
 where $\delta\in \{-1, 1\}$ is given in Proposition \ref{c2.pr}.
Observe that  $c(k)=0$ for all $k\not\in [k'_- L-(L-1)/2, k'_+L+(L-1)/2]$.
Thus for $k\not\in [k'_- L-(L-1)/2, k'_+L+(L-1)/2]$,
\begin{eqnarray}\label{realother.thm.pf.eq2}
& \hskip-0.08in & \hskip-0.08in |c_{\pmb \epsilon}(k)-\delta c(k)| =|c_{\pmb \epsilon}(k)| =   |c^1_{\pmb \epsilon, k'}(k)-\delta_{k'} c(k)| \nonumber\\
& \hskip-0.08in \le & \hskip-0.08in  N \|(\Phi_N)^{-1}\| (C_{f, \phi})^{N-1+(L-1)/2} \sqrt{8|\pmb\epsilon|}.
%8N^2 \|(\Phi_N)^{-1}\|^2  (C_{f, \phi})^{(L-1)/2+N-1} |\pmb\epsilon|.
\end{eqnarray}
by Proposition  \ref{c1.pr}.
Combining \eqref{realother.thm.pf.eq1} and \eqref{realother.thm.pf.eq2}
completes the proof.
\end{proof}
%can we prove a conclusion
%$$ \sum_n (\sqrt{a_n^2+\epsilon_n}-a_n)^2\le \sum_n \epsilon_n^2/ \sum_n a_n^2.$$
%( I do not like square root estimate.)

\end{appendices}

% use section* for acknowledgment
\ifCLASSOPTIONcompsoc
  % The Computer Society usually uses the plural form
  \section*{Acknowledgments}
\else
  % regular IEEE prefers the singular form
  \section*{Acknowledgment}
\fi

The authors would like to thank Professor
Zhiqiang Xu for his comments and suggestions for the improvement of this manuscript.  %(Chinese Academy of Science)

 The project is partially supported by National Science Foundation (DMS-1412413).

% that's all folks

\begin{thebibliography}{99}



    \bibitem{F78}
J. R. Fienup, Reconstruction of an object from the modulus of its Fourier transform, {\em Opt. Lett.}, {\bf 3}(1978), 27--29.


\bibitem{hayes80}
M. H. Hayes, J. S. Lim, and A. V. Oppenheim, Signal reconstruction from phase or magnitude, {\em IEEE Trans. Acoust., Speech, Signal Process.}, {\bf 28}(1980),  672--680.

\bibitem{Fienup82}  J. R. Fienup,
Phase retrieval algorithms: a comparison,
{\em Applied Optics}, {\bf  21}(1982),  2758--2769.




\bibitem{millane90} R. P. Millane, Phase retrieval in crystallography and optics,
{\em  J. Opt. Soc. Am. A}, {\bf 7}(1990), 394--411.

\bibitem{rabiner93}
L. Rabiner and B.-H. Juang, {\em Fundamentals of Speech Recognition}, Prentice Hall Inc., Englewood
Cliffs, 1993.

    \bibitem{KST95}
M. Klibanov, P. Sacks and A. Tikhonravov, The phase retrieval problem, {\em Inverse problems}, {\bf 11}(1995), 1--28.



    \bibitem{H01}
N. E. Hurt,  {\em Phase Retrieval and Zero Crossings: Mathematical Methods in Image Reconstruction}, Springer, 2001.

\bibitem{schechtman15}   Y.  Shechtman,
Y. C. Eldar, O. Cohen, H. N. Chapman, J.  Miao and M. Segev,
Phase retrieval with application to optical imaging: a contemporary overview,
{\em  IEEE Signal Proc. Mag.}, {\bf 32}(2015),     87--109.


\bibitem{jaganathany15} K. Jaganathan, Y. C. Eldar and  B. Hassibi,
Phase retrieval: an overview of recent developments, arXiv 1510.07713


        \bibitem{BCE06}
R. Balan, P. G.  Casazza and D. Edidin, On signal reconstruction without phase, {\em Appl. Comp. Harm. Anal.}, {\bf 20}(2006), 345--356.

\bibitem{BBCE09} R. Balan, B. G. Bodmann, P. G. Casazza and D. Edidin, Painless reconstruction from magnitudes of frame coefficents, {\em J. Fourier Anal. Appl.}, {\bf 15}(2009), 488--501.

\bibitem{CSV12}
E. Candes, T. Strohmer, and V. Voroninski, Phaselift:
exact and stable signal recovery from magnitude measurements
via convex programming,  {\em Comm. Pure  Appl. Math.}, {\bf 66}(2013), 1241--1274.

\bibitem{candes13} E. J. Candes, Y. C. Eldar, T. Strohmer and V. Voroninski,
Phase retrieval via matrix completion,
{\em  SIAM J. Imaging Sci.}, {\bf 6}(2013), 199--225.

\bibitem{W14}
Y. Wang and Z. Xu, Phase retrieval for sparse signals, {\em Appl. Comput. Harmon. Anal.},  {\bf 37}(2014), 531--544.

  \bibitem{Bandeira14}  A. S.  Bandeira, J.  Cahill,  D. G.  Mixon,  and A. A.   Nelson,  Saving phase: injectivity and stability for phase retrieval, {\em  Appl. Comput. Harmon. Anal.}, {\bf  37}(2014),  106--125.


      \bibitem{candes15}  E. Candes, X. Li, and M. Soltanolkotabi, Phase retrieval via Wirtinger
flow: theory and algorithms, {\em IEEE Trans. Inf. Theory},
{\bf 61}(2015),  1985--2007.


\bibitem{gerchberg72} R. W. Gerchberg and W. O. Saxton, A practical algorithm for the
determination of phase from image and diffraction plane pictures, {\em Optik},
{\bf 35}(1972),   237--246.

\bibitem{netrapalli15}
P. Netrapalli, P.  Jain, and S.   Sanghavi,  Phase retrieval using alternating minimization,
{\em IEEE Trans.  Signal Proc.}, {\bf 63}(2015),  4814--4826.

\bibitem{T11}
G. Thakur, Reconstruction of bandlimited functions from unsigned samples, {\em J. Fourier Anal.  Appl.}, {\bf 17}(2011), 720--732.

\bibitem{YPB13}
F. Yang, V. Pohl and H. Boche,  Phase retrieval via structured modulations in Paley-Wiener spaces, arXiv:1302.4258.



\bibitem{mallat14} S.  Mallat and I.  Waldspurger,  Phase retrieval for the Cauchy wavelet transform,
{\em J. Fourier Anal. Appl.}, {\bf 21}(2014), 1--59.

\bibitem{volker14} V. Pohl,  F. Yang and H. Boche, Phaseless signal recovery in infinite dimensional
spaces using structured modulations, {\em  J. Fourier Anal.  Appl.}, {\bf  20}(2014), 1212--1233.




\bibitem{pyb14}
V. Pohl, F. Yang and H. Boche,
Phase retrieval from low-rate samples, {\em  Sampl. Theory
Signal Image Process.} {\bf 13}(2014), 71--99.

     \bibitem{shenoy16} B. A. Shenoy, S. Mulleti and C. S. Seelamantula,
     Exact phase retrieval in principal
shift-invariant spaces, {\em IEEE Trans. Signal Proc.}, {\bf 64}(2016),  406--416.


\bibitem{daubechies16} R. Alaifari, I. Daubechies, P. Grohs and G. Thakur,  Reconstructing real-valued functions from unsigned coefficients with respect to wavelet and other frames,  arXiv:1601.07579

    \bibitem{cahill15} J. Cahill, P. G. Casazza and I. Daubechies,
Phase retrieval in infinite-dimensional Hilbert spaces, arXiv:1601.06411



\bibitem{jia92}  R.-Q. Jia and C. A. Micchelli, On linear independence of integer translates of
a finite number of functions, {\em Proc. Edinburgh Math. Soc.}, {\bf 36}(1992), 69--75.



      \bibitem{DDR94}
C. de Boor, R. A. DeVore, and A. Ron,
The structure of finitely generated shift-invariant
spaces in $L^2(\R^d)$, {\em J. Funct. Anal.},
{\bf 119}(1994),  37--78.


\bibitem{Bow00}
M. Bownik,
The structure of shift-invariant subspaces of
$L^2(\R^d)$, {\em J. Funct. Anal.}, {\bf 177}(2000), 282--309.



\bibitem{AG01}
A. Aldroubi and K. Gr\"ochenig,
Non-uniform sampling in shift-invariant space, {\em SIAM
Rev.}, {\bf 43}(2001),  585--620.

\bibitem{AST05}
A. Aldroubi, Q. Sun and W.-S. Tang,
Convolution, average sampling, and Calderon
resolution of the identity of shift-invariant spaces, {\em J. Fourier Anal. Appl.},
{\bf 11}(2005), 215--244.


\bibitem{landau61} H. J. Landau and W. L. Miranker, The recovery of distorted band-limited signals,
{\em J. Math. Anal. Appl.}, {\bf 2}(1961), 97--104.


\bibitem{dvorkind08}
T. G. Dvorkind, Y. C. Eldar and E. Matusiak, Nonlinear and nonideal sampling: theory
and methods,  {\em IEEE Trans. Signal Process.}, {\bf 56}(2008), 5874--5890.


\bibitem{sun14} Q. Sun, Localized nonlinear functional equations and two  sampling problems in signal processing,
{\em Adv. Comput. Math.}, {\bf 40}(2014), 415--458.

%\bibitem{UB00}
%M. Unser and T. Blu,
%Fractional splines and wavelets, {\em SIAM Rev.},
%{\bf 42}(2000), 43--67.


\bibitem{EM12}
Y. C. Eldar, and S. Mendelson, Phase retrieval: stability and recovery guarantees, {\em Appl. Comput. Harmon. Anal.}, {\bf 36}(2014), 473-494.



\bibitem{BZ14} R. Balan and D. Zhou,
Phase retrieval using Lipschitz continuous maps,
 arXiv:1403.2301

\bibitem{BW15} R. Balan and Y. Wang, Invertibity and robustness of phaseless reconstruction, {\em Appl. Comput. Harmon. Anal.} {\bf 38}(2015), 469-488.

\bibitem{suntang15} Q. Sun and W.-S. Tang, Nonlinear frames and sparse reconstructions in Banach spaces,
arXiv 1506.03549.

 \bibitem{donoho03} D. L.  Donoho and M. Elad,
 Optimally sparse representation in general (nonorthogonal) dictionaries via $\ell_1$ minimization, {\em  Proc. Nat. Acad. Sci.},
 {\bf  100}(2003),  2197--2202.

  \bibitem{ACM} B. Alexeev, J. Cahill and D. G. Mixon, Full spark frames, {\em J. Fourier Anal. Appl.}, {\bf 18}(2012), 1167--1194.


     \bibitem{goodman92}
T. N. T. Goodman and C. A. Micchelli, On refinement equations determined by Polya frequency sequences,
{\em SIAM J. Math. Anal.}, {\bf 23}(1992), 766--784.

\bibitem{goodman04}
T. N. T. Goodman and Q. Sun, Total positivity and refinable functions with general dilation, {\em
Appl.  Comput. Harmon. Anal.}, {\bf 16}(2004), 69--89.

%\bibitem{volker14}
%Balan, Radu. "Stability of phase retrievable frames." SPIE Optical Engineering+ Applications.
%International Society for Optics and Photonics, 2013.







\bibitem{ron89} A.  Ron,  A  necessary  and  sufficient  condition  for  the  linear  independence  of
the integer translates of a compactly supported distribution,
{\em Constr. Approx.}, {\bf 5}(1989), 297--308.


     \bibitem{S10}
Q. Sun, Local reconstruction for sampling in shift-invariant spaces, {\em Adv. Comput. Math.}, {\bf 32}(2010), 335-352.

\bibitem{qiu}
T. Qiu, P. Babu and D. P. Palomar, PRIME: phase retrieval via majorization-minimization, arXiv:1511.01669

  \bibitem{bertsekasbook1989} D. P. Bertsekas and J. Tsitsiklis,
{\em  Parallel and Distributed Computation: Numerical Methods},
Prentice-Hall, Englewood Cliffs, NJ, 1989.

\bibitem{chengsds} C. Cheng, Y. Jiang and Q. Sun,  Spatially distributed sampling and reconstruction,  arXiv:1511.08541
%





\end{thebibliography}
\end{document}